\documentclass[a4paper]{article}[11pts]

\usepackage[english]{babel}
\usepackage[utf8x]{inputenc}
\usepackage[T1]{fontenc}

\normalsize

\usepackage[a4paper,top=1in,bottom=1in,left=1in,right=1in,marginparwidth=1in]{geometry}
\usepackage{setspace}

\usepackage{appendix}
\usepackage{color}
\definecolor{strcolor}{rgb}{0.6, 0.2, 0.6}
\definecolor{commentcolor}{rgb}{0.3125, 0.5, 0.3125}
\definecolor{keycol}{rgb}{0, 0, 1}

\usepackage{amssymb}
\usepackage{amsmath}
\usepackage{amsthm}
\usepackage{amsfonts, mathtools}
\usepackage{bbm}
\usepackage{tablefootnote}
\usepackage{booktabs}
\usepackage{multirow}
\usepackage{enumitem}
\newtheorem{infthm}{Informal Theorem}

\DeclareMathOperator*{\argmin}{arg\,min}
\newtheorem{proposition}{Proposition}
\newtheorem{lemma}{Lemma}

\newtheorem{assumption}{Assumption}
\newtheorem{theorem}{Theorem}

\newtheorem{definition}{Definition}

\usepackage{listings}
\usepackage{url}
\usepackage{hyperref}
\hypersetup{
    hidelinks,
    colorlinks=true,
    linkcolor=red,
    citecolor=magenta
} 

\newcommand {\bea}{\begin{eqnarray}}
	\newcommand {\eea}{\end{eqnarray}}

\newtheorem{remark}{Remark}

\def\blot{\quad \mbox{$\vcenter{ \vbox{ \hrule height.4pt
				\hbox{\vrule width.4pt height.9ex \kern.9ex \vrule width.4pt}
				\hrule height.4pt}}$}}

\usepackage{natbib}
 \bibpunct[, ]{(}{)}{,}{a}{}{,}%


\allowdisplaybreaks
\usepackage{diagbox}

\usepackage{xspace}

\newcommand{\EE}{\mathbb{E}}
\newcommand{\RR}{\mathbb{R}}
\newcommand{\PP}{\mathbb{P}}

\newcommand{\ind}{\mathbbm{1}}
\newcommand{\Var}{\operatorname{Var}}
\newcommand{\KL}{\operatorname{KL}}

\usepackage{subfig}
\usepackage{adjustbox}
\usepackage{algorithm}
\usepackage{algpseudocode}
\usepackage{tikz}
\usepackage{pgfplots}
\pgfplotsset{compat=1.5}
\usepackage{cleveref}

\begin{document}

	\title{Asymptotically Optimal Sequential Testing \\ with Heterogeneous LLMs}

	\author{Guokai Li \\
    \small guokai.li@queensu.ca \\[1.5ex]
    Jiaxin (Alys) Liang \\
    \small alys.liang@mcgill.ca \\[1.5ex]
    Mo Liu \\
    \small mo\_liu@unc.edu \\[1.5ex]
    Yanzhe (Murray) Lei \\
    \small yl64@queensu.ca \\[1.5ex]
    Stefanus Jasin, Fenghua Yang, Preet Baxi \\
    \small sjasin, yfenghua, preetb@umich.edu} %

    \date{}
	
     \maketitle

     \onehalfspacing

    \begin{abstract}

    We study a Bayesian binary sequential hypothesis testing problem with multiple large language models (LLMs). Each LLM $j$ has per-query cost $c_j>0$, random waiting time with mean $\mu_j>0$ and sub-Gaussian tails, and \emph{asymmetric} accuracies: the probability of returning the correct label depends on the true hypothesis $\theta\in\{A,B\}$ and needs not be the same under $A$ and $B$. This asymmetry induces two distinct information rates $(I_{j,A}, I_{j,B})$ per LLM, one under each hypothesis. The decision-maker chooses LLMs sequentially, observes their noisy binary answers, and stops when the posterior probability of one hypothesis exceeds $1-\alpha$. The objective is to minimize the sum of expected query cost and expected waiting cost, $\EE[C_\pi] + \EE[g(W_\pi)]$, where $C_\pi$ is the total query cost, $W_\pi$ is the total waiting time and $g$ is a polynomial function (e.g., $g(x)=x^\rho$ with $\rho\ge 1$).
    We prove that as the error tolerance $\alpha\to0$, the optimal policy is asymptotically equivalent to one that uses at most two LLMs. In this case, a single-LLM policy is \emph{not} generically optimal: optimality now requires exploiting a two-dimensional tradeoff between information under $A$ and information under $B$. Any admissible policy induces an expected information-allocation vector in $\RR_+^2$, and we show that the optimal allocation lies at an extreme point of the associated convex set when $\alpha$ is relatively small, and hence uses at most two LLMs. We construct belief-dependent policies that first mix between two LLMs when the posterior is ambiguous, and then switch to a single ``specialist'' LLM when the posterior is sufficiently close to one of the hypotheses. These policies match the universal lower bound up to a $(1+o(1))$ factor as $\alpha\rightarrow 0$. 
    
    \end{abstract}

    \section{Introduction}
\label{sec:intro}

A defining feature of modern AI systems is their reliance on \emph{test-time compute}: allocating more computational resources at inference to improve the quality of the output. While classical machine learning invested almost all effort in training, the frontier of large language model (LLM) deployment has shifted decisively toward inference-time strategies. Reasoning models such as OpenAI's o1 routinely generate dozens of candidate solutions per query, then aggregate them via majority voting or verification \citep{snell2024scaling, wu2024inference}. Agentic workflows route tasks through cascades of specialized models and tools, each invocation adding cost and latency \citep{wang2024mixture}. Indeed, this multi-source orchestration is becoming the default architecture for frontier systems: OpenAI's GPT-5, for instance, integrates multiple LLM models behind a single router that dynamically selects which model to invoke for each query \citep{openai2025intro}. These developments have made \textit{scaling test-time compute} one of the most pressing questions in AI, and the question is fundamentally operational: The key challenge is not to select a single model to deploy, but \emph{how to orchestrate multiple, heterogeneous information sources sequentially} to reach a reliable decision as quickly and cheaply as possible.

More broadly, many digital platforms face a similar challenge of orchestrating multiple information sources. For example, in a modern payment-fraud pipeline, a single transaction may pass through device fingerprinting, behavioral scoring models, merchant risk engines, and third-party verification services before a final verdict is reached \citep{abdallah2016fraud}. In cybersecurity, a single suspicious event may pass through multiple detection layers---anomaly detectors, intrusion-detection systems, endpoint scanners, and forensic analysis tools---before an analyst decides whether to escalate \citep{khraisat2019survey}. Content-moderation platforms screen posts through keyword filters, lightweight classifiers, contextual language models, and human reviewers, escalating ambiguous items through successive rounds of review \citep{gorwa2020algorithmic}. In all of these examples, the decision-maker faces the same sequential problem: For each incoming case there is a single, fixed but unknown ground truth, and the system must resolve this uncertainty by querying a sequence of tools, each of which returns a noisy signal at some cost and after some delay.

Despite the practical importance of this orchestration problem, the existing literature has not yet provided enough understanding. Recent work on test-time compute has shown that strategies such as repeated sampling, adaptive self-consistency, verifier-guided selection, self-correction, search, and debate can substantially improve reasoning quality \citep{wang2022self,aggarwal2023lets,li2024escape,chen2024more,snell2024scaling,zhang2024generative, huang2025sample}. However, these works largely focused on testing the empirical performances without providing structural understanding for the sequential decision of \emph{which} resource to query and \emph{when} to stop. 
At the other end, classical sequential testing and experiment-design models provide the mathematical backbone for adaptive information acquisition \citep{wald1945sequential,chernoff1959sequential,naghshvar2013active}. However, these works were not developed for test-time compute in LLMs, and therefore  do not capture practical concerns such as heterogeneous costs, random latency, and asymmetric diagnostic power. These practical considerations are critical when optimizing system performance. In Figure~\ref{fig:model_speed} and~\ref{fig:model_itelligence_score}, we show how different model may have different output speed and Intelligence Index${}^1$ (a score computed based on 10 commonly used benchmarks).     \footnotetext[1]{Source: Artificial Analysis, \url{https://artificialanalysis.ai/}}

\begin{figure}[htbp]
\centering

\begin{minipage}{0.45\textwidth}
    \centering
    \includegraphics[width=\linewidth]{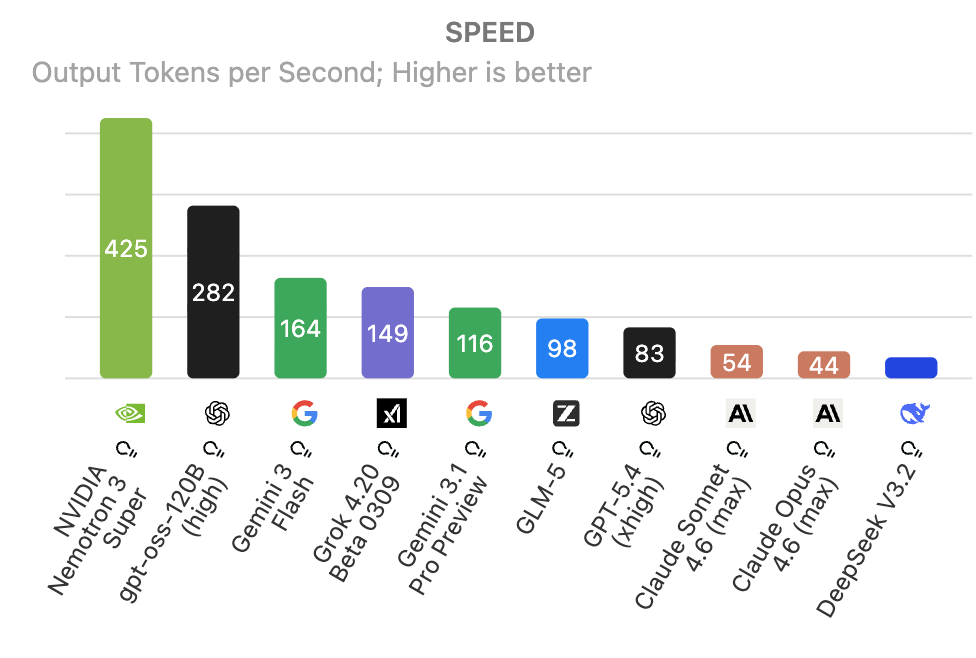}
    \refstepcounter{figure}\label{fig:model_speed}
    \small \textbf{Figure \thefigure.} Output Speed of Different LLMs
\end{minipage}
\hfill
\begin{minipage}{0.45\textwidth}
    \centering
    \includegraphics[width=\linewidth]{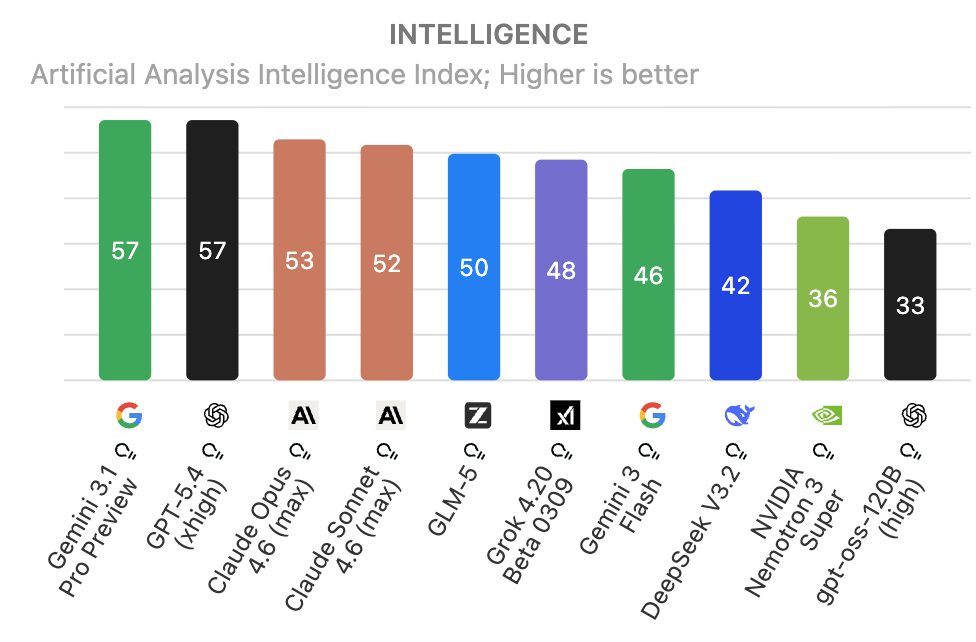}
    \refstepcounter{figure}\label{fig:model_itelligence_score}
    
    \small \textbf{Figure \thefigure.} Intelligence Index of Different LLMs
\end{minipage}

\end{figure}

This paper takes a first step toward filling this gap. We formulate a sequential hypothesis testing model in which a decision-maker faces a \textit{binary latent state} $\theta \in \{A , B\}$ and may adaptively query any of $M$ information sources. Each source $j$ is characterized by three primitives: A per-query monetary cost $c_j$, a random response time (i.e., latency) with a mean $\mu_j$, and a pair of asymmetric accuracies $(\gamma_{j,A}, \gamma_{j,B})$ with $\gamma_{j,\theta}\in (0.5, 1)$ that determine its diagnostic power under each hypothesis $\theta=A$ or $B$. The decision-maker updates a posterior belief after every query in a Bayesian manner and stops once the posterior error probability falls below a pre-defined threshold $\alpha$. The objective is to minimize the sum of expected monetary cost and a convex penalty on total waiting time, subject to the error constraint. Although the model is deliberately stylized, it captures the essential features shared by all the examples described above. Moreover, we were able to conduct a sharp asymptotic analysis in the high-confidence regime where error tolerance $\alpha$ approaches $0$. This high-confidence regime is operationally relevant since digital platforms support many queries in real-time and practical systems increasingly operate under stringent reliability requirements. 

Our model does not attempt to capture open-ended answer generation or full agent-environment interaction. Instead, we focus on the task-level decision of how to spend additional test-time compute on one question with two possible answers when the solution quality needs to be improved. In many deployed AI systems, this is exactly where decision-makers face the most consequential trade-offs between reliability, latency, and cost: A platform may need to decide whether to accept or reject a generated answer, whether one response is better than the other, whether a code patch is ready to execute, whether an agent's action is safe enough to take, or whether a case should be escalated to a more expensive model or a human reviewer. In such settings, the latent state is effectively binary, but the platform may have access to several heterogeneous information sources: different LLMs, prompting strategies, verifiers, or human checks. These sources differ in price, latency, and statistical power.

Our analysis begins by asking what any admissible policy must achieve in order to meet the target error level. More precisely, we establish a universal lower bound on the total expected cost incurred by any admissible policy. Using a change-of-measure argument and martingale techniques, we show that the posterior-threshold stopping rule forces every policy to accumulate, under each hypothesis, on the order of $\log(1/\alpha)$ units of statistical evidence. This requirement is further translated into two deterministic information-budget constraints, one for each hypothesis. As a result, we propose a convex program whose optimal value constitutes the universal lower bound. The convex program has a natural interpretation as a query-allocation problem: It determines how the expected number of queries should be distributed across the $M$ information sources under each hypothesis in order to meet the error requirements while minimizing the combined monetary cost and delay penalty. Importantly, this lower-bound program reveals a simple structural property. Because the program contains only two linear information constraints and a convex objective, its optimal solution can be supported on at most one source for each hypothesis. Consequently, there exists an optimal policy that relies on at most two information sources, regardless of how many are available.

Inspired by this property, we then propose a simple class of \textit{sign-based two-source} policies: At each step, the decision-maker tracks the cumulative log-likelihood ratio and queries an $A$-specialist when the evidence favors hypothesis $A$, and a $B$-specialist when it favors $B$, stopping when the likelihood ratio crosses the posterior thresholds. Despite its simplicity, we show that an appropriately chosen pair of specialists achieves the universal lower bound up to an additive remainder. Moreover, the order of the remainder is tight: Even the optimal policy cannot improve the rate in general. In particular, we show that, as the evidence accumulates when the policy proceeds, an asymptotically optimal policy effectively alternates between two specialists; as the evidence becomes decisive, it commits to a single one. We further provide guidance on how to construct an asymptotically optimal two-sources policy.

Taken together, these results contribute to the emerging intersection of Operations Research and AI systems. 
The major contributions of this paper can be summarized as follows:

\begin{itemize}
    \item From a modeling perspective, we introduce a sequential hypothesis testing framework that captures a class of fundamental problems arising naturally in modern AI systems but is not well understood in the existing literature. In contrast with classical sequential testing, our model explicitly incorporates monetary cost, stochastic latency, and stopping decisions within one unified objective. In contrast with most existing work on test-time compute in computer science, we were able to obtain clean and insightful theoretical results. 

    \item From a theoretical perspective, we provide a sharp asymptotic characterization of optimal performance in the high-confidence regime. The lower bound shows that the error rate requirement can be summarized by two information budgets, one for each hypothesis, while the upper bound shows that a simple sign-based policy attains this lower bound up to the best possible additive remainder. Classical sequential design results establish asymptotic optimality when the objective is expected sample size or a closely related stopping criterion. Our results go further by allowing source-specific costs, stochastic waiting times, and nonlinear delay penalties, and by showing that these operational features can be incorporated without losing sharp analytical structure.

   \item From a practical perspective, we show that there exists an asymptotically optimal policy that relies on at most two sources, regardless of how many are available. This is a strong and actionable design principle. A near-optimal system does not need to orchestrate a large collection of models in a complicated way; it needs to identify one source that is especially effective for each source, respectively, and then switch between them as evidence evolves. This insight helps explain why simple cascading may work well in practice, and suggests that the complexity of effective AI decision systems need not grow with the number of the available models.

\end{itemize}

\subsection{Related Literature}
\label{sec:literature}

Our work connects to and draws on several streams of research. We will discuss them in turn and clarify how our contribution differs from or extends the existing results.

Our work is closely related to the literature on sequential hypothesis testing and sequential information acquisition. \cite{wald1945sequential} introduced the celebrated sequential probability ratio test (SPRT) and established its optimality for testing two simple hypotheses with a single source of observation. \cite{chernoff1959sequential} further extended these findings to the sequential design of experiments, where the decision-maker adaptively chooses which experiment to run. Later work \citep{naghshvar2013active} further developed active sequential hypothesis testing formulations that optimize information acquisition across multiple actions. \cite{dayanik2012multisource} study a continuous-time Bayesian binary testing problem with multiple simultaneously observed sources and characterize the optimal stopping region. \cite{song2017asymptotically} and \cite{tsopelakos2022sequential} study sequential multiple testing and anomaly detection across multiple independent data streams, where the objective is to identify which streams are anomalous under global error constraints and sampling limitations. There are also some works that incorporate operational frictions, such as switching costs \citep{vaidhiyan2015active} and delays \citep{lambez2021anomaly}. This literature provides the conceptual backbone of our analysis. Our model differs, however, in several important respects. We study a setting where the decision-maker chooses both which source to query and when to stop; each source has its own monetary cost, stochastic latency with a convex penalty on total waiting time, and asymmetric diagnostic power.

Our work also contributes to the rapidly growing computer-science literature on test-time compute for LLMs. Self-consistency \citep{wang2022self} established repeated sampling and answer aggregation as a powerful test-time strategy. Adaptive self-consistency and early-stopping methods show that it is often inefficient to work with a fixed budget and that one can reduce the number of model calls by stopping once sufficient agreement emerges \citep{aggarwal2023lets,li2024escape,huang2026optimal}. Other works study broader test-time scaling laws and compute-optimal inference strategies, showing that the value of additional calls depends on the complexity of the prompting technique and on how computation is split across generation, verification, and aggregation \citep{chen2024more,snell2024scaling,huang2025sample,singhi2025when}. Recent works on verifier-based reasoning and reward-model approaches similarly use additional test-time compute to search over, rank, or refine candidate solutions \citep{zhang2024generative,setlur2024rewarding}. Agentic methods such as ReAct, Tree of Thoughts, CRITIC, and debate further illustrate that performance improvement often comes from repeated calls, searches, critiques, and external feedback rather than from a single forward pass \citep{yao2022react,yao2023tree,gou2023critic,du2023improving}. Our paper is complementary to these works. Rather than proposing another protocol for repeated sampling, search, or self-correction, we propose a model that captures the key trade-off of how to orchestrate heterogeneous information sources sequentially when error rate, monetary cost, and latency all matter. 

More broadly, the machine learning literature on model-cascade and budgeted-prediction studies how to escalate examples through sequences or trees of predictors with increasing computational cost, typically using an offline-designed cascade or partition of the input space \citep{viola2001rapid,xu2013cost}. Closely related developments in LLM-as-judge have moved beyond single judges to study cascaded selective evaluation, panels of diverse evaluators, and debate-based assessment \citep{kim2024debate, verga2024replacing,jung2025trust}. In the LLM setting, FrugalGPT revisits this idea with cascades over heterogeneous APIs \citep{chen2023frugalgpt}. A related but distinct line studies LLM routing, where a router makes a one-shot assignment of each query to a cheaper or stronger model based on predicted difficulty or quality--cost trade-offs \citep{ong2024routellm,ding2024hybrid}. Relative to these works, our model allows for sequential querying and re-routing as evidence accumulates; we also provide asymptotic optimality guaranties for the proposed policy.

Lastly, our work also lies at the intersection of Operations Research (OR) and LLMs. This emerging literature uses OR tools such as queueing theory, online scheduling, robust optimization, and stochastic modeling to study how LLM systems should be run efficiently at inference time. Recent papers analyze how to batch and schedule many concurrent LLM requests under memory constraints, uncertain response lengths, and latency objectives \citep{ao2025optimizing,jaillet2025online,li2025throughput,wang2025serving,chen2025robust,bari2025optimal}. Collectively, these papers show that OR methods can provide new algorithmic insights and rigorous performance guarantees for important deployment problems in generative AI. The focus is mostly on how a shared system should allocate hardware and service capacity across many jobs. By contrast, we study a task-level problem within a single job: how to allocate sequential test-time compute across heterogeneous information sources until one instance can be certified with high confidence. Among these, \cite{huang2026optimal} is the closest technical neighbor. They study Bayesian stopping for repeated sampling from a single LLM in order to identify the most consistent answer under informative priors. Our paper, instead, studies a binary hypothesis testing problem with multiple heterogeneous sources, each with its own diagnostic power, monetary cost, and stochastic response time. These modeling differences lead to a different mathematical structure and a different set of insights.

\subsection{Organization}
\label{sec:organization}
The remainder of the paper is organized as follows. \cref{sec:model} introduces the model, including the sequential decision framework, cost structure, and key information-theoretic quantities. \cref{sec:2-LLM} presents the sign-based two-LLM policy and provides an overview of its performance. \cref{sec:proof} establishes the asymptotic optimality results through a matching lower- and upper-bound analysis. \cref{sec:conclusion} concludes with a discussion of the main insights and implications for the design of efficient AI systems.

\section{Model Setup}\label{sec:model}
In many applications such as automated content moderation and fraud screening, a decision-maker must make a binary decision $D \in \{A,B\}$ (e.g., escalate vs.\ resolve) with access to $M$ different decision-support tools or information sources (e.g., multiple LLMs), indexed by $j = 1,2,\ldots,M$. For consistency, we use the term ``LLM" to refer to an information source. Let $[M] := \{1,2,\ldots,M\}$ be the set of these available LLMs. We assume there is a single ground truth ${\theta} \in \{A,B\}$ unknown to the decision-maker. Prior to querying any LLM, the decision maker holds a belief represented by a random variable $\theta \in \{A,B\}$ with prior
\[
P(\theta = A) = \xi_A \in (0,1), \qquad
P(\theta = B) = \xi_B := 1 - \xi_A \in (0,1).
\]

To reach a satisfactory decision, the decision-maker adaptively queries the LLMs sequentially until the posterior error probability falls below a threshold $\alpha \in (0, 1/2)$. At each query $t = 1, 2, \dots$, an LLM is selected based on past observations and returns a binary output $Y_t \in \{A, B\}$. We denote the index of the selected LLM by $j_t \in [M]$.
Naturally, querying LLMs is costly. In particular, each query incurs a per-use cost and a random waiting time $T_{j_t,t}$. We assume each LLM $j\in[M]$ is characterized by three sets of parameters:
\begin{enumerate}
    \item A per-query monetary cost $c_j>0$.
    
    \item The accuracies $\gamma_{j,A},\gamma_{j,B}\in(1/2,1)$: Under $\theta=A$, the model returns $A$ with probability $\gamma_{j,A}$; under $\theta=B$, it returns $B$ with probability $\gamma_{j,B}$.
    
    \item A nonnegative waiting time $T_{j,t}$ captures the response latency (e.g., server congestion or network delay). We assume $\EE[T_{j,t}]=\mu_j\in(0,\infty)$ is known and that $T_{j,t}$ is sub-Gaussian (see Assumption~\ref{assum:wait}).
\end{enumerate}

The parameters $(\gamma_{j,A}, \gamma_{j,B})$ allow for asymmetric accuracy. This captures the scenarios where an LLM may be more reliable when the ground truth is $A$ than when it is $B$, or vise versa. This asymmetry reflects the fact that some models are conservative and excel at detecting ``negative'' cases (e.g., fraud or policy violations) while being less precise on ``positive'' ones. As a consequence, each LLM induces two distinct information rates under each hypothesis. The objective of the decision-maker is to reach a satisfactory decision while accounting for both the query cost and latency. In the next section, we introduce the key components of our model.  
\subsection{Admissible Policy and Information Gain under Posterior Dynamics}\label{admin_policy}
Throughout this paper, we focus on the non-anticipative policy $\pi$ that stops once the posterior probability of one hypothesis exceeds $1-\alpha$ for some fixed $\alpha$. The total query times can be written as
\begin{align}
\tau := \inf\{t\ge 0: P(\theta=A\mid Y_{1:t}, j_{1:t})\ge 1-\alpha \text{ or }  P(\theta=B\mid Y_{1:t}, j_{1:t})\ge 1-\alpha\}, \label{decision}
\end{align}
which is also a stopping time. The corresponding decision rule is
\begin{align}
    D :=
\begin{cases}
A, & \text{if } \PP(\theta=A\mid Y_{1:\tau}, j_{1:\tau})\ge 1-\alpha ,\\
B, & \text{if } \PP(\theta=B\mid Y_{1:\tau}, j_{1:\tau})\ge 1-\alpha .
\end{cases}
\end{align}
\noindent
We let $\mathcal{G}_t := \sigma\big(\{j_s, Y_s, T_{j_s,s}\}_{s=1}^t\big)$ denote the observable $\sigma$-field at time $t$. Under such policy, the decision-maker achieves the error restriction
\begin{align}
\PP(D \neq \theta \mid \mathcal{G}_\tau) \le \alpha. \label{threshold}
\end{align}
This model is analogous to Wald’s sequential probability ratio test (SPRT), but with asymmetric information increments and waiting-time costs. To explicitly characterize the evolution of information gain, we start by defining the log-likelihood ratio for a single LLM. We define the prior log-odds in favor of $A$ over $B$ as
\[
\delta := \log \frac{\xi_A}{\xi_B},
\]
which serves as the initial offset for the log-likelihood ratio process that accumulates evidence in favor of $A$ versus $B$. For any single LLM $j\in [M]$ and output $Y\in\{A,B\}$, the log-likelihood ratio of $\theta=A$ versus $\theta=B$ can be written as
\begin{align}
\ell_j(y)
:= \log\frac{\PP(Y=y\mid \theta=A,j)}{\PP(Y=y\mid \theta=B,j)}=
\begin{cases}
\log\displaystyle\frac{\gamma_{j,A}}{1-\gamma_{j,B}}, & \text{if } y=A,\\[2mm]
\log\displaystyle\frac{1-\gamma_{j,A}}{\gamma_{j,B}}, & \text{if } y=B.
\end{cases} \label{eq:llr-asym} 
\end{align}
Since $\gamma_{j,A},\gamma_{j,B}\in(1/2,1)$, we have $\ell_j(A) > 0$ and $\ell_j(B) < 0$. Recall that $j_t$ denotes the LLM chosen at round $t$, and $Y_t$ is its output. Then the \emph{cumulative log-likelihood ratio (LLR) up to $t$} can be written as
\begin{align}
    L_t := \sum_{s=1}^t \ell_{j_s}(Y_s). \tag{LLR}\label{def:Lt}
\end{align}
\noindent 
For ease of exposition, we let $\PP_\theta(\,\cdot\,) := \PP(\,\cdot\mid \theta)$ and $\EE_\theta[\cdot] := \EE[\cdot\mid \theta]$ for $\theta\in\{A,B\}$. We also keep the Bayes measure $\PP$ and expectation $\EE$ with respect to the prior $(\xi_A,\xi_B)$, so that
\[
\PP(\,\cdot\,) = \xi_A \PP_A(\,\cdot\,) + \xi_B \PP_B(\,\cdot\,),\qquad
\EE[\cdot] = \xi_A \EE_A[\cdot] + \xi_B \EE_B[\cdot].
\]
For later use, we introduce some global bounds for the increments of $L_t$. Define
\begin{align}
C_\ell &:= \max_{1\le j\le M} \max\big\{|\ell_j(A)|,\,|\ell_j(B)|\big\} < \infty \label{C_l}\\
v_\ell^2 &:= \max_{1\le j\le M} \max\{\Var_A(\ell_j(Y)),\,\Var_B(\ell_j(Y))\} < \infty.\label{v_l}
\end{align}
where the finiteness of $C_\ell$ and $v_\ell^2$ follows from the fact that each $\ell_j$ takes only two finite values (for outputs $A$ and $B$) and the accuracies $\gamma_{j,A},\gamma_{j,B}\in(1/2,1)$ are bounded away from $0$ and $1$. \vspace{2mm}

Using $L_t$, we can interpret stopping rule via a threshold structure. We state a lemma. 
\begin{lemma}\label{lemma1}
Suppose the output realization only depends on the chosen LLM and the ground truth $\theta$, then for each time $t=1,2,...$, we have
\begin{align}
 \PP(\theta=A\mid Y_{1:t}, j_{1:t}) = \frac{e^{\delta+L_t}}{1+e^{\delta+L_t}}, \qquad
 \PP(\theta=B\mid Y_{1:t}, j_{1:t}) = \frac{1}{1+e^{\delta+L_t}}. \label{pos_prob}
\end{align}
Moreover, the stopping time and the decision rule correspond to an equivalent form bearing a threshold structure
\begin{align}
\tau := \inf\{t\ge0: L_t\ge A_\alpha \text{ or } L_t\le -B_\alpha\}, \qquad
D :=
\begin{cases}
A, & \text{if } L_\tau \ge A_\alpha,\\
B, & \text{if } L_\tau \le -B_\alpha,
\end{cases}\label{stopping time} \tag{Threshold Decision Rule}
\end{align}
where $
A_\alpha := \log\frac{1-\alpha}{\alpha} - \delta$ and $B_\alpha := \log\frac{1-\alpha}{\alpha} + \delta$. 
\end{lemma}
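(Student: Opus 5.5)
The plan is a direct Bayes computation followed by a monotonicity argument. First I would establish the posterior formula \eqref{pos_prob} by induction on $t$, or directly by writing out the joint likelihood.

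\textbf{Posterior formula.} The key observation is that the choice $j_s$ is non-anticipative. It is a (possibly randomized) function of $(j_{1:s-1},Y_{1:s-1})$, and any auxiliary randomization is independent of $\theta$. So the probability of the selection factors are identical under $\theta=A$ and $\theta=B$. Moreover, by the hypothesis of the lemma, $\PP(Y_s=y\mid \theta, j_{1:s}, Y_{1:s-1}) = \PP(Y=y\mid\theta, j_s)$. Writing the joint probability of $(j_{1:t},Y_{1:t})$ given $\theta$ as a product over $s$ and applying Bayes' rule, the selection factors cancel in the ratio. This gives
\[
\frac{\PP(\theta=A\mid Y_{1:t},j_{1:t})}{\PP(\theta=B\mid Y_{1:t},j_{1:t})}
=\frac{\xi_A}{\xi_B}\prod_{s=1}^t \frac{\PP(Y_s\mid \theta=A,j_s)}{\PP(Y_s\mid\theta=B,j_s)} = e^{\delta+L_t},
\]
using \eqref{eq:llr-asym} and \eqref{def:Lt}. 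Since the two posteriors sum to one, solving gives \eqref{pos_prob}. The waiting times $T_{j_s,s}$ could be included in the conditioning without change, provided they are independent of $\theta$ given the chosen LLMs. I would note this briefly, since $\mathcal G_t$ contains them.

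\textbf{Threshold form.} The map $x\mapsto e^x/(1+e^x)$ is strictly increasing, so
\[
\PP(\theta=A\mid\cdot)\ge 1-\alpha \iff e^{\delta+L_t}\ge \tfrac{1-\alpha}{\alpha} \iff L_t\ge \log\tfrac{1-\alpha}{\alpha}-\delta = A_\alpha .
\]
In the same way, $\PP(\theta=B\mid\cdot)=1/(1+e^{\delta+L_t})\ge 1-\alpha$ is equivalent to $e^{\delta+L_t}\le \alpha/(1-\alpha)$, that is, $L_t\le -\log\frac{1-\alpha}{\alpha}-\delta=-B_\alpha$. Substituting these equivalences into \eqref{decision} and into the definition of $D$ yields the \eqref{stopping time}.

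\textbf{Well-definedness.} Because $\alpha<1/2$, we have $\log\frac{1-\alpha}{\alpha}>0$, so $A_\alpha>-B_\alpha$. Hence both stopping conditions cannot hold simultaneously, and $D$ is well defined. The case $t=0$ is covered with $L_0=0$, which matches the prior.

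The only real subtlety is the first step: justifying that the policy's selection probabilities cancel. This requires stating that the choice rule does not depend on $\theta$ beyond the observed history, which is exactly the meaning of non-anticipative. The rest is algebra.
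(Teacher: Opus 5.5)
Your proposal is correct and follows the same route as the paper. Both use Bayes' rule with the product-form likelihood under Assumption~\ref{assum:indep}, then invert the increasing map $x\mapsto e^x/(1+e^x)$ to obtain the thresholds $A_\alpha$ and $-B_\alpha$. Your explicit argument that the policy's selection factors are $\theta$-free and cancel, including any dependence on past waiting times, is more careful than the paper's proof, which simply writes $\PP(Y_{1:t}\mid\theta,j_{1:t})$ as $\prod_{\ell}\PP(Y_\ell\mid\theta,j_\ell)$ even though $j_{1:t}$ depends adaptively on $Y_{1:t-1}$.
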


The intuition of the threshold policy is that, the query of the LLMs will continue until sufficient evidence has been accumulated towards a hypothesis. We now formally define admissible policies. \vspace{2mm}

\begin{definition}[Admissible Policy]\label{def:admissible}
For given $\alpha>0$, a non-anticipative policy $\pi:=\{\phi_t\}_{1\le t\le \tau}$ consists of a sequence of LLM selection rules $\phi_1, \phi_2,\ldots, \phi_\tau$, where $\phi_t$ is $\mathcal{G}_{t-1}$-measurable and $j_t := \phi_t(\mathcal{G}_{t-1})\in\{1,\ldots,M\}$, and where $\tau$ is the stopping time defined in \eqref{stopping time}.  We let $\Pi(\alpha)$ denote the set of all admissible policies. \end{definition}

\subsection{Objective: Query Cost and Latency}
The objective of the decision maker is to achieve the accuracy threshold $1-\alpha$ at the lowest expected cost. Suppose one such policy $\pi$ has total query time $\tau$, then the total query cost and total waiting time can be written as
\begin{align}
    C_\pi := \sum_{t=1}^{\tau} c_{j_t}, \qquad
W_\pi := \sum_{t=1}^{\tau} T_{j_t,t}.\notag
\end{align}
To characterize the waiting time penalty, we let $g : \RR_+ \to \RR_+$ be convex, increasing, and satisfying $g(0)=0$. The penalty (i.e., cost) of a policy $\pi$ at error tolerance $\alpha$ can then be defined as 
\begin{align}
 \mathcal{R}(\pi;\alpha) := \EE[C_\pi] + \EE[g(W_\pi)]
 = \EE\left[\sum_{t=1}^{\tau} c_{j_t}\right] + \EE\left[g\!\left(\sum_{t=1}^{\tau} T_{j_t,t}\right)\right] \tag{Objective} 
\end{align}

\noindent
The expectation is taken over four sources of randomness: (i) the prior on $\theta$,
(ii) the realized outputs $Y_{1:\tau}$, (iii) the LLM selection path $j_{1:\tau}$, and
(iv) the stopping time $\tau$.  

Our objective is to minimize $\mathcal{R}(\pi;\alpha)$ by striking a trade-off between information gain and penalty. To characterize such a trade-off, it is essential to carefully analyze the behavior of $L_t$ and the stopping time $\tau$ and further establish their connections to the objective function $\mathcal{R}(\pi, \alpha)$ for a given $\alpha$. To this end, we define the key information-theoretic quantities for each LLM. Intuitively, each additional query will provide more information of the ground truth. To quantify the information gain, we associate each LLM $j\in [M]$ with a \emph{pair} of expected information gain under different $\theta$:
\begin{align*}
    I_{j,A} &:= \EE_A[\ell_j(Y)]
= \sum_{y\in\{A,B\}}\PP(Y=y\mid\theta=A,j)\, \ell_j(y),\\
I_{j,B} &:= -\EE_B[\ell_j(Y)]
= -\sum_{y\in\{A,B\}}\PP(Y=y\mid\theta=B,j)\, \ell_j(y)
\end{align*}
which can be interpreted as expected per-query information or information rates. Observe that both $I_{j,A}$ and $I_{j,A}$ are strictly positive because $\gamma_{j,A},\gamma_{j,B}\in(1/2,1)$. Such definition is consistent with the KL divergence. In fact, if we let $P_{j,A}$ denote the distribution of the output $Y$ under $(\theta=A,j)$, and $P_{j,B}$ the distribution under $(\theta=B,j)$. Then
\[
I_{j,A} = \KL(P_{j,A}\,\|\,P_{j,B})>0,\qquad
I_{j,B} = \KL(P_{j,B}\,\|\,P_{j,A})>0,
\]
where $\KL(P\|Q)$ is the Kullback--Leibler divergence.  \vspace{2mm}

\begin{remark}\label{rmk:inf_tau}
Throughout the paper, we only need to focus on policies satisfying $\EE[\tau] < \infty$, as
$\EE[\tau] = \infty$ cannot happen under optimality. To see this, let $\EE_A[\tau]$ and
$\EE_B[\tau]$ denote the expected number of queries when the ground truth is $A$ and $B$,
respectively. If under a policy $\pi$ either $\EE_A[\tau] = \infty$ or $\EE_B[\tau] = \infty$ holds, then
\[
\EE[C_\pi]
\;\ge\;
(\xi_A \EE_A[\tau] + \xi_B \EE_B[\tau]) \cdot \min_{j \in [M]} c_j
\;=\; \infty,
\]
which implies that $\pi$ is strictly suboptimal. Hence, any optimal policy must satisfy
$\EE_A[\tau] < \infty$ and $\EE_B[\tau] < \infty$. In particular, this implies that
$\tau < \infty$ almost surely under optimality. 
\end{remark}

\subsection{Assumptions and Discussion}
In this paper, we impose the following independence assumptions on outputs and latency. 
\begin{assumption}[Independence]
\label{assum:indep}
We assume:
\begin{enumerate}[label=(\roman*),leftmargin=8mm]
    \item \textsc{Output independence.} Conditional on the true hypothesis $\theta$ and the
    sequence of LLM selections $(j_t)_{t\ge 1}$, the outputs $(Y_t)_{t\ge 1}$ are independent
    across queries, with
    \[
    P(Y_t = y \mid \theta, j_t) = P(Y = y \mid \theta, j_t),
    \qquad y \in \{A,B\}, \ t \ge 1 .
    \]
    \item \textsc{Stationary waiting times.} For each LLM $j \in [M]$, the waiting times $(T_{j,t})_{t \ge 1}$ are i.i.d.\ with mean $\mu_j$ and finite variance. 
    \item \textsc{Separation of information and latency.} Conditional on the selected LLM,
    waiting times are independent of both the true hypothesis and observed outputs:
    $(T_{j_t,t} \mid j_t) \;\perp\!\!\!\perp\; (\theta, Y_{1:t}), \ t \ge 1$.
    \item \textsc{Cross-LLM independence.} The waiting-time sequences associated with
    different LLMs are mutually independent.
\end{enumerate}
\end{assumption}

These assumptions capture the idea that each query behaves as a fresh experiment whose output realization depends only on the selected LLM and the ground truth $\theta$. Part (i) ensures that evidence accumulates additively over time, which yields a tractable likelihood-ratio representation. Parts (ii)–(iv) separate informational uncertainty from system-level latency. In particular, inference delays are driven by computational and network factors, not by which hypothesis is true or which label is produced. This separation allows us to analyze information acquisition and delay costs in parallel and underpins the martingale and concentration arguments developed later. We next formalize the sub-gaussian property imposed on the waiting-time distributions. \vspace{1mm}

\begin{assumption}[Sub-Gaussian Waiting Times]
\label{assum:wait}
For each LLM $j$, the waiting times $T_{j,t}$ are sub-Gaussian. That is, there exists a
constant $\psi>0$ such that for all $\lambda\in\RR$,
\[
\EE\!\left[e^{\lambda(T_{j,t}-\mu_j)}\right]
\le \exp\!\left(\frac{\psi^2\lambda^2}{2}\right).
\]
\end{assumption}
Assumption~\ref{assum:wait} guaranties that the total waiting time concentrates around its
mean under any policy. To enable our asymptotic analysis in Section~\ref{sec:proof}, we introduce a mild regularity assumptions on the  waiting-time cost function $g(\cdot)$. 

\begin{assumption}[Polynomial Waiting Cost]
\label{assum:g}
We assume that $g(x)=c\cdot x^{\rho}$  is a polynomial function with $c\ge 0$ and $\rho\ge 1$. 
\end{assumption}

Assumption~\ref{assum:g} captures the representative class of convex increasing functions that characterize the waiting time penalty.

The implications of Assumption \ref{assum:wait} and Assumption \ref{assum:g} are that, under any reasonable policy $\pi$, the total waiting time $W_\pi$ should scale on the same order as the number of queries $\tau$ which, as we will later show, grows only logarithmically in $1/\alpha$. If $g(\cdot)$ were allowed to grow exponentially fast, then even moderate random fluctuations in $W_\pi$
would make the waiting time penalty dominate the objective function, rendering the query cost meaningless. Assumption \ref{assum:g} rules out such explosive growth, while sub-Gaussian tails prevent large deviations
of $W_\pi$. Together, they ensure that the waiting-time penalty remains commensurate with query costs and information accumulation, so that $g(W_\pi)$ cannot overwhelm $C_\pi$ in the asymptotic
regime $\alpha\rightarrow 0$. This matches how LLMs are used in practice: users and system builders care about both the per-call price (e.g., pay-per-token or per-request fees) and the time-to-response, and adoption decisions are typically made by trading these two considerations off against each other. Moreover, in most production settings, response times are engineered to stay within a reasonable, finite range (via timeouts, load balancing, batching, and autoscaling), so while latency is stochastic, it typically exhibits limited tail risk around a stable mean, which is consistent with sub-Gaussian concentration.\vspace{2mm}

\begin{remark}[No monotonicity assumptions]
In this paper, we do \emph{not} assume any monotonic relationship between $c_j$,
$(\gamma_{j,A},\gamma_{j,B})$, and $\mu_j$ across different LLMs. Thus, an LLM may be relatively cheap but inaccurate, accurate but slow, or exhibit any other combination of characteristics. Our results do not rely on any ordering of the LLMs.
\end{remark}

\subsection{Notation Table}

The above completes the description of the model and the basic information-theoretic quantities. In Table~\ref{tab:notation}, we provide a summary of the notations. 
\begin{table}[ht]
    \centering
    \setlength{\aboverulesep}{0pt}
\setlength{\belowrulesep}{0pt}
    \begin{tabular}{c|c}
    \toprule
    Variables     &  Definitions\\
    \midrule
    $M$ & Number of LLMs\\
    $\theta$ & Ground-truth hypothesis\\
    $\xi_y$     &  Prior belief of hypothesis $y\in \{A, B\}$\\
    $\delta=\log(\xi_A/\xi_B)$ & Prior log-odds in favor of $A$ \\
    $\gamma_{j, y}$ & Accuracy of LLM $j\in [M]$ when $\theta=y\in \{A, B\}$\\
    $T_{j, t}$ & Random waiting time of LLM $j$ in period $t$\\
    $\mu_j$ & Expected waiting time of LLM $j$\\
    $c_j$ & Per-query cost of LLM $j$ \\
    $\ell_j(y)$ & Single-sample log-likelihood ratio of $\theta=A$ versus $\theta=B$\\
    $L_t$ & Cumulative log-likelihood up to round $t$\\
    $A_\alpha=\log((1-\alpha)/\alpha)-\delta$ & Upper stopping threshold of $L_t$\\
    $B_\alpha=\log((1-\alpha)/\alpha)+\delta$ & Lower stopping threshold of $L_t$\\
    $S_\alpha=\xi_A A_\alpha + \xi_BB_\alpha$ & Weighted average threshold\\
    $I_{j, y}$ & Expected log-likelihood ratio drift under LLM $j$ when $\theta=y\in \{A, B\}$\\
    $\kappa_{j, y}=c_j/I_{j, y}$ & Cost-efficiency index under LLM $j$ when $\theta=y\in \{A, B\}$\\
    $\eta_{j, y}=\mu_j/I_{j, y}$ & Waiting-time-efficiency index under LLM $j$ when $\theta=y\in \{A, B\}$\\
    \bottomrule
    \end{tabular}
    \caption{Summary of notations}
    \label{tab:notation}
\end{table}

In the next section we will discuss our policy which is asymptotically optimal within the regime $\alpha\rightarrow 0$. We will also highlight the key insights into the policy design.

\section{The Two-LLM Policy}\label{sec:2-LLM}
According to the decision rule in \cref{admin_policy}, the query sequence proceeds until a sufficient amount of evidence has been collected in favor of either $A$ or $B$ and this evidence is measured by the cumulative \ref{def:Lt}. Because the LLR adds up across queries, each LLM contributes linearly to it through its information rates $I_{j,A}$ and $I_{j,B}$. Consequently,
any policy can be viewed as collecting evidence  in favor of either $A$ or $B$ across the available LLMs. 

Among all admissible policies, we highlight a particularly simple class of sign-based two-LLM sequential policies that use at most two LLMs, denoted by $j_A$ and $j_B$. These can be interpreted as an $A$-specialist and a $B$-specialist. The $A$-specialist $j_A$ is better suited for confirming hypothesis $A$ and tends to deliver more reliable support for $A$ when $\theta=A$. Likewise, the $B$-specialist $j_B$ is better suited for confirming hypothesis $B$ and tends to deliver more reliable support for $B$ when $\theta=B$. This two-LLM structure fits with asymmetric accuracies and allows the policy to route queries toward the model that best matches the current direction of evidence.

We now describe a policy induced by a pair $(j_A,j_B)$ for a given error tolerance threshold $\alpha$, denoted by $\pi^{(2)}_{j_A,j_B}(\alpha)$. The decision-maker tracks the cumulative log-likelihood ratio \ref{def:Lt} and stops once $L_t$ exceeds either the upper threshold $A_\alpha$ or the lower threshold $-B_\alpha$. The key idea of the two-LLM policy is simple: the choice of LLM at each step depends on the sign of the current LLR, so queries are directed toward the specialist aligned with the current direction of evidence.

\vspace{2mm}
\begin{definition}[Sign-based two-LLM Sequential Policy]\label{def:2llm}
Fix an error tolerance $\alpha$ and a pair of LLMs $(j_A,j_B)$. The policy $\pi^{(2)}_{j_A,j_B}(\alpha)$ proceeds as follows.
\begin{itemize}
    \item \emph{Initialization:} Set $L_0=0$.
    \item \emph{LLM selection and update:} For each $t\ge 1$, choose
    \[
    j_t =
    \begin{cases}
    j_A, & \text{if } L_{t-1}\ge 0,\\[3pt]
    j_B, & \text{if } L_{t-1}< 0,
    \end{cases}
    \]
    observe $Y_t$ from LLM $j_t$, and update
    \[
    L_t = L_{t-1} + \ell_{j_t}(Y_t).
    \]
    \item \emph{Stopping rule and decision:} Stop at the first time
    \[
    \tau := \inf\{t\ge 1: L_t \ge A_\alpha \ \text{or}\ L_t \le -B_\alpha\},
    \]
    and set
    \[
    D =
    \begin{cases}
    A, & \text{if } L_\tau \ge A_\alpha,\\
    B, & \text{if } L_\tau \le -B_\alpha.
    \end{cases}
    \]
\end{itemize}
\end{definition}

\vspace{2mm}
Such policy has a simple interpretation. When the accumulated evidence (i.e., $L_t$) favors $A$, the policy queries the LLM that is most efficient at gathering information in favor of $A$; when the evidence favors $B$, it queries the LLM that is most efficient for certifying $B$. Early in the process, when the posterior is close to the prior and $L_t$ fluctuates around zero, the policy naturally alternates between the two specialists, effectively mixing between them. intuitively, as the belief becomes more decisive, the policy commits to a single specialist and behaves like a single-LLM sequential test near the boundary.  This behavior mirrors how a human would proceed in acquiring evidence: When unsure, consult multiple
viewpoints; once a hypothesis becomes more likely, focus on the tool best suited to confirm it.  

\subsection{Performance Analysis}\label{4.1}
Despite its simplicity, we are able to show that this policy is asymptotically optimal in the high confidence regime $\alpha\rightarrow 0$. To facilitate our analysis, we first establish a universal lower bound for the performance under any admissible policy. We state an informal theorem.

\begin{infthm}\label{Phi+alpha}
There exists some $\Phi_\alpha>0$ depending only on the model primitives and $\alpha$, such that $\mathcal{R}(\pi;\alpha)\;\ge\;\Phi_\alpha$ for any $\pi\in\Pi(\alpha)$. Moreover, 
\[\Phi_\alpha=\Theta\!\Big((\log(1/\alpha))^{\rho}
\, \Big).\] 
\end{infthm}

The result identifies a universal lower bound, which serves as a natural benchmark for evaluating any sequential policy in the regime $\alpha\rightarrow 0$. For $\rho=1$, the benchmark grows on the information scale $\log(1/\alpha)$, while for $\rho>1$, the convex delay penalty dominates and leads to the higher-order growth $\log(1/\alpha))^\rho$. 

The benchmark $\Phi_{\alpha}$ is constructed by solving a deterministic convex program that can be interpreted as a query allocation problem: 
\begin{equation}
  \Phi_\alpha :=\min_{n^A,n^B \in \mathcal{U} }F_\alpha(n^A,n^B) \label{convex}
\end{equation}
where
\begin{align}
F_\alpha(n^A,n^B)
:=
\underbrace{\left(
\xi_A \sum_{j=1}^M c_j n_{j,A}
+ \xi_B \sum_{j=1}^M c_j n_{j,B}
\right)}_{\text{query cost}}
+
\underbrace{
\xi_Ag\!\left(
 \sum_{j=1}^M \mu_j n_{j,A}\right)
+ \xi_B g\!\left( \sum_{j=1}^M \mu_j n_{j,B}
\right)
}_{\text{waiting time penalty}}
\label{F}
\end{align}  
consists of a \emph{query cost} component and a \emph{waiting time} component. The decision variables are $n^A := (n_{1,A}, \ldots, n_{M,A})$ and $n^B := (n_{1,B}, \ldots, n_{M,B})$ with each pair $(n_{j,A}, n_{j,B})$ representing the expected total query times of LLM $j$ under different hypotheses for any given admissible policy $\pi$. The feasible region, denoted as $\mathcal{U}$, characterizes the LLR requirements specified in \cref{lemma1} under any given admissible policy $\pi$. 

Intuitively, the convex problem can be interpreted as minimizing a deterministic lower bound of the expected cost over all admissible policies by allocating query time across different LLMs under each hypothesis. 
Therefore, by solving this convex problem, we obtain a universal lower bound and also a benchmark query allocation over the LLMs by identifying the optimal solution $n^{A}, n^{B}$. (We defer a more detailed discussion to \cref{sec:proof}.) We  further show that there exists an optimal allocation rule that involves at most two LLMs, i.e., $(n^{A}, n^{B}) = (e_{j_A^\star}, e_{j_B^\star})$ for some $j_A^\star, j_B^\star \in [M]$ where each $e_j$ denotes a vector with a single nonzero element in the $j$-th position. This sparse structure motivates us to investigate the performance of the Policy $\pi^{(2)}_{j_A^\star, j_B^\star}$. Indeed, we are able to show that this deterministic benchmark $\Phi_\alpha$ is asymptotically attainable under $\pi^{(2)}_{j_A^\star, j_B^\star}$ in the high-confidence regime $\alpha\rightarrow 0$. The following theorem, which is the most important result of the paper, characterizes the best possible convergence rate to $\Phi_\alpha$ and establishes its tightness.

\begin{theorem}[Asymptotic optimality and tightness of the convergence rate]
\label{thm:two-llm-upper}
There exists an LLM pair $(j_A^\star,j_B^\star)\in [M]\times [M]$ such that the policy
$\pi:=\pi^{(2)}_{j_A^\star,j_B^\star}(\alpha)$ satisfies, as $\alpha\rightarrow 0$,
\begin{align}
\EE[C_{\pi}] + \EE[g(W_{\pi})]
&= \Phi_\alpha \;+\; O\!\Big((\log(1/\alpha))^{\rho-1}\Big).
\label{eq:two-llm-additive-rate}
\end{align}
Moreover, this remainder order is tight in general: letting
$\pi^\star(\alpha)\in\arg\min_{\pi\in\Pi(\alpha)}\mathcal{R}(\pi;\alpha)$ denote an optimal admissible policy, as $\alpha\rightarrow 0$, we have 
\begin{equation}
\label{eq:rate-tightness}
\mathcal{R}(\pi^\star(\alpha);\alpha) 
\;=\; \Phi_\alpha+
\Omega\!\Big((\log(1/\alpha))^{\rho-1}\Big).
\end{equation}
\end{theorem}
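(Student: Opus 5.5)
The plan is to prove the upper bound \eqref{eq:two-llm-additive-rate} and the tightness statement \eqref{eq:rate-tightness} separately, both against the convex program \eqref{convex}.

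\textbf{Upper bound.} First I would solve \eqref{convex} explicitly. Its feasible set $\mathcal{U}$ consists of two decoupled linear information constraints, $\sum_j I_{j,A}n_{j,A}\ge A_\alpha$ and $\sum_j I_{j,B}n_{j,B}\ge B_\alpha$ (up to the precise form established in \cref{sec:proof}). The objective $F_\alpha$ separates across the two hypotheses and is concave-free in the right sense: a linear cost plus $g$ of a linear form. So for fixed total delay $\sum_j\mu_j n_{j,A}$, minimizing cost over a single linear constraint is an LP. An exchange argument then puts the optimum on a vertex $n^A=(A_\alpha/I_{j_A^\star,A})e_{j_A^\star}$, and similarly for $B$. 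Here $j_A^\star$ minimizes $\kappa_{j,A}A_\alpha+c\,(\eta_{j,A}A_\alpha)^\rho$. For $\alpha$ small this is eventually a fixed index: the $\eta$-term dominates when $\rho>1$, with ties broken by $\kappa$, and $\kappa+c\eta$ decides when $\rho=1$. This gives
\[
\Phi_\alpha=\xi_A\Big[\kappa_{j_A^\star,A}A_\alpha+c(\eta_{j_A^\star,A}A_\alpha)^\rho\Big]+\xi_B\Big[\kappa_{j_B^\star,B}B_\alpha+c(\eta_{j_B^\star,B}B_\alpha)^\rho\Big].
\]

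Next I analyze $\pi^{(2)}_{j_A^\star,j_B^\star}$ under $\PP_A$; the case $\PP_B$ is symmetric. Under $\theta=A$ both LLMs give positive drift, since $\EE_A[\ell_{j}(Y)]=I_{j,A}>0$. Hence the number $N^-$ of visits to $\{L<0\}$ has exponential tails, by Hoeffding for bounded increments (bound $C_\ell$), and all its moments are $O(1)$ uniformly in $\alpha$. Apply Wald's identity to the martingale $L_t-\sum_{s\le t}I_{j_s,A}$. The overshoot is at most $C_\ell$, and the mass lost on the event of exiting at $-B_\alpha$ has probability $O(\alpha)$. Together these give $\EE_A[N^+]=A_\alpha/I_{j_A^\star,A}+O(1)$, where $N^+$ counts queries of $j_A^\star$. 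Hence $\EE_A[C_\pi]=\kappa_{j_A^\star,A}A_\alpha+O(1)$.

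For the delay I write $W=\mu_{j_A^\star}N^++\mu_{j_B^\star}N^-+\sum_t(T_{j_t,t}-\mu_{j_t})$. Then $\EE_A W=w_\alpha+O(1)$ with $w_\alpha:=\eta_{j_A^\star,A}A_\alpha$. I also need the centered moment bound $\EE_A|W-\EE_A W|^k=O((\log 1/\alpha)^{k/2})$. It follows from exponential concentration of $\tau$ around $A_\alpha/I$ (a maximal inequality for the drift martingale) combined with sub-Gaussianity (Assumption~\ref{assum:wait}) and optional stopping for the noise sum. Taylor-expanding $x^\rho$ around $w_\alpha$ to second order, with a crude bound $x^{\rho-2}\lesssim w_\alpha^{\rho-2}+|x-w_\alpha|^{\rho-2}$, then gives
\[
\EE_A[W^\rho]=w_\alpha^\rho+O(w_\alpha^{\rho-1})+O(w_\alpha^{\rho-2}\log(1/\alpha))=w_\alpha^\rho+O((\log 1/\alpha)^{\rho-1}).
\]
The analogous expansion under $\PP_B$ finishes the upper bound.

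\textbf{Tightness.} Since the claim is ``in general'', it suffices to exhibit one instance, and I would take $M=1$, where every admissible policy coincides with the SPRT. For $\rho>1$, take nondegenerate waiting times with $\Var(T)=\sigma^2>0$. By Assumption~\ref{assum:indep}(iii) the noise is independent of $\tau$, so $\Var(W)\ge\sigma^2\EE\tau\gtrsim\log(1/\alpha)$. A second-order lower bound for the convex map $x^\rho$, restricted to the high-probability window $|W-\EE W|\le w_\alpha/2$, gives $\EE[W^\rho]\ge(\EE W)^\rho+c'w_\alpha^{\rho-2}\Var(W)$. Combined with the lower-bound argument's $\EE W\ge w_\alpha-o(1)$, this yields $\Omega((\log 1/\alpha)^{\rho-1})$. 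For $\rho=1$ the required gap is $\Omega(1)$, and I would get it from overshoot. Choose $\gamma$ and the prior so that $A_\alpha$ is bounded away from the lattice generated by $\ell(A),\ell(B)$, for example by taking $\delta$ suitably irrational along a sequence $\alpha\to0$ when $\ell(A)=-\ell(B)$. Then $\EE_A[L_\tau-A_\alpha]\ge c_0>0$, and Wald gives a cost gap of at least $c_0\kappa$.

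\textbf{Main obstacle.} The delicate step is the moment control of $W$ in the upper bound: showing that $\tau$ concentrates at scale $\sqrt{\log(1/\alpha)}$ with exponential tails uniformly in $\alpha$, despite the state-dependent switching. A second, minor nuisance is making the $\rho=1$ overshoot lower bound uniform.
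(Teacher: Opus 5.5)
Your upper-bound plan is essentially the paper's. Bounded wrong-side counts, Wald's identity and the overshoot bound give the means, as in Proposition~\ref{lem:EC-EW-asym}. A second-order expansion of $x^\rho$ about the per-hypothesis mean, fed by concentration of $\tau$ via the drift martingale and by sub-Gaussian waiting-time noise, then controls $\EE_\theta[g(W)]-g(\EE_\theta W)$.

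Your technical variations are fine. Your ``Hoeffding'' step must be Azuma--Hoeffding on the drift martingale, since the increments are not i.i.d.\ under switching; in that form it is cleaner than the paper's coupling with an i.i.d.\ comparison walk. Your centered moment bounds replace the paper's typical/tail split with Cauchy--Schwarz. One slip: for $1\le\rho<2$, the bound $\xi^{\rho-2}\lesssim w_\alpha^{\rho-2}+|x-w_\alpha|^{\rho-2}$ on the Lagrange point fails for $x$ near $0$. Use instead $x^\rho-w^\rho-\rho w^{\rho-1}(x-w)\le C_\rho w^{\rho-2}(x-w)^2$ for all $x\ge 0$, which holds for $\rho\in[1,2]$ by scaling $u=x/w$.

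The tightness part is a genuinely different route. The paper argues for every instance through a hindsight relaxation. It asserts that an oracle knowing $\theta$ should query $j_\theta^\star$ throughout, then lower-bounds the Jensen gap via $\Var(W)\gtrsim\log(1/\alpha)$. You instead read ``in general'' existentially and take $M=1$, so $\Pi(\alpha)$ is a single policy and no oracle-optimality claim is needed.

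For $\rho>1$ you use the same variance mechanism, but per hypothesis. That is what matches the $\xi_A g(\cdot)+\xi_B g(\cdot)$ structure of $\Phi_\alpha$; a comparison with $g$ of the Bayes-averaged mean does not. For $\rho=1$ your overshoot argument is necessary, not just an alternative. Here $g$ is linear, so the Jensen gap is identically zero. Moreover, an $\Omega(1)$ gap cannot hold for every instance: take one symmetric LLM with $\delta=0$; along $\alpha$ with $A_\alpha\in\ell(A)\mathbb{Z}$ both overshoots vanish and $\mathcal{R}-\Phi_\alpha=O(\alpha\log(1/\alpha))$. Your existential reading is what makes the $\rho=1$ claim true.

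The uniformity issue you flag does need repair as written. With $\ell(A)=-\ell(B)=a$, $A_\alpha$ passes continuously through $a\mathbb{Z}$ as $\alpha\to0$. So it cannot stay away from the lattice for all small $\alpha$, and a subsequence only yields a Hardy--Littlewood $\Omega$. Use both boundaries instead:
\begin{itemize}
\item the overshoots are exactly $o_A=a\lceil A_\alpha/a\rceil-A_\alpha$ and $o_B=a\lceil B_\alpha/a\rceil-B_\alpha$;
\item since $B_\alpha-A_\alpha=2\delta$, one has $o_A+o_B\ge a\,\mathrm{dist}(2\delta/a,\mathbb{Z})$;
\item choosing the prior with $2\delta\notin a\mathbb{Z}$ then gives $\mathcal{R}-\Phi_\alpha\ge c_0>0$ for all small $\alpha$, because the wrong-exit corrections and $K_\alpha$ are $O(\alpha\log(1/\alpha))$.
\end{itemize}
A non-arithmetic pair $\ell(A),\ell(B)$ combined with renewal theory would also work.
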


Theorem~\ref{thm:two-llm-upper} establishes that our sign-based two-LLM policy is asymptotically optimal as the target error level $\alpha \rightarrow 0$. In particular, the remainder order of the proposed policy is also tight: even an optimal admissible policy cannot typically reduce the gap to $\Phi_\alpha$ faster than $\Omega((\log(1/\alpha))^{\rho-1})$.

These conclusions have both theoretical and operational implications. Theoretically, they reveal a simple structure in the high confidence regime. Although the decision-maker may have access to many heterogeneous LLMs, optimal performance can be achieved by using at most two specialists, one that is efficient for confirming $A$ and one that is efficient for confirming $B$. The complexity of sequential hypothesis experimentation, therefore, does not grow with the number of available LLMs. Operationally, the theorem gives direct guidance for policy design. To implement a near optimal strategy, the  decision-maker only needs to identify one strong model for certifying $A$ and another for certifying $B$, and then switch between them as the posterior evolves. Adding more LLMs expands the set of candidate specialists but does not require more complex orchestration. As a result, the policy remains simple and scalable under stringent reliability requirements.

\section{Asymptotic Optimality of Two-LLM Policy } \label{sec:proof}

Despite the simplicity of the policy structure, establishing its asymptotic optimality is nontrivial. The technical difficulty arises from two key features that distinguish our setting from classical sequential testing models. First, in standard formulations of dynamic hypothesis testing or optimal stopping, there is a single information channel, and the only control decision concerns \emph{when} to stop. In contrast, our framework involves multiple heterogeneous and asymmetric information sources. At each round, the decision-maker must decide not only \emph{when} to stop but also \emph{which} LLM to query. While the statistical characteristics of each LLM are known, their costs, waiting time distributions, and diagnostic efficiencies differ across hypotheses. The challenge, therefore, lies not in learning unknown parameters but in optimally orchestrating multiple information channels whose contributions vary across states. This results in a high-dimensional stochastic control problem in which both the stopping time and the flow of information are endogenous and depend on the entire query history. Second, the structure of the objective function introduces an additional layer of complexity. Query costs scale linearly with the number of queries, but delay is penalized through a convex function of the total waiting time, which itself is a \emph{random} sum of random waiting times. This nonlinearity prevents a direct application of standard optimal stopping or dynamic programming arguments. In particular, to obtain sharp asymptotic bounds as $\alpha \rightarrow 0$, we must carefully control the distribution of the stopping time and the accumulated waiting time, rather than only their expectations.

Our proof follows a matching-bounds approach centered around the deterministic universal lower bound $\Phi_\alpha$ discussed in \cref{4.1}. We will show that as $\alpha \to 0$, the expected query cost and the expected waiting time penalty induced by $\pi^{(2)}_{j_A^\star, j_B^\star}$ match the \emph{query cost} and   \emph{waiting time} components within $F(e_{j_A}^\star, e_{j_B}^\star)$ (which equals $\Phi_\alpha$) up to lower-order terms, respectively. This alignment further implies that the expected risk under this policy matches
$\Phi_\alpha$ up to a lower-order term, yielding the asymptotic optimality of the policy.

There are three essential steps in establishing the results. We first translate the stopping requirement on the posterior LLR in \cref{lemma1} into deterministic linear constraints on $(n^A, n^B)$. This enables us to construct two information budget constraints for the convex program underlying $\Phi_\alpha$. Moreover, these two constraints create a
two-dimensional allocation structure that, through the KKT conditions, imply a sparse optimal solution and provide the key insight for why at most two LLMs are needed in the benchmark allocation. The detailed analysis is described in \cref{sec:lower_bound}.

In the second step, we match the expected query cost and waiting time penalty induced by $\pi^{(2)}_{j_A^\star,j_B^\star}$ to their counterparts in $\Phi_\alpha$. The major challenge lies in matching the waiting time penalty $\EE[g(W_\pi)]$ to its benchmark in $F(e_{j_A}^\star, e_{j_B}^\star)$: Since $g$ is convex and increasing, it may amplify the randomness in the waiting time. We use the drift and martingale decomposition of the cumulative log likelihood ratio. Under
$\theta\in\{A,B\}$,
\begin{align}
  L_t \;=\; \sum_{s=1}^t I_{j_s,\theta} \;+\; M_t^{(\theta)}, \label{decomp}   
\end{align}
where $I_{j_s,\theta}=\EE_\theta[\ell_{j_s}(Y_s)]$ is the per query information rate and ${M_t^{(\theta)}}$ is a martingale with bounded increments. This decomposition isolates the predictable evidence accumulation term, which directly relates to the deterministic benchmark , from a martingale fluctuation term that captures overshoot and deviation effects at the stopping time. We then apply Freedman type concentration bounds to $(M_t)$ and combine them with the structural assumptions on $g$ (Assumption~\ref{assum:g}) to show that $\EE[g(W_\pi)]$ only deviates from its deterministic benchmark by lower-order terms in the high confidence regime $\alpha\rightarrow 0$. The detailed analysis is described in  \cref{match_LB} 

Finally, we show that the performance cannot be improved further by any admissible policy by analyzing an oracle policy
that is still required to satisfy the same LLR stopping requirement but is informed of the true hypothesis
$\theta\in\{A,B\}$ in advance. Because this oracle has strictly more information than any admissible policy, the minimal
cost achievable by such an oracle provides a lower bound on the cost of the best admissible policy. We then show that the
cost attained by the optimal oracle policy matches $\Phi_\alpha$ up to a remainder term of the same order as that achieved
by $\pi^{(2)}_{j_A^\star,j_B^\star}$. This establishes that our remainder order is tight, and hence no admissible policy
can asymptotically improve upon the performance of the two-LLM policy in the high confidence regime.

\subsection{A Deterministic Lower Bound} \label{sec:lower_bound}
 In this subsection, we detail the key results used to establish the universal lower bound $\Phi_\alpha$ and the sparsity
property of the optimal solution. In \cref{subsec:info-budget}, we derive two information budget constraints that any
policy $\pi$ must satisfy in the high confidence regime. These constraints are implied by the posterior or LLR stopping
requirement and are expressed in terms of the hypothesis dependent expected query allocations induced by $\pi$.
Then, in \cref{LB_obj}, we construct a deterministic convex function, also expressed in terms of these hypothesis dependent expected query allocations, that lower bounds the expected cost incurred by $\pi$. This leads to the convex program whose optimal value is $\Phi_\alpha$, and
whose KKT characterization further reveals the sparsity structure of the optimal allocation.

\subsubsection{LLR requirements under asymmetry}
\label{subsec:info-budget}
To connect the LLR requirements to the query allocation, for each policy, we let $N_j(\tau)$ denote the total number of queries sent to LLM~$j$
before stopping:
\begin{equation}
\label{eq:def-nj}
  N_j(\tau) := \sum_{t=1}^\tau \mathbf{1}_{\{j_t = j\}},
\end{equation}
where $\mathbf{1}_{\{\cdot\}}$ denotes the indicator function and $\tau$ is the stopping time. Correspondingly, we let 
\begin{align}
     n_{j,A} := \EE_A[N_j(\tau)],
  \qquad
  n_{j,B} := \EE_B[N_j(\tau)] \label{n_j}
\end{align}
be the expected query times of LLM~$j$ under $\theta=A$ and $\theta=B$ respectively. And the Bayes-averaged usage count for LLM $j$ is defined as
\begin{align}
 \bar{N}_j := \xi_A\cdot n_{j,A} + \xi_B\cdot n_{j,B}. \label{N_bar}
\end{align}
Using these notations, we can express the expected LLR upon stopping time $\tau$ as
\begin{align}\label{eq:L_tau}
  \EE_A[L_\tau]
  = \sum_{j=1}^M I_{j,A} \EE_A[N_j(\tau)]
  = \sum_{j=1}^M I_{j,A} n_{j,A}.
\end{align}
with the same logic applied to hypothesis $B$. Thus, we connect the LLR requirements to the query allocations. The next lemma concretizes this intuition and show that the expected information gathered from the LLMs must be at least over a certain threshold. 

\begin{lemma} 
\label{lem:info-budget-asym}
Suppose an admissible policy has a stopping time $\tau$ satisfying $\tau < \infty$ almost surely under both $\PP_A$ and $\PP_B$, and suppose $\EE_A[\tau] < \infty$ and $\EE_B[\tau] < \infty$. Then there exist constants $c_{\mathrm{err}}$ (independent of $\alpha$ and of the policy) such that
\begin{align}
  \sum_{j=1}^M I_{j,A} n_{j,A}
  &\ge A_\alpha - K_\alpha, \label{eq:EA-info-lb}\\[2mm]
  \sum_{j=1}^M I_{j,B} n_{j,B}
  &\ge B_\alpha - K_\alpha, \label{eq:EB-info-lb}
\end{align}
where $K_\alpha:= c_{\mathrm{err}}\,\alpha\, \bigl(A_\alpha + B_\alpha + C_\ell\bigr)$. Moreover, $K_\alpha=O(1)$.
\end{lemma}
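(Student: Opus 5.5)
We will prove the bound for hypothesis $A$; the bound for $B$ follows by the symmetric argument with the roles of $A$ and $B$ swapped. The plan has three steps: apply Wald's identity to $L_\tau$, use the threshold rule to bound $L_\tau$ from below on the two stopping events, and then control the probability of the wrong event.

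\textbf{Step 1 (Wald's identity).} Under $\PP_A$, use the decomposition \eqref{decomp}. The process $M_t^{(A)} = L_t - \sum_{s\le t} I_{j_s,A}$ is a martingale with increments bounded by $2C_\ell$, because $j_s$ is $\mathcal{G}_{s-1}$-measurable and, by Assumption~\ref{assum:indep}(i), $\EE_A[\ell_{j_s}(Y_s)\mid\mathcal{G}_{s-1}] = I_{j_s,A}$. Since $\EE_A[\tau]<\infty$ and the increments are bounded, optional stopping gives $\EE_A[M^{(A)}_\tau]=0$. This yields \eqref{eq:L_tau}:
\[
\sum_j I_{j,A} n_{j,A} = \EE_A[L_\tau].
\]

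\textbf{Step 2 (lower bound on $L_\tau$).} By Lemma~\ref{lemma1}, at stopping either $L_\tau \ge A_\alpha$ (decision $A$) or $L_\tau \le -B_\alpha$ (decision $B$). In the second case the overshoot is at most one increment, so $L_\tau \ge -B_\alpha - C_\ell$. Let $p := \PP_A(D=B)$. Then
\[
\EE_A[L_\tau] \ge (1-p)A_\alpha - p(B_\alpha + C_\ell) = A_\alpha - p(A_\alpha + B_\alpha + C_\ell).
\]
It therefore suffices to show $p \le c_{\mathrm{err}}\alpha$.

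\textbf{Step 3 (error probability, the main step).} We use a change of measure. On $\{D=B\}$ we have $L_\tau \le -B_\alpha$, so $d\PP_A/d\PP_B = e^{L_\tau} \le e^{-B_\alpha}$ on $\mathcal{G}_\tau$. This is valid because the waiting times have the same law under both hypotheses (Assumption~\ref{assum:indep}(iii)), so only the outputs contribute to the likelihood ratio. Hence
\[
p = \EE_B\big[e^{L_\tau}\mathbf{1}_{\{D=B\}}\big] \le e^{-B_\alpha} = \frac{\alpha}{1-\alpha}\, e^{-\delta} = \frac{\alpha}{1-\alpha}\cdot\frac{\xi_B}{\xi_A}.
\]
Since $\alpha<1/2$, this gives $p \le 2(\xi_B/\xi_A)\alpha$. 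Symmetrically, $\PP_B(D=A) \le 2(\xi_A/\xi_B)\alpha$. Taking
\[
c_{\mathrm{err}} := 2\max\{\xi_A/\xi_B,\ \xi_B/\xi_A\}
\]
gives a constant independent of $\alpha$ and of the policy. The only care needed is to justify the change of measure at an unbounded stopping time. We handle this by applying it to $\tau\wedge n$, or by noting that $\tau<\infty$ almost surely under both measures and using monotone convergence on the events $\{D=B,\ \tau\le n\}$.

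\textbf{Conclusion.} Steps 1–3 give \eqref{eq:EA-info-lb} with $K_\alpha = c_{\mathrm{err}}\alpha(A_\alpha+B_\alpha+C_\ell)$. Finally, $A_\alpha + B_\alpha = 2\log((1-\alpha)/\alpha)$, so $\alpha(A_\alpha+B_\alpha) \le 2\alpha\log(1/\alpha) + O(\alpha)$. This quantity tends to $0$ as $\alpha\to 0$ and is bounded on $(0,1/2)$, which shows $K_\alpha=O(1)$.
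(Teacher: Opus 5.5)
Your proof is correct and follows essentially the same route as the paper. The paper's Step~1 uses optional stopping to get the drift identity $\EE_A[L_\tau]=\sum_j I_{j,A}n_{j,A}$. It then splits $\EE_A[L_\tau]$ by the decision, using the one-step overshoot bound (Lemma~\ref{lem:ELtau-vs-boundaries}). Next it applies the change-of-measure bound $\PP_A(D=B)\le e^{-B_\alpha}\le c_{\mathrm{err}}\alpha$ with the same $c_{\mathrm{err}}=2\max\{\xi_A/\xi_B,\xi_B/\xi_A\}$ (Lemma~\ref{lem:error-bounds}), and finishes with the boundedness of $\alpha\log(1/\alpha)$. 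Two of your justifications are in fact slightly more careful than the paper's: you prove the change of measure by restricting to $\{\tau\le n\}$ and letting $n\to\infty$, and you note that waiting times drop out of the likelihood ratio by Assumption~\ref{assum:indep}(iii). The paper instead appeals to optional stopping for $\Lambda_t$, citing bounded increments it does not actually have.
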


Observe that in \cref{lemma1}, the effective thresholds are
\begin{equation}
\label{eq:def-SA-SB}
  S_A(\alpha) := A_\alpha - K_\alpha,
  \qquad
  S_B(\alpha) := B_\alpha - K_\alpha,
\end{equation}
rather than $A_\alpha$ and $B_\alpha$. The slack term $K_\alpha$ accounts for adverse stopping events in which the LLR
crosses the wrong threshold, for example crossing the $A$ boundary when $\theta=B$. The choice of $K_\alpha$ ensures that the resulting
information budget constraints hold uniformly for all admissible policies.

According to Lemma~\ref{lem:info-budget-asym}, any admissible policy must satisfy
\begin{equation}
\label{eq:two-info-constraints}
   \sum_{j=1}^M I_{j,A} n_{j,A} \ge S_A(\alpha),
\qquad
   \sum_{j=1}^M I_{j,B} n_{j,B} \ge S_B(\alpha).
\end{equation}

\noindent
which are the information requirements under each hypothesis. These two requirements serve as the linear constraints in the convex program used for constructing the universal lower bound $\Phi_\alpha$.    
\subsubsection{Lower bound for objective function}\label{LB_obj}
To establish the convex program, we also need to write the objective function in terms of $n_{j, A}, n_{j,B}$ that serves as a universal lower bound to the cost function. We state a lemma. \vspace{2mm}

\begin{lemma}
\label{lem:cost-wait-decomp}
For any policy $\pi$ with $\EE_\theta[\tau]<\infty$ for $\theta\in\{A,B\}$,
\begin{align}
\label{eq:cost-decomp-theta}
\EE_\theta[C_\pi] = \sum_{j=1}^M c_j\cdot n_{j, \theta},\quad \EE_\theta[W_\pi] = \sum_{j=1}^M \mu_j \, n_{j, \theta} 
\end{align}
Under the Bayes measure,
\begin{align}
\label{eq:cost-decomp-bayes}
\EE[C_\pi] = \sum_{j=1}^M c_j \,\bar{N}_j,\quad
\EE[W_\pi] = \sum_{j=1}^M \mu_j \,\bar{N}_j.
\end{align}
\end{lemma}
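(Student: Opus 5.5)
The plan is to prove \cref{lem:cost-wait-decomp} by writing each quantity as a random sum indexed by the query rounds and applying a Wald-type identity, using that the choice $j_t$ is $\mathcal{G}_{t-1}$-measurable and $\{\tau\ge t\}\in\mathcal{G}_{t-1}$.

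\textbf{Cost identity.} This step is purely combinatorial. Regrouping the queries by LLM gives $C_\pi=\sum_{t=1}^\tau c_{j_t}=\sum_{j=1}^M c_j N_j(\tau)$. Since $N_j(\tau)\le \tau$, each term has finite expectation under $\PP_\theta$. Taking $\EE_\theta$ and using linearity yields $\EE_\theta[C_\pi]=\sum_j c_j n_{j,\theta}$ by the definition \eqref{n_j}.

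\textbf{Waiting-time identity.} This is the step needing care, because $T_{j_t,t}$ is summed up to a random time. I will write
\[
W_\pi=\sum_{t=1}^\infty T_{j_t,t}\,\mathbf{1}_{\{\tau\ge t\}}=\sum_{t=1}^\infty\sum_{j=1}^M T_{j,t}\,\mathbf{1}_{\{j_t=j,\ \tau\ge t\}} .
\]
All terms are nonnegative, so Tonelli allows exchanging $\EE_\theta$ with the double sum.

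The key claim is that $\EE_\theta\big[T_{j,t}\mathbf{1}_{\{j_t=j,\tau\ge t\}}\big]=\mu_j\,\PP_\theta(j_t=j,\tau\ge t)$. The event $\{j_t=j,\tau\ge t\}$ is determined by $\mathcal{G}_{t-1}$ (the selection rule $\phi_t$ and the LLR history $L_{1:t-1}$), together with $\theta$. By Assumption~\ref{assum:indep}(ii)--(iv), the fresh waiting time $T_{j,t}$ is independent of $\theta$, of past outputs, of past waiting times (i.i.d.\ within LLM, independent across LLMs), and hence of this event. Its mean is therefore $\mu_j$.

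One subtlety needs checking here. $\mathcal{G}_{t-1}$ includes past waiting times, which a policy could in principle use to select $j_t$. Independence of $T_{j,t}$ from $\mathcal{G}_{t-1}$ still holds, because $T_{j,t}$ is a new draw indexed by $t$, independent of all $T_{\cdot,s}$ with $s<t$ and of $(\theta,Y_{1:t-1})$. I expect stating this conditional-independence step cleanly to be the main, albeit mild, obstacle. I will handle it by conditioning on $\mathcal{G}_{t-1}\vee\sigma(\theta)$ and invoking the assumptions.

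Summing over $t$ then gives $\EE_\theta[W_\pi]=\sum_j\mu_j\sum_t\PP_\theta(j_t=j,\tau\ge t)=\sum_j\mu_j\EE_\theta[N_j(\tau)]=\sum_j\mu_j n_{j,\theta}$. The right-hand side is finite because $\sum_j n_{j,\theta}=\EE_\theta[\tau]<\infty$.

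\textbf{Bayes identities.} Finally, \eqref{eq:cost-decomp-bayes} follows by averaging over the prior, $\EE[\cdot]=\xi_A\EE_A[\cdot]+\xi_B\EE_B[\cdot]$, together with the definition \eqref{N_bar} of $\bar N_j$.
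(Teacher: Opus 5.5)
Your proof is correct. For the cost identity it matches the paper's argument: regroup $C_\pi=\sum_j c_jN_j(\tau)$ and take expectations. The paper cites dominated convergence where you use $N_j(\tau)\le\tau$ and linearity.

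For the waiting-time identity you take a genuinely different and more careful route. The paper only says the argument is ``identical with $\mu_j$ in place of $c_j$''. That tacitly replaces $W_\pi$ by $S_\pi=\sum_j\mu_jN_j(\tau)$. The step $\EE_\theta[W_\pi]=\EE_\theta[S_\pi]$ is supplied elsewhere (proof of Proposition~\ref{prop:g-Jensen-lower}, Lemma~\ref{lem:W-mgf}). There the paper conditions on $\mathcal{H}_\tau=\sigma(\{j_s,Y_s\}_{s\le\tau})$ and treats the latencies as conditionally independent of the entire choice/output path, so that $\EE_\theta[W_\pi\mid\mathcal{H}_\tau]=S_\pi$ pointwise.

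Your Wald-type argument expands $W_\pi=\sum_t\sum_j T_{j,t}\mathbf{1}_{\{j_t=j,\,\tau\ge t\}}$ and applies Tonelli. It uses only that the indicator is $\mathcal{G}_{t-1}$-measurable and that the fresh draw $T_{j,t}$ is independent of $\sigma(\theta)\vee\mathcal{G}_{t-1}$. This proves the missing step directly and needs nothing beyond nonnegativity for the interchange. It also stays valid for admissible policies whose rules $\phi_t$ react to previously observed latencies. Definition~\ref{def:admissible} permits these, since $\mathcal{G}_{t-1}$ contains past waiting times. For such policies the paper's pointwise conditional identity can fail, because past latencies then influence $j_{1:\tau}$; the unconditional identity you prove still holds. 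The paper's route is shorter, but yours is the one that actually covers the stated policy class.

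When you write it up, state explicitly that you need \emph{joint} independence of $T_{j,t}$ from $\big(\theta,Y_{1:t-1},(T_{i,s})_{i\in[M],\,s<t}\big)$. This comes from the natural construction of the model, with outputs driven by randomness independent of the waiting-time array. It does not follow from the separate clauses of Assumption~\ref{assum:indep} read one at a time.
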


Then using Jensen's inequality, we can further develop a lower bound for the waiting time penalty.
\begin{proposition}
\label{prop:g-Jensen-lower}
Under any admissible policy $\pi$, we have 
$$
\EE[g(W_\pi)]
\;\ge\;
\xi_Ag\big(\EE_A[W_\pi]\big)+\xi_Bg\big(\EE_B[W_\pi]\big)
\;=\;
\xi_Ag\!\ \left(\sum_{j=1}^M \mu_j n_{j, A}\right) + \xi_Bg\!\ \left(\sum_{j=1}^M \mu_j n_{j, B}\right),
$$
where $n_{j, \theta}$ is the expected query times of LLM $j$ under hypothesis $\theta$. 
\end{proposition}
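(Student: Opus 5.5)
The plan is to condition on the true hypothesis and then apply Jensen's inequality separately under each conditional measure. Under the Bayes measure,
\[
\EE[g(W_\pi)] = \xi_A\,\EE_A[g(W_\pi)] + \xi_B\,\EE_B[g(W_\pi)].
\]
This follows directly from the decomposition $\EE[\cdot]=\xi_A\EE_A[\cdot]+\xi_B\EE_B[\cdot]$ stated in Section~\ref{admin_policy}, because $g(W_\pi)$ is a nonnegative random variable. So it suffices to show, for each $\theta\in\{A,B\}$, that $\EE_\theta[g(W_\pi)]\ge g(\EE_\theta[W_\pi])$. Once we also have $\EE_\theta[W_\pi]=\sum_j \mu_j n_{j,\theta}$, the displayed equality follows.

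First, we reduce to the case where the expectations are finite. By Remark~\ref{rmk:inf_tau}, if $\EE_A[\tau]=\infty$ or $\EE_B[\tau]=\infty$, then $\mathcal{R}(\pi;\alpha)=\infty$. That case is irrelevant for the lower-bound program. Alternatively, one can interpret both sides as possibly infinite: the right-hand side is then $+\infty$ only if the left-hand side is too, since $W_\pi\ge0$ and $g$ is increasing and unbounded. We therefore assume $\EE_\theta[\tau]<\infty$.

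Under this assumption, Lemma~\ref{lem:cost-wait-decomp} gives $\EE_\theta[W_\pi]=\sum_j\mu_j n_{j,\theta}<\infty$. This is where the Wald-type identity enters. It relies on Assumption~\ref{assum:indep}(iii): each waiting time $T_{j_t,t}$ is independent of $\mathcal{G}_{t-1}$ and of $\theta$ given $j_t$, and $\{\tau\ge t\}$ is $\mathcal{G}_{t-1}$-measurable. Hence $\EE_\theta[T_{j_t,t}\mathbf 1\{\tau\ge t\}]=\EE_\theta[\mu_{j_t}\mathbf 1\{\tau\ge t\}]$, and summing over $t$ gives the identity. Since the lemma is already available, this step is just an invocation.

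Next, we apply Jensen under $\PP_\theta$. The function $g(x)=c x^\rho$ with $\rho\ge1$ and $c\ge0$ (Assumption~\ref{assum:g}) is convex on $\RR_+$. The random variable $W_\pi$ is nonnegative and integrable under $\PP_\theta$. Therefore $\EE_\theta[g(W_\pi)]\ge g(\EE_\theta[W_\pi])$, where the left side may be $+\infty$, which is harmless. Multiplying by $\xi_\theta$ and summing over $\theta$ gives the inequality. Substituting the Wald identity gives the equality with $\sum_j\mu_j n_{j,\theta}$.

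The only delicate points are integrability and justifying the conditional Wald identity under $\PP_\theta$ rather than under $\PP$. I expect the Wald step to be the main obstacle, because it needs the latency-independence assumption. It is already handled by Lemma~\ref{lem:cost-wait-decomp}, so the proof reduces to conditioning and Jensen.

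Note that we deliberately do not apply Jensen under the Bayes measure $\PP$. Doing so would only give $g(\EE[W_\pi])$, which is weaker than the stated bound, since $\xi_Ag(x_A)+\xi_Bg(x_B)\ge g(\xi_Ax_A+\xi_Bx_B)$ by convexity. Conditioning on $\theta$ first is what produces the sharper hypothesis-wise form used in $F_\alpha$.
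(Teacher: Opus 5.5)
Your proposal is correct and is essentially the paper's argument. You decompose $\EE[g(W_\pi)]=\xi_A\EE_A[g(W_\pi)]+\xi_B\EE_B[g(W_\pi)]$, apply Jensen under each $\PP_\theta$ once $\EE_\theta[W_\pi]<\infty$ is checked, and substitute $\EE_\theta[W_\pi]=\sum_j\mu_j n_{j,\theta}$ from Lemma~\ref{lem:cost-wait-decomp}.

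Two details of yours are in fact slightly better than the paper's. Your Wald justification conditions on $\mathcal{G}_{t-1}$ and uses $\{\tau\ge t\}\in\mathcal{G}_{t-1}$. This stays valid even if the selection rule uses past waiting times, which conditioning on $\mathcal{H}_\tau$ does not. Your remark against applying Jensen under $\PP$ also avoids the slip in the paper's last displayed line, which writes the bound as $g(\EE[W_\pi])$.
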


Using \cref{lem:cost-wait-decomp} and \cref{prop:g-Jensen-lower}, we are now able to construct a lower bound for the expected cost under any feasible policy $\pi$:
\begin{equation}
  F_\alpha(n^A,n^B)
  := \xi_A \sum_{j=1}^M c_j n_{j,A}
   + \xi_B \sum_{j=1}^M c_j n_{j,B}
   + \xi_A g\!\left(
         \sum_{j=1}^M \mu_j n_{j,A}\right)
       +\xi_B g\!\left( \sum_{j=1}^M \mu_j n_{j,B}
     \right).
\end{equation}
One key observation is that $F_\alpha(n^A,n^B)$ is convex in $\{n^A_j, n_B^j\}_{j\in[M]}$.  
\subsubsection{Convex program and lower bound}
Using Lemma \ref{lem:info-budget-asym} and Proposition \ref{prop:g-Jensen-lower}, we are able to construct the convex optimization \eqref{LB} define below. Note that under any admissible policy $\pi$, the expected query times satisfy both linear constraints in \eqref{LB} and the objective function is a lower bound of the cost function. Thus, the optimal objective value of \eqref{LB} serves as a universal lower bound of the penalty incurred by any admissible policy.  \vspace{2mm}
\begin{align}
    \min \ \ & F_\alpha (n^A, n^B)\tag{LB}\label{LB} \\
   \mbox { subject to }  \quad   & \sum_{j=1}^M I_{j,A} n_{j,A} \ge S_A(\alpha), \notag\\
 & \sum_{j=1}^M I_{j,B} n_{j,B} \ge S_B(\alpha),\notag \\
 & n_{j,A}, n_{j,B} \ge 0  \ \ \ \forall j\in [M]\notag
\end{align} 

\noindent 
Furthermore, according to the definition of $F_\alpha (n^A, n^B)$, it is not difficult to show that, under optimality, the inequality constraints must be tight. We state a lemma. 
\begin{lemma}[Tightness of information constraints]
\label{lem:tightness}
The optimal objective value of \eqref{LB} stays unchanged if the inequalities are replaced by equalities:
$\sum_{j=1}^M I_{j,A} n_{j,A} = S_A(\alpha)$ and $\sum_{j=1}^M I_{j,B} n_{j,B} = S_B(\alpha),$ for any $j\in [M]$. 
\end{lemma}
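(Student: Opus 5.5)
Let $(n^A,n^B)$ be feasible for \eqref{LB}. If it is already optimal with slack in a constraint, we reduce that slack and show the objective does not increase. Since the equality-constrained problem has a smaller feasible set than \eqref{LB}, its optimal value is at least that of \eqref{LB}. So it suffices to show that every feasible point of \eqref{LB} is dominated, in objective value, by a point satisfying both constraints with equality.

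Because $F_\alpha$ separates into an $A$-part depending only on $n^A$ and a $B$-part depending only on $n^B$, and each constraint involves only one of these blocks, we can treat the two hypotheses independently; we describe the $A$-block. Suppose $\sum_j I_{j,A} n_{j,A} > S_A(\alpha)$. We may assume $S_A(\alpha)\ge 0$, which holds for $\alpha$ small since $K_\alpha=O(1)$ while $A_\alpha\to\infty$. Define the scaled vector $\tilde n^A := s\, n^A$ with
\[
s := S_A(\alpha)\Big/\sum_j I_{j,A} n_{j,A} \in [0,1).
\]
Then $\tilde n^A\ge 0$ and $\sum_j I_{j,A}\tilde n_{j,A}=S_A(\alpha)$ exactly.

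Each coordinate of $\tilde n^A$ is no larger than the corresponding coordinate of $n^A$. Since $c_j>0$, the cost term $\sum_j c_j n_{j,A}$ weakly decreases. Since $\mu_j>0$ and $g$ is increasing (Assumption~\ref{assum:g}, $g(x)=cx^\rho$ with $c\ge0$), the term $g(\sum_j\mu_j n_{j,A})$ also weakly decreases. Hence $F_\alpha(\tilde n^A,n^B)\le F_\alpha(n^A,n^B)$. Applying the same scaling to the $B$-block produces a point satisfying both equalities with objective no larger than the original.

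Consequently the infimum over the equality-constrained set is at most the infimum of \eqref{LB}, and the two values coincide. If one also wants attainment, note that the equality-constrained feasible set is compact: it is a product of two simplices scaled by $S_A(\alpha)/I_{j,A}$ and $S_B(\alpha)/I_{j,B}$, with every $I_{j,\theta}>0$. As $F_\alpha$ is continuous, the minimum is attained there, and by the comparison above this minimizer is also optimal for \eqref{LB}.

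The only delicate point is the sign of $S_A(\alpha)$ and $S_B(\alpha)$. If one of them were negative, the equality $\sum_j I_{j,A}n_{j,A}=S_A(\alpha)$ would have no nonnegative solution. We therefore restrict to $\alpha$ small enough that $A_\alpha,B_\alpha > K_\alpha$, which is the asymptotic regime of interest. The strict positivity $c_j>0$ even shows that every optimal solution of \eqref{LB} must have tight constraints, because any slack can be removed with a strict decrease in cost.
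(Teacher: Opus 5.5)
Your proposal is correct and uses the same argument as the paper: rescale $n^A$ by $S_A(\alpha)/\sum_j I_{j,A}n_{j,A}<1$, and similarly for $n^B$, then use that $F_\alpha$ is nondecreasing (strictly, since $c_j>0$) in each block. Your additions refine details the paper leaves implicit: you show every feasible point of \eqref{LB} is dominated rather than presupposing that a minimizer of \eqref{LB} exists, you get attainment from compactness of the equality-constrained set, and you note that $\alpha$ must be small enough that $S_A(\alpha),S_B(\alpha)\ge 0$, without which the equality-constrained problem would be infeasible.
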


\vspace{2mm}
Using this formulation, we can instead focus on the the convex problem with equality constraints, denoted as \eqref{LB2}
\begin{align}
    \Phi_\alpha:= \min \ \ & F_\alpha (n^A, n^B)\tag{LB'}\label{LB2} \\
   \mbox { subject to }  \quad   & \sum_{j=1}^M I_{j,A} n_{j,A} = S_A(\alpha),\ \ \notag \\
 & \sum_{j=1}^M I_{j,B} n_{j,B} = S_B(\alpha), \ \ \ \notag \\
 & n_{j,A}, n_{j,B} \ge 0  \ \ \ \forall j\in [M] \notag
\end{align} 
Our goal is to show that the penalty incurred by the best two-LLM policy matches the universal lower bound $\Phi_\alpha$ asymptotically. To conveniently compare the information gain with cost or waiting time, we define \emph{information fractions}, \emph{cost-efficiency} and \emph{time-efficiency} for each $\theta\in\{A, B\}$ as
\begin{equation*}
   w_{j,\theta} := \frac{I_{j,\theta} \cdot n_{j,\theta}}{S_A(\theta)},\qquad  \kappa_{j,\theta} := \frac{c_j}{I_{j,A}},\qquad \eta_{j,\theta} := \frac{\mu_j}{I_{j,B}}
\end{equation*}

\noindent 
where $c_j$ is the per-query cost and $\mu_j$ is the mean of waiting time. Correspondingly, we define the average per-unit-information cost and waiting time under hypothesis $\theta\in\{A, B\}$:
\begin{equation*}
\label{eq:def-kappa-tau-avg}
  \kappa^\theta(w) := \sum_{j=1}^M \kappa_{j,\theta} w_j,
  \qquad
  \tau^\theta(w) := \sum_{j=1}^M \eta_{j,\theta} w_j.
\end{equation*} 
Using these definitions, we can rewrite the objective function of \eqref{LB2} in terms of $w^A:=(w^A_1,w^A_2,..., w^A_M), w^B:=(w^B_1,w^B_2,..., w^B_M)$, which can be interpreted as the per
\begin{align}
  F_\alpha(n^A, n^B)
  &= \xi_A S_A(\alpha) \kappa^A(w^A)
     + \xi_B S_B(\alpha) \kappa^B(w^B) \nonumber\\
  &\quad + \xi_Ag\! \left( S_A(\alpha) \tau^A(w^A) \right)
     +\xi_B g\!\left(S_B(\alpha) \tau^B(w^B)\right):=  G_\alpha(w^A,w^B). \label{eq:F-in-w}
\end{align}
\noindent 
Thus \eqref{LB2} can be further transformed to the following convex problem 
\begin{align}
   \Phi_\alpha:=  \min G_\alpha(w^A,w^B) \ \  \mbox{ subject to }\ \ (w^A, w^B) \in \Delta_M \times \Delta_M\label{LB'}\tag{ALO}
\end{align}
\noindent
where $\Delta_M := \left\{w \in \RR_+^M : \sum_{j=1}^M w_j = 1\right\}$. Now, leveraging the Karush-Kuhn-Tucker (KKT)  conditions underpinning the optimal solution of a convex programming, we are able to establish the universal lower bound corresponding to a two-LLM information allocation scheme. 

\begin{theorem}
\label{thm:lower-mixture}
For any sufficiently small $\alpha$ we have 
$\EE[C_\pi] + \EE[g(W_\pi)]\;\ge\; \Phi_\alpha$ 
where for any admissible policy $\pi$. Moreover,  there exists a finite constant $\bar{\alpha}>0$, such that when $\alpha<\bar{\alpha}$, we have 
    \begin{align*}
        \min_{(w^A,w^B)\in\Delta_M\times\Delta_M}
        G_\alpha(w^A,w^B) = \min_{(i,j)\in[M]\times[M]} G_\alpha(e_i,e_j),
    \end{align*}
    where $e_i$ is a vector is an all-zero vector except for the $i$-th entry being one. Furthermore, the optimal entries $i^\star$ and $j^\star$ satisfy 
    \begin{align*}
        i^\star\in \argmin_{i\in [M]} \left\{S_A(\alpha)\kappa_{i, A} + g\!\left(S_A(\alpha)\eta_{i, A}\right)\right\} \qquad j^\star\in \argmin_{j\in [M]} \left\{S_B(\alpha)\kappa_{i, B} + g\!\left(S_B(\alpha)\eta_{i, B}\right)\right\}. 
    \end{align*}
\end{theorem}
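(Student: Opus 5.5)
The argument has two parts: the universal lower bound, and the reduction of \eqref{LB'} to vertices of the simplices.

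\emph{Lower bound.} Fix an admissible $\pi$. By Remark~\ref{rmk:inf_tau} we may assume $\EE_A[\tau],\EE_B[\tau]<\infty$, since otherwise the bound is trivial. Then Lemma~\ref{lem:info-budget-asym} shows that the induced allocation $(n^A,n^B)$ from \eqref{n_j} satisfies the two information constraints \eqref{eq:two-info-constraints}. For $\alpha$ small enough that $S_A(\alpha),S_B(\alpha)>0$, this is the feasible set of \eqref{LB}. Lemma~\ref{lem:cost-wait-decomp} gives $\EE[C_\pi]=\xi_A\sum_j c_j n_{j,A}+\xi_B\sum_j c_j n_{j,B}$. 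Proposition~\ref{prop:g-Jensen-lower} bounds $\EE[g(W_\pi)]$ below by the waiting part of $F_\alpha$. Hence $\mathcal R(\pi;\alpha)\ge F_\alpha(n^A,n^B)$, which is at least the optimal value of \eqref{LB}. By Lemma~\ref{lem:tightness} that value equals $\Phi_\alpha$, the value of \eqref{LB2}.

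\emph{Reduction to $G_\alpha$.} The change of variables $w_{j,\theta}=I_{j,\theta}n_{j,\theta}/S_\theta(\alpha)$ is a bijection between the feasible set of \eqref{LB2} and $\Delta_M\times\Delta_M$. Under it, $F_\alpha$ becomes $G_\alpha$ in \eqref{eq:F-in-w}, with $\kappa_{j,\theta}=c_j/I_{j,\theta}$ and $\eta_{j,\theta}=\mu_j/I_{j,\theta}$ read hypothesis-wise as in Table~\ref{tab:notation}. The key structural fact is that $G_\alpha$ is separable: $G_\alpha(w^A,w^B)=\xi_A H_A(w^A)+\xi_B H_B(w^B)$, where
\[
H_\theta(w)=S_\theta\,\kappa^\theta(w)+g\big(S_\theta\,\tau^\theta(w)\big).
\]
The joint minimization therefore splits into two independent problems over $\Delta_M$. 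This also explains why the optimal indices $i^\star$ and $j^\star$ are given by separate argmins.

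\emph{Vertex optimality for each $H_\theta$.} This is the main step. Write $a_j=S_\theta\kappa_{j,\theta}$ and $b_j=S_\theta\eta_{j,\theta}$, so that $H_\theta(w)=\langle a,w\rangle+g(\langle b,w\rangle)$. Here $g$ is convex, so $H_\theta$ is convex but not concave, and vertex optimality does not follow from concavity. Instead I use the image map $w\mapsto(x,y)=(\langle a,w\rangle,\langle b,w\rangle)$. It sends $\Delta_M$ onto the polytope $P=\mathrm{conv}\{(a_j,b_j)\}$, and we must minimize $x+g(y)$ over $P$. This objective is increasing in both coordinates, so a minimizer lies on the lower-left Pareto boundary of $P$. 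That boundary is a chain of edges between consecutive Pareto-extreme vertices $(a_j,b_j)$.

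On an edge between vertices $j$ and $k$, the objective is $\phi(s)=(1-s)a_j+sa_k+g\big((1-s)b_j+sb_k\big)$. This is convex in $s$, so an interior minimum is possible in principle. To rule it out for small $\alpha$, I use scaling. Both $a$ and $b$ are proportional to $S_\theta=\Theta(\log(1/\alpha))$, so
\[
\phi(s)=S_\theta\big[\text{linear part}\big]+cS_\theta^{\rho}\big((1-s)\eta_j+s\eta_k\big)^\rho .
\]
For $\rho>1$ the second term dominates once $S_\theta$ is large. The minimizer is then attained at the endpoint with smaller $\eta$, except on edges where $\eta_j=\eta_k$; there $\phi$ is linear and again an endpoint is optimal. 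The remaining case is when one point has both the smallest $\eta$ and nearly the smallest $\kappa$. I would handle it by a derivative check. The interior stationarity condition $(\kappa_k-\kappa_j)+c\rho S^{\rho-1}(\eta_k-\eta_j)\bar\eta^{\rho-1}=0$ has no solution in $(0,1)$ once $S^{\rho-1}$ exceeds a finite constant determined by the finitely many pairs. That constant gives $\bar\alpha$. For $\rho=1$ the objective is linear in $w$ and a vertex is trivially optimal.

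I expect the real obstacle to be this last step. Along an edge whose endpoints trade a smaller $\eta$ against a larger $\kappa$, convexity of $g$ can genuinely put the minimizer in the interior for moderate $S_\theta$. The claim needs the large-$S$ regime, and the boundary case $\eta_j=\eta_k$, $\kappa_j\ne\kappa_k$ must be treated separately. Once every $H_\theta$ is shown to be minimized at a vertex $e_i$, the formula for $i^\star$ and $j^\star$ follows immediately from $H_\theta(e_i)=S_\theta\kappa_{i,\theta}+g(S_\theta\eta_{i,\theta})$.
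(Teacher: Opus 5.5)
Your proposal is correct. The lower-bound half uses the same chain as the paper: Remark~\ref{rmk:inf_tau}, Lemmas~\ref{lem:info-budget-asym}, \ref{lem:cost-wait-decomp}, \ref{lem:tightness} and Proposition~\ref{prop:g-Jensen-lower}. Your vertex-optimality step takes a different route, though its analytic core is the same.

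The paper works directly on an optimizer in the simplex. The KKT conditions force any two active coordinates $j\neq k$ to satisfy $(\kappa_{j,A}-\kappa_{k,A})+g'(y_A^\star)(\eta_{j,A}-\eta_{k,A})=0$. Since $y_A^\star\ge \eta_{\min}S_A(\alpha)$, for $\rho>1$ the derivative $g'(y_A^\star)$ eventually exceeds every finite ratio $K_{jk}$. So only coordinates with identical $(\kappa,\eta)$ can be active together, and their mass can be merged.

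You instead push the simplex forward to the polygon $P=\mathrm{conv}\{(a_j,b_j)\}\subset\mathbb{R}^2$. Monotonicity of $x+g(y)$ places a minimizer on a Pareto edge, and a first-order check along each segment rules out interior minima. Since $c\rho S^{\rho-1}\bar\eta^{\rho-1}=g'(S\bar\eta)$, your stationarity equation is exactly the paper's balance equation evaluated at $y=S\bar\eta(s)$.

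What each route buys:
\begin{itemize}
\item The KKT argument handles supports of arbitrary size in one stroke, with no geometric reduction.
\item Your argument makes explicit the ``extreme point of a planar convex set'' picture described in the abstract.
\item It only ever examines two-point mixtures.
\item It yields an explicit threshold, $S_\theta(\alpha)^{\rho-1}>\max_{\eta_j\neq\eta_k}|\kappa_k-\kappa_j|/(c\rho\,|\eta_k-\eta_j|\,\eta_{\min}^{\rho-1})$, that determines $\bar\alpha$; the paper leaves this implicit.
\end{itemize}

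Two small clean-ups:
\begin{itemize}
\item The heuristic ``second term dominates'' sentence and the separate ``remaining case'' are unnecessary. The derivative bound already covers every pair uniformly, and $\eta_j=\eta_k$ just makes $\phi$ linear.
\item The degenerate case $c=0$ should be folded into your linear case, since your threshold divides by $c$.
\end{itemize}
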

\vspace{2mm}
An important implication of \cref{thm:lower-mixture} is that, for sufficiently small $\alpha$, the convex problem
admits an optimal solution that uses at most two LLMs, with one LLM active under $\theta=A$ and one LLM active under
$\theta=B$. 

To illustrate the intuition, let $(w^{A\star},w^{B\star})$
be an optimal solution to \eqref{LB'}.  
Suppose there exist two distinct indices $j\neq k$ such that $w^{A\star}_j>0$ and $w^{A\star}_k>0$ (the discussion on $w^{B*}$ will follow the same logic). The KKT stationarity
conditions imply that these active coordinates must have the same marginal contribution to the objective, which yields
\begin{equation}
\label{eq:kkt-two-active-A}
(\kappa_{j,A}-\kappa_{k,A}) + g'(y^\star)\,(\eta_{j,A}-\eta_{k,A}) = 0.
\end{equation}
Equation \eqref{eq:kkt-two-active-A} provides a necessary condition that must be satisfied by any pair $j,k$ that are
both active in $w^{A\star}$. In particular, to satisfy this odentity, one of the following two cases must occur.
\begin{itemize}
\item \textbf{Case 1.} $\eta_{j,A}=\eta_{k,A}$ and $\kappa_{j,A}=\kappa_{k,A}$. In this case, the two LLMs are
indistinguishable in both time efficiency and cost efficiency. Therefore, we may construct an alternative optimal solution to the convex problem by shifting the entire mass
$w^{A\star}_j+w^{A\star}_k$ to either coordinate $j$ or $k$ while keeping all other coordinates unchanged.

\item \textbf{Case 2.} $\eta_{j,A}\neq \eta_{k,A}$ and $g'(y^\star)=K_{jk}$, where
\[
K_{jk}:=\frac{\kappa_{k,A}-\kappa_{j,A}}{\eta_{j,A}-\eta_{k,A}}.
\]
In this case, the two LLMs are genuinely different, but the identity $g'(y^\star)=K_{jk}$ implies that they have the same
marginal objective tradeoff between time efficiency and cost efficiency at the optimum. Consequently, any convex
combination supported on $\{j,k\}$ is optimal. Similar to case 1, we can construct an alternative optimal solution by shifting
the entire mass $w^{A\star}_j+w^{A\star}_k$ to either $j$ or $k$ while keeping all other coordinates unchanged. Moreover, we can show that as $\alpha\rightarrow 0$, this case can only occur when $\rho=1$ because there are only finitely many critical values $\{K_{jk}\}$ and when $\rho>1$, under Assumption~\ref{assum:g}, as $\alpha\to 0$,  the derivative $g'(y^\star)\to\infty$  and exceeds every finite $K_{jk}$.
\end{itemize}

Applying the same argument symmetrically to $w^{B\star}$, we conclude that for all sufficiently small $\alpha$ there
exists an optimal solution to \eqref{LB'} that is supported on at most one coordinate in $w^A$ and at most one coordinate
in $w^B$. Equivalently, the benchmark allocation can be chosen to use only two active LLMs at most, one for each hypothesis.

\subsection{Matching the lower bound} \label{match_LB}
Motivated by the optimal solution of the convex program, we consider the policy $\pi_{j_A^\star, j_B^\star}^{(2)}$ where 
\[(j_A^\star, j_B^\star)\in\arg\min_{(i,j)\in[M]\times[M]} G_\alpha(e_i,e_j)
\] 
is a pair minimizing the convex program. According to the definition in \eqref{eq:F-in-w},  we can further write  
\begin{align*}
  \Phi_{\alpha}= G_\alpha(e_{j_A^\star},e_{j_B^\star})  
  &= \xi_A S_A(\alpha) \kappa_{j_A^\star, A}     + \xi_B S_B(\alpha) \kappa_{j_B^\star, B} + \xi_Ag\! \left( S_A(\alpha) \eta_{j_A^\star, A}\right)     + \xi_B g\!\left( S_B(\alpha) \eta_{j_B^\star, B}\right),
\end{align*}

\noindent 
Ideally, we can show that under the two-LLM policy $\tilde \pi= \pi_{j_A^\star, j_b^\star}$, we have 
\[
  \EE[C_{\tilde \pi}] + \EE[g(W_{\tilde \pi})]
  = \Phi_{\alpha}(1+ o(1))\mbox{ as } \alpha \rightarrow 0\]
which  implies that $\tilde \pi$ is asymptotically optimal, and this is exactly the first part of \cref{thm:two-llm-upper}. To establish the results, we need to match both the query cost component and the waiting time penalty component.   

\subsubsection{Characterizing the expected query cost and expected waiting time}
We state a proposition.  

\begin{proposition}[Expected cost and waiting time]
\label{lem:EC-EW-asym}
Under the sign-based policy $\pi^{(2)}_{j_A^\star,j_B^\star}$, the expected query cost and expected waiting time satisfy, for all sufficiently small $\alpha$, we have
\begin{align*}
  \EE[C_\pi]
  = \xi_A S_A(\alpha)\,\kappa_{j_A^\star,A}
     + \xi_B S_B(\alpha)\,\kappa_{j_B^\star,B}
     + O(1), \quad 
  \EE[W_\pi] 
  = \xi_A S_A(\alpha)\,\eta_{j_A^\star,A}
     + \xi_B S_B(\alpha)\,\eta_{j_B^\star,B}
     + O(1). 
\end{align*}
\end{proposition}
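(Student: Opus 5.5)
The plan is to reduce everything to the hypothesis-conditional usage counts $n_{j,\theta}=\EE_\theta[N_j(\tau)]$ and then use \cref{lem:cost-wait-decomp}. For $\pi=\pi^{(2)}_{j_A^\star,j_B^\star}$ that lemma gives $\EE_A[C_\pi]=c_{j_A^\star}n_{j_A^\star,A}+c_{j_B^\star}n_{j_B^\star,A}$ and $\EE_A[W_\pi]=\mu_{j_A^\star}n_{j_A^\star,A}+\mu_{j_B^\star}n_{j_B^\star,A}$, and symmetrically under $B$. It therefore suffices to prove two facts under $\theta=A$:
\[
n_{j_B^\star,A}=O(1)\qquad\text{and}\qquad n_{j_A^\star,A}=\frac{S_A(\alpha)}{I_{j_A^\star,A}}+O(1).
\]
The mirror-image facts under $\theta=B$ follow by the same argument. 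Substituting $\kappa_{j_A^\star,A}=c_{j_A^\star}/I_{j_A^\star,A}$ and $\eta_{j_A^\star,A}=\mu_{j_A^\star}/I_{j_A^\star,A}$, and averaging with weights $\xi_A,\xi_B$, then gives both displayed identities. If $j_A^\star=j_B^\star$ the policy is a single-LLM SPRT and only the second step below is needed.

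\textbf{Step 1 (time on the wrong side is $O(1)$).} Under $\PP_A$ every LLM has strictly positive drift $I_{j,A}>0$. Hence $L_t$, whichever specialist is used, is a process with increments bounded by $C_\ell$ and conditional drift at least $\underline I:=\min(I_{j_A^\star,A},I_{j_B^\star,A})>0$. Since $j_B^\star$ is queried only when $L_{t-1}<0$, we have $N_{j_B^\star}(\tau)\le\sum_{t}\mathbf 1\{L_{t-1}<0\}$, and we split this count into excursions below $0$.
\begin{itemize}
\item Choose $s>0$ small enough that $\EE_A[e^{-s\ell_j(Y)}]\le 1$ for both specialists, so $e^{-sL_t}$ is a supermartingale. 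Optional stopping then shows that, from any level $x\ge0$, the probability of ever entering $(-\infty,0)$ is at most some $q<1$, uniformly in $\alpha$. The number of excursions below zero is therefore dominated by a geometric variable.
\item Each excursion starts at a level no lower than $-C_\ell$. By Wald's identity applied to the drift decomposition \eqref{decomp}, its expected length is at most $2C_\ell/\underline I$. Killing the process at $-B_\alpha$ only shortens excursions.
\end{itemize}
Combining the two bullets, $n_{j_B^\star,A}\le C$ for a constant $C$ independent of $\alpha$.

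\textbf{Step 2 (Wald identity and overshoot).} Because $\EE_A[\tau]<\infty$, we have $\EE_A[L_\tau]=I_{j_A^\star,A}n_{j_A^\star,A}+I_{j_B^\star,A}n_{j_B^\star,A}$, as in \eqref{eq:L_tau}. On $\{D=A\}$ the stopped value satisfies $A_\alpha\le L_\tau<A_\alpha+C_\ell$. On $\{D=B\}$ we have $L_\tau\ge -B_\alpha-C_\ell$, and the standard change of measure at the stopping time gives $\PP_A(D=B)\le e^{-B_\alpha}\cdot O(1)=O(\alpha)$. Together these yield
\[
\EE_A[L_\tau]=A_\alpha+O(1)+O\bigl(\alpha\log(1/\alpha)\bigr)=S_A(\alpha)+O(1),
\]
using $K_\alpha=O(1)$ from \cref{lem:info-budget-asym}. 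Subtracting the $O(1)$ contribution of $j_B^\star$ from Step 1 and dividing by $I_{j_A^\star,A}$ gives the claimed formula for $n_{j_A^\star,A}$.

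\textbf{Main obstacle.} The delicate point is Step 1: the bound on the time spent in the wrong half-line must be uniform in $\alpha$, and it must hold even though the policy switches between two specialists with different increment distributions. The common exponential supermartingale $e^{-sL_t}$ is chosen to handle both specialists at once, giving the geometric control on the number of returns, while Wald's identity bounds each excursion's length. I would verify carefully that a single $s$ works for both specialists; this is possible because each of the two log-MGFs has derivative $-I_{j,A}<0$ at $s=0$.
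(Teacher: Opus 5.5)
Your proposal is correct and has the same skeleton as the paper's proof. That skeleton has three pieces: the reduction via Lemma~\ref{lem:cost-wait-decomp}, a bound on the wrong-side count uniform in $\alpha$ (the paper's Lemma~\ref{lem:wrong-side-finite}), and the Wald identity with two-sided bounds on $\EE_A[L_\tau]$ (Lemma~\ref{lem:ELtau-vs-boundaries}). The paper assembles these in Lemma~\ref{lem:query-count-asymptotics}. Your Step~2 correctly uses the lower bound $\EE_A[L_\tau]\ge A_\alpha-K_\alpha$ to bound $n_{j_A^\star,A}$ from below. The paper's write-up of that lemma instead obtains it by treating the upper bound $A_\alpha+C_\ell$ as an equality.

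The genuine difference is Step~1. The paper dominates $L_t$ by an auxiliary i.i.d.\ walk with drift $\min(I_{j_A^\star,A},I_{j_B^\star,A})$ and then applies Hoeffding. However, it justifies the pathwise coupling $L_s-L_t\ge\tilde L_s-\tilde L_t$ only by comparing conditional means. That does not give stochastic domination when the two specialists have different increment laws: the stochastically largest law lying below two such two-point laws can have negative mean.

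Your exponential supermartingale is robust to the switching, because the one-step bound $\EE_A[e^{-s\ell_{j_t}(Y_t)}\mid\mathcal G_{t-1}]\le 1$ holds whichever LLM is queried. The common-$s$ issue you flag is immediate. The map $s\mapsto\log\EE_A[e^{-s\ell_j(Y)}]$ is strictly convex and vanishes at $s=0$ and at $s=1$ (the latter is the martingale of Lemma~\ref{lem:LR-martingale}). Hence every $s\in(0,1)$ gives a strict inequality for every $j$ at once.

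One step needs a small repair. Optional stopping started at $x\ge0$ only yields $\PP(\text{ever below }0)\le e^{-sx}$, which equals $1$ at $x=0$. Excursions can also end at levels arbitrarily close to $0$, so this alone does not give a uniform $q<1$. Condition on one step instead. From any $x\ge0$ the policy queries $j_A^\star$, which with probability $\gamma_{j_A^\star,A}$ lifts the walk by $h:=\ell_{j_A^\star}(A)>0$, so $q\le 1-\gamma_{j_A^\star,A}(1-e^{-sh})<1$. The geometric count additionally uses that $L_t$ is a time-homogeneous Markov chain under the sign rule.

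Simpler still, you can drop the excursions altogether. For $s\in(0,1)$ let $\rho_s:=\max_{j\in\{j_A^\star,j_B^\star\}}\EE_A[e^{-s\ell_j(Y)}]<1$. Then $\EE_A[e^{-sL_t}]\le\rho_s^t$ for the sign-rule chain continued past $\tau$. Hence $\PP_A(L_t<0)\le\rho_s^t$ and $n_{j_B^\star,A}\le\sum_{t\ge0}\rho_s^t=(1-\rho_s)^{-1}$, uniformly in $\alpha$. This is exactly the exponential tail the paper later uses in Lemma~\ref{convergence} and Claim~1 of Appendix~\ref{sec:ub_proof}, now obtained rigorously.
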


\vspace{2mm}

\noindent  
The proof for \cref{lem:EC-EW-asym} relies on bounding the expected \emph{wrong-side} query counts (i.e., $n_{j_A^\star, B}, n_{j_B^\star, A}$). Note that according to \cref{lem:cost-wait-decomp} and the definition of $\bar N_j$ in \eqref{N_bar}, we have 
\begin{align}
  \EE[C_\pi] 
&= c_{j_A^\star} \cdot (n_{j_A^\star, A} + n_{j_B^\star, A}) + c_{j_B^\star} \cdot (n_{j_B^\star, A}, n_{j_B^\star, B})\notag \\
&= c_{j_A^\star} \cdot  \xi_A \cdot n_{j_A^\star, A}  + c_{j_B^\star} \cdot \xi_B\cdot n_{j_B^\star, B} + \left(  c_{j_A^\star} \cdot \xi_B \cdot n_{j_A^\star, B} + c_{j_B^\star} \cdot \xi_A\cdot n_{j_B^\star, A} \right) \notag\\
&= \kappa_{j_A^\star, A} \cdot  \xi_A \cdot S_A(\alpha) \cdot \frac{n_{j_A^\star, A} \cdot I_{j_A^\star, A}}{S_A(\alpha)} + \kappa_{j_B^\star, B} \cdot \xi_B \cdot S_B(\alpha) \cdot \frac{n_{j_B^\star, B}\cdot I_{j_B^\star, B} }{S_B(\alpha)} \notag \\
&\hspace{4mm}+ \left(  c_{j_A^\star} \cdot \xi_B \cdot n_{j_A^\star, B} + c_{j_B^\star} \cdot \xi_A\cdot n_{j_B^\star, A} \right)
\label{C-pi}
\end{align}
\noindent 
Similarly, we have  
\begin{align}
  \EE[W_\pi] 
&= \eta_{j_A^\star, A} \cdot  \xi_A \cdot S_A(\alpha) \cdot \frac{n_{j_A^\star, A} \cdot I_{j_A^\star, A}}{S_A(\alpha)} +  \eta_{j_B^\star, B} \cdot \xi_B \cdot S_B(\alpha) \cdot \frac{n_{j_B^\star, B}\cdot I_{j_B^\star, B} }{S_B(\alpha)} \notag\\
&\hspace{4mm}+ \left(  c_{j_A^\star} \cdot \xi_B \cdot n_{j_A^\star, B} + c_{j_B^\star} \cdot \xi_A\cdot n_{j_B^\star, A} \right). \label{W-pi}
\end{align}

\noindent
By establishing \eqref{C-pi} and \eqref{W-pi}, we can then delegate the proof of \cref{lem:EC-EW-asym} to the analysis on the relations between $S_\theta(\alpha)$ and $n_{j_A^\star, \theta}, n_{j_B^\star,\theta}$ for $\theta=A, B$. We state a lemma. \vspace{1mm}

\begin{lemma}
\label{lem:query-count-asymptotics}
Under the policy  $\pi:=\pi_{j_A^\star, j_B^\star}^{(2)}$, for all sufficiently
small $\alpha$, we have 
\begin{align*}
I_{j_A^\star,A}\,n_{j_A^\star,A}=  S_A(\alpha) +O(1), \quad  I_{j_B^\star,B}\,n_{j_B^\star,B}= S_B(\alpha) + O(1).
\end{align*}
Moreover, we have $n_{j_B^\star,A} = O(1),\   n_{j_A^\star,B} = O(1)$.
\end{lemma}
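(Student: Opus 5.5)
We analyze the LLR process under the sign-based policy as a random walk with state-dependent increments. Under $\PP_A$, the walk uses $j_A^\star$ on $\{L_{t-1}\ge 0\}$ and $j_B^\star$ on $\{L_{t-1}<0\}$. Both have positive drift under $A$: $\EE_A[\ell_{j}(Y)]=I_{j,A}>0$ for every $j$. Hence $L_t$ has drift at least $\underline I_A:=\min\{I_{j_A^\star,A},I_{j_B^\star,A}\}>0$ everywhere, and its increments are bounded by $C_\ell$. Symmetrically, under $\PP_B$ the drift is at most $-\underline I_B<0$. The proof has three steps.

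\textbf{Step 1: wrong-side counts.} We show $n_{j_B^\star,A}=O(1)$ uniformly in $\alpha$. Note $N_{j_B^\star}(\tau)\le \sum_{t\ge 1}\mathbf 1\{L_{t-1}<0\}$, where the right-hand side is the total time the walk spends below zero. Consider the walk without stopping; stopping only reduces this count.

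Since the drift is at least $\underline I_A$ and the increments are bounded, the Azuma--Hoeffding (or Freedman) bound applied to the martingale $M_t^{(A)}$ in \eqref{decomp} gives
\[
\PP_A(L_{t}<0)\le \PP_A\big(M_t^{(A)}\le -\underline I_A t\big)\le \exp\!\big(-\underline I_A^2 t/(2C_\ell^2)\big).
\]
Here we use $L_0=0$ and the fact that the drift sum is at least $\underline I_A t$ along every path. Summing over $t$ gives a finite constant, independent of $\alpha$ and of the thresholds. The same argument under $\PP_B$ gives $n_{j_A^\star,B}=O(1)$.

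\textbf{Step 2: Wald identity and the upper bound on information.} Optional stopping applied to $M^{(A)}$ is justified because $\EE_A[\tau]<\infty$ (the positive drift yields a geometric tail for $\tau$) and the increments are bounded. It gives
\[
\EE_A[L_\tau]=I_{j_A^\star,A}\,n_{j_A^\star,A}+I_{j_B^\star,A}\,n_{j_B^\star,A}.
\]
Since overshoot is at most $C_\ell$, we have $L_\tau\le A_\alpha+C_\ell$ on every path. Combined with Step 1, this yields
\[
I_{j_A^\star,A}\,n_{j_A^\star,A}\le A_\alpha+C_\ell+O(1)=S_A(\alpha)+O(1),
\]
using $K_\alpha=O(1)$ from Lemma~\ref{lem:info-budget-asym}.

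\textbf{Step 3: the lower bound on information.} For the reverse direction, Lemma~\ref{lem:info-budget-asym} applied to this admissible policy gives
\[
I_{j_A^\star,A}\,n_{j_A^\star,A}+I_{j_B^\star,A}\,n_{j_B^\star,A}\ge S_A(\alpha),
\]
and the second term is $O(1)$. Alternatively, one can argue directly: $L_\tau\ge -B_\alpha-C_\ell$, and $\PP_A(L_\tau\le -B_\alpha)\le \alpha e^{O(1)}$ by the change-of-measure bound. Therefore $\EE_A[L_\tau]\ge A_\alpha(1-O(\alpha))-O(\alpha\log(1/\alpha))=A_\alpha-O(1)$. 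The $B$ case is symmetric.

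\textbf{Main obstacle.} The delicate step is Step 1. We must ensure the bound on time spent below zero is uniform in $\alpha$ and does not depend on the threshold $B_\alpha$. The concern is that the adaptive alternation could conceivably generate many crossings of zero. The pathwise drift lower bound avoids this: because the predictable drift dominates $\underline I_A t$ regardless of which LLM is chosen, the exponential tail holds irrespective of the switching pattern. The remaining point is a routine verification that stopping (absorption) cannot increase the count, which holds since the indicator sum is monotone in the horizon.
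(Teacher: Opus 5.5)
Your proof is correct and follows the paper's skeleton: the Wald identity $\EE_A[L_\tau]=I_{j_A^\star,A}n_{j_A^\star,A}+I_{j_B^\star,A}n_{j_B^\star,A}$, the overshoot bound $L_\tau\le A_\alpha+C_\ell$ for the upper estimate, and an $\alpha$-free bound on the wrong-side count.

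You genuinely differ in how you prove that last ingredient. The paper (Lemma~\ref{lem:wrong-side-finite}) compares $L_t$ with an auxiliary i.i.d.\ walk of drift $m_A$. It asserts an almost-sure coupling $L_s\ge L_t+(\tilde L_s-\tilde L_t)$ on the basis of a comparison of conditional means, and then applies Hoeffding to the i.i.d.\ walk. A comparison of means alone does not produce such a coupling, so that step is the weak link. You instead apply Azuma--Hoeffding directly to $M_t^{(A)}$ for the unstopped continuation of the policy, using only the pathwise drift bound $\sum_{s\le t}I_{j_s,A}\ge \underline I_A t$. This buys several things:
\begin{itemize}
\item the argument is self-contained and holds regardless of the switching pattern;
\item it works equally for the level $-\delta$ used in the appendix and the level $0$ in Definition~\ref{def:2llm};
\item it directly gives the exponential tail $\PP_A(j_t=j_B^\star)\le e^{-ct}$ and the second-moment bound on $N_{j_B^\star}$ that the paper later reuses in Lemma~\ref{convergence} and Claim~1.
\end{itemize}

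Your lower estimate is also sourced correctly. You take $\EE_A[L_\tau]\ge A_\alpha-K_\alpha$ from Lemma~\ref{lem:info-budget-asym} (or from the direct error-probability bound). The paper's written proof instead manipulates \eqref{eq:drift-eq-with-wrong-side} as if it were an equality, so your Step~3 supplies the step it actually needs.

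One detail is worth making explicit. The minimizing pair $(j_A^\star,j_B^\star)$ may change with $\alpha$, so define $\underline I_A:=\min_{j\in[M]}I_{j,A}$ (and similarly under $B$). This makes the Step~1 constant visibly uniform in $\alpha$.
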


According to the \cref{lem:info-budget-asym}, \cref{lem:query-count-asymptotics} is sufficient to prove \cref{lem:EC-EW-asym}. However, establishing the result for \cref{lem:query-count-asymptotics} is complicated since both $n_{j_A^\star, B}$ and $n_{j_B^\star, A}$ are governed by the LLR evolution and the stopping rule defined in \cref{lemma1}. The key idea is to construct a simple random walk based on the LLR evolution that starts from the same initial level and upper bounds the upward drifts of LLR. This construction delegates the task of bounding $n_{j_A^\star, B}$ and $n_{j_B^\star, A}$ to bounding the downward drifts of the constructed random walk, which is achieved by leveraging Hoeffding's inequality.

\subsubsection{Asymptotics of the waiting cost  $\EE[g(W_\pi)]$.} We now address the challenge of matching the expected waiting time penalty incurred by $\pi=\pi^{(2)}_{j_A^\star, j_B^\star}$ to its deterministic benchmark $\xi_A g\! \left( \sum_{j=1}^M \mu_j n_{j,A}\right)+\xi_B g\!\left(\sum_{j=1}^M \mu_j n_{j,B}\right)$ underpinning the lower bound $\Phi_\alpha$. Let $W_\pi^{(\theta)}$ denote the total waiting time, conditioning on the hypothesis $\theta$. To establish the results, the key step is to anchor 
\[
W_\pi=\sum_{t=1}^\tau T_{j_t,t}=\sum_{n=1}^{N^A} T_{j_A^\star, n} + \sum_{n'=1}^{N^B} T_{j_B^\star, n'}  \mbox{ under a hypothesis }\theta,
\]
to its conditional mean 
\[
m_\alpha^\theta= \EE_\theta[W^\pi]=\mu_{j_A^\star}\cdot n_{j_A^\star, \theta} + \mu_{j_B^\star}\cdot n_{j_B^\star, \theta}.
\]
According to \cref{lem:query-count-asymptotics}, under each hypothesis $\theta=A, B$, when $\alpha\rightarrow 0$, the query will gradually concentrate on $j_\theta^\star$ so that $\EE_\theta[N^\theta] = \EE_\theta[\tau] + O(1)$. Motivated by this insight, we are going to match $W_\pi$ to its mean by establishing the convergence result $N^\theta\rightarrow \EE_\theta[\tau]$ for each $\theta$ and further control the deviation of the cumulative waiting time to its mean conditioned on the realized $\tau$. We state a lemma.

\begin{lemma}
\label{convergence}
For each $\theta\in\{A,B\}$, we have 
\begin{enumerate}
    \item For $S_\pi:= \sum_{t=1}^\tau \mu_{j_t}$, we have 
$
\EE_\theta\big[(W_\pi - S_\pi)^2\big] \;\le\; \psi^2\,\EE_\theta[\tau].
$
\item We have $\PP_\theta(|\tau-\EE_\theta[\tau]| >k\cdot \sqrt{\EE_\theta[\tau]\cdot \log \EE_\theta[\tau] }) = O\left(\frac{1}{(m_\alpha^A)^2}  
\right)$  for some constant $k>0$.
\end{enumerate}
\end{lemma}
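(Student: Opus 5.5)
For part 1, I would condition on the selection path and use a martingale argument. Define $D_t := T_{j_t,t}-\mu_{j_t}$, so that $W_\pi - S_\pi = \sum_{t=1}^\tau D_t$. Let $\mathcal{F}_t$ be the filtration generated by $\mathcal{G}_t$ together with $\theta$. The choice $j_t$ is $\mathcal{G}_{t-1}$-measurable. By Assumption~\ref{assum:indep}(ii)--(iv), conditional on $\mathcal{F}_{t-1}$ and $j_t$, the waiting time $T_{j_t,t}$ is a fresh draw with mean $\mu_{j_t}$, independent of $\theta$ and of past outputs. Hence $\EE_\theta[D_t\mid\mathcal{F}_{t-1}]=0$. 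Assumption~\ref{assum:wait} gives $\EE[D_t^2\mid \mathcal{F}_{t-1}]\le \psi^2$, since the sub-Gaussian mgf bound implies the variance is at most $\psi^2$ (expand to second order in $\lambda$). Thus $Z_n := \sum_{t=1}^{n\wedge\tau} D_t$ is a square-integrable martingale under $\PP_\theta$, and orthogonality of increments gives
\[
\EE_\theta[Z_n^2] = \EE_\theta\Big[\sum_{t=1}^{n\wedge\tau}\EE[D_t^2\mid\mathcal{F}_{t-1}]\Big]\le \psi^2\,\EE_\theta[n\wedge \tau].
\]
Since $\EE_\theta[\tau]<\infty$, $Z_n$ is bounded in $L^2$ and converges a.s. and in $L^2$ to $W_\pi-S_\pi$. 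Letting $n\to\infty$ (Fatou or $L^2$ convergence) yields the claim, which is Wald's second identity in inequality form.

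For part 2, I would control $\tau$ through the drift--martingale decomposition \eqref{decomp}. Under $\theta=A$, once the walk is on the $A$ side it uses $j_A^\star$ with drift $I:=I_{j_A^\star,A}$. By Lemma~\ref{lem:query-count-asymptotics}, the number of wrong-side queries $N_{j_B^\star}(\tau)$ has mean $O(1)$. I would strengthen this to an exponential tail: the LLR is a random walk with positive drift under $A$, so excursions below $0$ and the total number of visits to $\{L<0\}$ have geometric tails, by the same Hoeffding random-walk comparison used for that lemma. Next, fix $n_\pm := \EE_A[\tau]\pm k\sqrt{\EE_A[\tau]\log \EE_A[\tau]}$. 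The event $\{\tau>n_+\}$ implies $L_{n_+}<A_\alpha$ (or a wrong-side stop, which has probability $O(\alpha)$). Because $\sum_{s\le n_+} I_{j_s,A}\ge I\,(n_+ - N_{j_B^\star})$ and $I\,\EE_A[\tau]=A_\alpha+O(1)$, this event forces either $N_{j_B^\star}\gtrsim \sqrt{\EE_A[\tau]\log\EE_A[\tau]}$ or $M^{(A)}_{n_+}\le -c\,k\sqrt{n_+\log n_+}$. Azuma--Hoeffding (increments bounded by $2C_\ell$) bounds the latter by $\exp(-c'k^2\log n_+)=n_+^{-c'k^2}$. Choosing $k$ large makes this $O(\EE_A[\tau]^{-2})$. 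Since $m_\alpha^A\asymp \EE_A[\tau]\asymp\log(1/\alpha)$ by Lemma~\ref{lem:query-count-asymptotics}, this is $O((m^A_\alpha)^{-2})$. The lower tail $\{\tau<n_-\}$ is handled similarly: $L_\tau\ge A_\alpha$ at $\tau<n_-$ requires $\max_{t\le n_-} M^{(A)}_t\ge c k\sqrt{n_-\log n_-}-C_\ell$, since the overshoot is at most $C_\ell$ and the drift is at most $I\,t$. I would bound this with the maximal form of Azuma (Doob's inequality on the exponential martingale). The case $\theta=B$ is symmetric.

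The main obstacle is the upper tail in part 2, specifically upgrading the $O(1)$ mean bound on wrong-side queries to a tail bound of order $\EE[\tau]^{-2}$ at scale $\sqrt{\EE[\tau]\log\EE[\tau]}$. I would first try to show that the count of visits to the wrong half-line is stochastically dominated by a sum of geometric excursion lengths: each return below $0$ starts within $C_\ell$ of $0$, and the drift is positive on both sides under $\theta=A$ because $I_{j_B^\star,A}>0$. Given that domination, a Chernoff bound on that sum finishes the argument.
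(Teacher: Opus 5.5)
Your proposal is correct. Part~2 follows the paper's skeleton; part~1 and your treatment of the wrong-side count take different routes.

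\textbf{Part 1.} The paper appeals to Lemma~\ref{lem:W-second-moment}. It conditions on $\mathcal{H}_\tau=\sigma(\{j_s,Y_s\}_{s\le\tau})$, treats the $D_t$ as conditionally independent with variance at most $\psi^2$, and gets the pathwise bound $\EE_\theta[(W_\pi-S_\pi)^2\mid\mathcal{H}_\tau]\le\psi^2\tau$. You instead run the stopped martingale $\sum_{t\le n\wedge\tau}D_t$ in $\mathcal{G}_t\vee\sigma(\theta)$ and let $n\to\infty$, which is Wald's second identity. Your version is more robust. It does not need the waiting times to keep their unconditional law after conditioning on the whole selection path, and that can fail for admissible policies whose choices react to observed latencies (Definition~\ref{def:admissible} permits this, since $\mathcal{G}_{t-1}$ contains them). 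The paper's version yields a conditional statement that it reuses for the tail bound in Lemma~\ref{lem:W-mgf}. For the sign-based policy both arguments are valid.

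\textbf{Part 2.} The paper, like you, splits on the wrong-side count. It applies a concentration inequality to $M^{(A)}$ at a deterministic time near $\EE_A[\tau]\pm\Delta$, and chooses $k$ so the exponent exceeds $2\log\EE_A[\tau]$. Using Freedman rather than Azuma makes no difference. For the lower tail the paper bounds the stopped martingale $M^{(A)}_{\tau\wedge t}$ at the fixed time $t$, so your maximal inequality is not needed.

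The step you flag as the main obstacle is easier than you expect. The paper works at scale $\bar n=c_1\log\EE_A[\tau]$. It notes that $N_{j_B^\star}\ge\bar n$ forces a wrong-side query at some time $t\ge\bar n$, so a union bound of $\PP_A(L_{t-1}<0)\le\gamma_1e^{-\gamma_2 t}$ over $t\ge\bar n$ suffices. That per-time bound follows directly from Azuma applied to $M^{(A)}$. Under $\theta=A$ the compensator $\sum_{s\le t}I_{j_s,A}$ is at least $\min\{I_{j_A^\star,A},I_{j_B^\star,A}\}\,t$ whichever LLM is chosen. So no excursion decomposition, and no comparison walk as in Lemma~\ref{lem:wrong-side-finite}, is required. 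Your geometric-excursion domination would also work, since under the sign rule $(L_t)$ is Markov with i.i.d.\ increments on each half-line, but it is more work.

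\textbf{A repair in your lower tail.} The bound ``the drift is at most $I\,t$'' requires $I_{j_B^\star,A}\le I_{j_A^\star,A}$. That need not hold, because $j_A^\star$ is selected by cost and latency efficiency, not by $I_{\cdot,A}$. In general, add $(I_{j_B^\star,A}-I_{j_A^\star,A})^+N_{j_B^\star}$ to the drift and absorb it with the same tail bound on $N_{j_B^\star}$ you already use for the upper tail. You also need to account for the event $\{\tau<n_-,\,D=B\}$, which has probability $O(\alpha)=o(\EE_A[\tau]^{-2})$. The paper's lower-tail step makes the same implicit assumptions.
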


\noindent
The first part of the lemma is driven by the sub-Gaussian assumption imposed on waiting time. The second part of the lemma is driven by the drift decomposition in~\eqref{decomp} together with an application of Freedman’s inequality, which provides sharp concentration for the associated martingale terms and thereby controls both the LLR drift and $W_\pi$.  Specifically, for each $\theta\in\{A,B\}$, $t\ge 1$ and $x>0$,   
\[
\PP_\theta\big(|M_t^{(\theta)}|\ge x\big)
\;\le\;
2 \exp\!\Big(-\frac{x^2}{2(v_\ell^2\,t + 2C_\ell x)}\Big).
\]
where $C_\ell$ and $v_\ell^2$ are defined in \eqref{C_l} and \eqref{v_l}. Using this sharp concentration bound, we can further show that for any $p>1$,
\[
\EE[\tau^p]\;\le\; C_p\,(\EE[\tau])^p
\qquad\text{uniformly over }\alpha\in(0,\alpha_0],
\]
for some constant $C_p<\infty$ independent of $\alpha$.
Moreover, since $\EE[W_\pi]\ge \underline\mu\,\EE[\tau]$, Minkowski's inequality yields
\[
\sup_{\alpha\in(0,\alpha_0]}
\EE\!\left[\left(\frac{W_\pi}{\EE[W_\pi]}\right)^p\right]
\;\le\;
\frac{m_p}{\underline\mu^{\,p}}
\sup_{\alpha\in(0,\alpha_0]}
\frac{\EE[\tau^p]}{(\EE[\tau])^p}
\;<\;\infty,
\]
which implies uniform integrability and establishes the second part of the proposition.

Finally, we exploit the polynomial assumption imposed on $g$ to control the fluctuation of $g(W_\pi)$ around
$g(\EE[W_\pi])$. Although $g$ is convex and increasing, its variation along the sample path remains
regulated by the exponential tail of $M_t^{(\theta)}$. This eventually enables us to match the lower bound.
We formalize the argument in the following proposition.   
\begin{proposition}
\label{lem:gW-asym}
Under $\tilde\pi:=\pi^{(2)}_{j_A^\star, j_B^\star}$, we have $\frac{\EE_\theta[g(W_{\tilde \pi})]}{g(m_{\tilde \pi}^\theta)} = O(1/\log(1/\alpha))$ for $\theta=A, B$. 
\end{proposition}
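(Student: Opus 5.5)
The displayed ratio is presumably meant as $\EE_\theta[g(W_{\tilde\pi})]/g(m^\theta_{\tilde\pi}) = 1+O(1/\log(1/\alpha))$. Equivalently, $\EE_\theta[g(W_{\tilde\pi})]-g(m^\theta_{\tilde\pi})=O((\log(1/\alpha))^{\rho-1})$, and this is the form I will prove. Throughout, write $m:=m^\theta_{\tilde\pi}$. By \cref{lem:query-count-asymptotics}, $m=\Theta(\log(1/\alpha))$, and $g(m)=c\,m^\rho$. It therefore suffices to show $\EE_\theta[W^\rho]-m^\rho=O(m^{\rho-1})$.

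\textbf{Step 1: Taylor expansion around $m$.} Expand $x\mapsto x^\rho$ around $m$:
\[
W^\rho=m^\rho+\rho m^{\rho-1}(W-m)+R,
\qquad 0\le R\le C_\rho\bigl(m^{\rho-2}(W-m)^2+|W-m|^{\rho}\bigr).
\]
The bound on $R$ holds for $\rho\ge 2$. For $1\le\rho<2$ the second derivative is decreasing, and one may instead use $R\le C_\rho m^{\rho-2}(W-m)^2$ on $\{W\ge m/2\}$ together with a crude bound on $\{W<m/2\}$. The linear term has zero mean under $\PP_\theta$, since $m=\EE_\theta[W]$. Hence everything reduces to two central-moment estimates:
\[
\EE_\theta[(W-m)^2]=O(m),\qquad \EE_\theta|W-m|^{\rho}=O(m^{\rho/2}).
\]
Given these, $\EE_\theta[R]=O(m^{\rho-1})+O(m^{\rho/2})$. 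This is $O(m^{\rho-1})$ when $\rho\ge 2$. For $\rho<2$ only the first term is present, and it is already $O(m^{\rho-1})$.

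\textbf{Step 2: split the fluctuation of $W$.} Decompose
\[
W-m=(W-S_\pi)+\bigl(S_\pi-\EE_\theta S_\pi\bigr),\qquad S_\pi=\sum_{t\le\tau}\mu_{j_t}.
\]
For the first piece, \cref{convergence}(1) gives second moment at most $\psi^2\EE_\theta[\tau]=O(m)$. For higher orders, apply a Burkholder--Rosenthal inequality to the martingale $\sum_{t\le\tau}(T_{j_t,t}-\mu_{j_t})$, whose increments are sub-Gaussian. Combined with the moment bound $\EE\tau^{p}\le C_p(\EE\tau)^p$ stated above, this gives $O(m^{p/2})$.

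For the second piece, write $S_\pi=\mu_{j^\star_\theta}\tau+(\mu_{j'}-\mu_{j^\star_\theta})N_{\rm wrong}$, where $j'$ denotes the other specialist and $N_{\rm wrong}$ the number of queries sent to it. The random-walk/Hoeffding construction behind \cref{lem:query-count-asymptotics} gives geometric tails for $N_{\rm wrong}$, so all its moments are $O(1)$. It remains to control the central moments of $\tau$. Using the decomposition \eqref{decomp}, and noting that $I_{j_t,\theta}=I_{j^\star_\theta,\theta}$ except on wrong-side steps,
\[
I_{j^\star_\theta,\theta}\,\tau = L_\tau - M^{(\theta)}_\tau + O(N_{\rm wrong}).
\]
The terms on the right are controlled as follows:
\begin{itemize}
\item $L_\tau$ equals the correct threshold up to overshoot at most $C_\ell$, except on the wrong-decision event. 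That event has probability $O(\alpha)$, and there $|L_\tau|=O(m)$, so its contribution to the $p$-th central moment is $O(\alpha m^p)=o(1)$.
\item $\EE_\theta|M^{(\theta)}_\tau|^p\le C_p\,\EE_\theta[\tau^{p/2}]=O(m^{p/2})$, by Burkholder with bounded increments.
\end{itemize}
Hence $\EE_\theta|\tau-\EE_\theta\tau|^p=O(m^{p/2})$ for $p\in\{2,\rho\}$. Minkowski's inequality then combines the two pieces.

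\textbf{Main obstacle.} I expect the main difficulty to be the higher-order ($p=\rho>2$) central moments. The tail bound in \cref{convergence}(2) only yields fluctuations of order $\sqrt{m\log m}$, which loses a logarithm. So I would avoid that bound and instead apply Burkholder's inequality directly at the stopping time $\tau$, justified by optional stopping with $\EE\tau^{p}<\infty$. The wrong-boundary event and the wrong-side query counts are handled by their $O(\alpha)$ probability and geometric tails, respectively.
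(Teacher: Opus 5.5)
Your proposal is correct, but it takes a different route from the paper. Your reading of the statement as $\EE_\theta[g(W_{\tilde\pi})]/g(m^\theta_{\tilde\pi})=1+O(1/\log(1/\alpha))$ is also what the paper actually proves: it targets $\EE_\theta[Z_\alpha^\rho-1]=O(1/m^\theta_\alpha)$ with $Z_\alpha=W_{\tilde\pi}/m^\theta_\alpha$.

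\textbf{What you share with the paper.} The skeleton is the same:
\begin{itemize}
\item a Taylor expansion around the mean, so that the linear term vanishes;
\item the split $W-m=(W-S_\pi)+(S_\pi-m)$;
\item the identity $I_{j^\star_\theta,\theta}\,\tau=\pm L_\tau-M^{(\theta)}_\tau+O(N_{\rm wrong})$.
\end{itemize}
The third item is exactly how the paper's Claim~1 obtains $\Var_\theta(\tau)=O(\EE_\theta[\tau])$, so your Step~2 at $p=2$ coincides with it.

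\textbf{Where you differ.} The difference lies in how the non-quadratic part of the Taylor remainder is handled.

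The paper localizes.
\begin{itemize}
\item On the typical event $|W/m-1|\le \log m/\sqrt{m}$, the second derivative of $z\mapsto z^\rho$ is bounded, so only second moments enter.
\item On the complement it uses Cauchy--Schwarz with $\EE[Z_\alpha^{2\rho}]=O(1)$, which comes from exponential tails of $\tau$ beyond $Km$.
\item It also needs $\PP(\bar G^\theta_\alpha)=O(1/m^2)$. This comes from the Freedman-based concentration of $\tau$ at scale $\sqrt{m\log m}$ (Lemma~\ref{convergence}), sub-Gaussian concentration of $W-S_\pi$, and Chebyshev for the wrong-side count.
\end{itemize}

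You instead use the global bound $R\le C_\rho\bigl(m^{\rho-2}(W-m)^2+|W-m|^\rho\bigr)$. You then prove $\|W-m\|_{L^\rho(\PP_\theta)}=O(\sqrt{m})$ directly, via Burkholder--Rosenthal at the stopping time. This needs as inputs only $\EE_\theta[\tau^{\rho/2}]=O(m^{\rho/2})$, the overshoot and wrong-decision bounds, and the geometric tails of $N_{\rm wrong}$.

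\textbf{What each route buys.} Your route needs no window, no tail-probability estimate of order $1/m^2$, and no appeal to Lemma~\ref{convergence}(2). It also produces a clean, reusable $L^p$ fluctuation bound for $W$.

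The paper's localization relies only on smoothness of $g$ near $m$ together with crude growth and moment bounds. It therefore transfers more readily to penalties that are not exact polynomials, such as regularly varying ones. The same window structure is also what the paper reuses for the reverse Jensen-gap inequality in the tightness part. There a lower bound on the remainder is needed, and such a lower bound only holds after localizing $W$ near $m$.
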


\section{Conclusion}\label{sec:conclusion}
This paper studies a sequential decision problem motivated by modern AI systems, where a decision-maker must orchestrate multiple heterogeneous information sources under cost, latency, and accuracy trade-offs. We formulate a Bayesian sequential hypothesis testing model that captures key operational features of LLM deployment, including asymmetric diagnostic power, stochastic waiting times, and convex delay penalties.

Our main result is a sharp asymptotic characterization of optimal performance in the high-confidence regime. We establish a universal lower bound through a deterministic convex program and show that an asymptotically optimal policy needs at most two information sources. Motivated by this structure, we propose a simple sign-based two-LLM policy and prove that it achieves the lower bound up to a tight remainder.

These results yield a simple design insight: near-optimal sequential information acquisition does not require coordinating many models, but rather identifying two complementary specialists and switching between them as evidence evolves. This suggests that effective AI decision systems can remain structurally simple even as the number of available models grows.

\newpage 
\bibliographystyle{informs2014}
\bibliography{reference}

\newpage
\newpage 
\renewcommand{\thepage}{ec\arabic{page}}
\setcounter{page}{1}
\begin{appendices}
\begin{center}
    \textbf{\Large Appendix}
\end{center}

\setcounter{equation}{0}
\setcounter{theorem}{0}
\setcounter{proposition}{0}
\setcounter{lemma}{0}
\setcounter{corollary}{0}
\setcounter{definition}{0}
\renewcommand{\thetheorem}{\thesection.\arabic{theorem}}
\renewcommand{\theproposition}{\thesection.\arabic{proposition}}
\renewcommand{\thelemma}{\thesection.\arabic{lemma}}
\renewcommand{\thecorollary}{\thesection.\arabic{corollary}}
\renewcommand{\thedefinition}{\thesection.\arabic{definition}}
\renewcommand{\theequation}{\thesection.\arabic{equation}}

The appendix is organized as follows. In Appendix~\ref{sec:other_res}, we provide proofs for all theoretical results except for Theorems~\ref{thm:two-llm-upper} and \ref{thm:lower-mixture}. In Appendix~\ref{sec:concentration}, we prove some useful results. In Appendix~\ref{sec:ub_proof}, we provide the proof for the upper bound result (Theorem~\ref{thm:two-llm-upper}). In Appendix~\ref{sec:lb_proof}, we provide structural properties of the optimal solution to the lower bound problem.

\section{Additional Preliminary Results}\label{sec:concentration}
In this section, we prove some preliminary technical results that will be repeatedly used in the proofs of main results. 

\subsection{Likelihood Ratio Martingale and Change of Measure}
We define the likelihood ratio process as 
\[
\Lambda_t = e^{L_t}, \qquad t\ge 0.
\]
In this subsection, we provide some preliminary properties of $\Lambda_t$. 

\begin{lemma}[Likelihood ratio martingale]
\label{lem:LR-martingale}
Under $\PP_B$, $(\Lambda_t)_{t\ge0}$ is a nonnegative $(\mathcal{G}_t)$-martingale with
\[
\EE_B[\Lambda_t \mid \mathcal{G}_{t-1}] = \Lambda_{t-1},\qquad
\EE_B[\Lambda_t] = 1,\qquad t\ge0.
\]
Similarly, under $\PP_A$, the process $(\Lambda_t^{-1})_{t\ge0}$ is a nonnegative $(\mathcal{G}_t)$-martingale with $\EE_A[\Lambda_t^{-1}]=1$.
\end{lemma}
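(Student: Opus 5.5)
The proof is a one-step conditional computation. Lemma~\ref{lemma1} and Assumption~\ref{assum:indep}(i) fix the conditional law of $Y_t$ given $(\mathcal{G}_{t-1},\theta)$, and we need only that law together with the definition $L_t=L_{t-1}+\ell_{j_t}(Y_t)$. The core identity is that for any fixed LLM $j$,
\[
\EE_B\big[e^{\ell_j(Y)}\big]=\sum_{y\in\{A,B\}}\PP(Y=y\mid\theta=B,j)\,\frac{\PP(Y=y\mid\theta=A,j)}{\PP(Y=y\mid\theta=B,j)}=\sum_{y}\PP(Y=y\mid\theta=A,j)=1.
\]
Explicitly, $(1-\gamma_{j,B})\cdot\frac{\gamma_{j,A}}{1-\gamma_{j,B}}+\gamma_{j,B}\cdot\frac{1-\gamma_{j,A}}{\gamma_{j,B}}=1$. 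Symmetrically, $\EE_A[e^{-\ell_j(Y)}]=1$.

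First we check adaptedness and integrability. The variable $\Lambda_t=e^{L_t}$ is $\mathcal{G}_t$-measurable because $L_t$ is a function of $(j_s,Y_s)_{s\le t}$. It is bounded, with $0<\Lambda_t\le e^{tC_\ell}$ by \eqref{C_l}, so it is integrable and nonnegative.

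Next comes the martingale step. By Definition~\ref{def:admissible}, $j_t$ is $\mathcal{G}_{t-1}$-measurable. Assumption~\ref{assum:indep}(i),(iii) gives that, under $\PP_B$, the conditional law of $Y_t$ given $\mathcal{G}_{t-1}$ is the output law of LLM $j_t$ under $\theta=B$. The waiting times $T_{j_s,s}$ in $\mathcal{G}_{t-1}$ are independent of $Y_t$ given $j_t$, so they do not affect this conditioning. Therefore
\[
\EE_B[\Lambda_t\mid\mathcal{G}_{t-1}]=\Lambda_{t-1}\,\EE_B\big[e^{\ell_{j_t}(Y_t)}\mid\mathcal{G}_{t-1}\big]=\Lambda_{t-1}\sum_{j}\mathbf{1}_{\{j_t=j\}}\EE_B[e^{\ell_j(Y)}]=\Lambda_{t-1}.
\]
Taking expectations and inducting from $\Lambda_0=1$ (since $L_0=0$) gives $\EE_B[\Lambda_t]=1$. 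The $\PP_A$ statement for $\Lambda_t^{-1}=e^{-L_t}$ follows identically from the second identity.

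The only delicate point is justifying that conditioning on $\mathcal{G}_{t-1}$ leaves the law of $Y_t$ equal to the one-shot law for $j_t$. This combines the predictability of $j_t$ with the independence assumptions, including the separation of latency from outputs. I would state this explicitly rather than treat it as routine.
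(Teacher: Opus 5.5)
Your proposal is correct and matches the paper's proof: the paper also conditions on $\mathcal{G}_{t-1}$ and replaces $\PP_B(Y_t=y\mid\mathcal{G}_{t-1})$ by the output law of LLM $j_t$ under $\theta=B$. It then uses $\sum_y \PP(Y=y\mid\theta=A,j_t)=1$ to get the martingale property and takes expectations from $\Lambda_0=1$. Your integrability bound $\Lambda_t\le e^{tC_\ell}$ and your explicit flagging of the conditional-law step, which the paper uses without comment, are useful additions but do not change the argument.
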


\begin{proof}{Proof.}
Fix $t\ge1$ and work under $\PP_B$. Conditional on $\theta=B$ and $\mathcal{G}_{t-1}$, we have
\begin{align*}
\EE_B[\Lambda_t \mid \mathcal{G}_{t-1}]
&= \EE_B\big[e^{L_{t-1} + \ell_{j_t}(Y_t)} \,\big|\, \mathcal{G}_{t-1}\big] \\
&= e^{L_{t-1}} \sum_{y\in\{A,B\}} \PP_B(Y_t=y\mid \mathcal{G}_{t-1})\, e^{\ell_{j_t}(y)} \\
&= e^{L_{t-1}} \sum_{y\in\{A,B\}} \PP(Y=y\mid \theta=A,j_t) \\
&= e^{L_{t-1}} = \Lambda_{t-1},
\end{align*}
because $e^{\ell_{j_t}(y)} = \PP(Y=y\mid\theta=A,j_t)/\PP(Y=y\mid\theta=B,j_t)$. 
Thus $(\Lambda_t)_{t\ge0}$ is a martingale under $\PP_B$ with $\EE_B[\Lambda_0]=1$. 
Taking expectations gives $\EE_B[\Lambda_t]=1$ for all $t$.
The argument under $\PP_A$ is analogous: for $Z_t := \Lambda_t^{-1}=e^{-L_t}$ we have
\[
\EE_A[Z_t\mid\mathcal{G}_{t-1}] = Z_{t-1},\qquad \EE_A[Z_t]=1. \tag*{\qed}
\]
\end{proof}

\vspace{0.1in}

\begin{lemma}[Change-of-measure identity at stopping times]
\label{lem:change-of-measure}
Let $\tau$ be any almost surely finite stopping time with respect to $(\mathcal{G}_t)_{t\ge0}$. Then for every bounded $\mathcal{G}_\tau$-measurable random variable $X$,
\[
\EE_A[X] = \EE_B[\Lambda_\tau X],\qquad
\EE_B[X] = \EE_A[\Lambda_\tau^{-1} X].
\]
Furthermore, for every event $F\in\mathcal{G}_\tau$,
\[
\PP_A(F) = \EE_A[\mathbf{1}_F] = \EE_B[\Lambda_\tau \mathbf{1}_F],\qquad
\PP_B(F) = \EE_B[\mathbf{1}_F] = \EE_A[\Lambda_\tau^{-1} \mathbf{1}_F].
\]
\end{lemma}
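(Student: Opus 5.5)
The plan is to first prove the identity for indicators of events of the form $F\cap\{\tau=t\}$ at a fixed deterministic time $t$. From there, finiteness of $\tau$ lets us sum over $t$, and approximation by simple functions extends the identity to all bounded $\mathcal{G}_\tau$-measurable $X$.

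\emph{Step 1: fixed time.} Fix $t$ and a bounded $\mathcal{G}_t$-measurable $X_t$. By Assumption~\ref{assum:indep}, the joint law of $(j_{1:t},Y_{1:t},T_{j_{1:t},1:t})$ under $\PP_\theta$ factors into three kinds of terms:
\begin{itemize}
\item the selection kernels $\phi_s(\cdot\mid\mathcal{G}_{s-1})$, which do not depend on $\theta$;
\item the waiting-time densities given $j_s$, which also do not depend on $\theta$;
\item the output probabilities $\PP(Y_s=y\mid\theta,j_s)$.
\end{itemize}
Hence the Radon--Nikodym derivative of $\PP_A|_{\mathcal{G}_t}$ with respect to $\PP_B|_{\mathcal{G}_t}$ is
\[
\prod_{s\le t}\frac{\PP(Y_s\mid A,j_s)}{\PP(Y_s\mid B,j_s)}=e^{L_t}=\Lambda_t .
\]
This gives $\EE_A[X_t]=\EE_B[\Lambda_tX_t]$. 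Equivalently, one can argue by induction on $t$, using the tower property exactly as in Lemma~\ref{lem:LR-martingale}: condition on $\mathcal{G}_{t-1}$ and note that $\EE_B[e^{\ell_{j_t}(Y_t)}h(Y_t,T_t)\mid\mathcal{G}_{t-1}]=\EE_A[h(Y_t,T_t)\mid\mathcal{G}_{t-1}]$ for bounded $h$.

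\emph{Step 2: stopping time.} Let $X$ be bounded and $\mathcal{G}_\tau$-measurable. Then $X\mathbf{1}_{\{\tau=t\}}$ is $\mathcal{G}_t$-measurable, and on $\{\tau=t\}$ we have $\Lambda_\tau=\Lambda_t$. Step 1 therefore gives
\[
\EE_A[X\mathbf{1}_{\{\tau=t\}}]=\EE_B[\Lambda_\tau X\mathbf{1}_{\{\tau=t\}}].
\]
Now sum over $t\ge0$:
\begin{itemize}
\item On the left, because $\tau<\infty$ $\PP_A$-a.s., dominated convergence (using that $X$ is bounded) yields $\EE_A[X]$.
\item On the right, apply dominated/monotone convergence, splitting $X=X^+-X^-$ so each part has a nonnegative integrand. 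The sum is $\EE_B[\Lambda_\tau X\mathbf{1}_{\{\tau<\infty\}}]$, which equals $\EE_B[\Lambda_\tau X]$ once $\tau<\infty$ holds $\PP_B$-a.s. as well, which we interpret as part of the hypothesis.
\end{itemize}
The symmetric identity $\EE_B[X]=\EE_A[\Lambda_\tau^{-1}X]$ follows by exchanging the roles of $A$ and $B$, with $\Lambda^{-1}$ the $\PP_A$-martingale from Lemma~\ref{lem:LR-martingale}. The event statements are the special case $X=\mathbf{1}_F$.

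\emph{Main obstacle.} The only delicate point is the meaning of ``almost surely finite'' under the two measures. The summation over $t$ only recovers $\EE_B[\Lambda_\tau X\mathbf{1}_{\{\tau<\infty\}}]$, so finiteness must hold under $\PP_A$ on the left side and under $\PP_B$ on the right side. I would state explicitly that a.s.\ finiteness is required under both $\PP_A$ and $\PP_B$, consistent with Lemma~\ref{lem:info-budget-asym}. Note that no uniform integrability of $\Lambda_{t\wedge\tau}$ is needed, because we never pass a limit through $\Lambda$ itself; we only sum nonnegative pieces on disjoint events. The rigorous form of Step 1, i.e.\ the dominating-measure argument handling the continuous waiting times, is routine.
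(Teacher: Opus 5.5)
Your proof is correct, and its core matches the paper's. The inductive version of your Step 1 is exactly the paper's one-step identity $\EE_B[\Lambda_t X\mid\mathcal{G}_{t-1}]=\Lambda_{t-1}\EE_A[X\mid\mathcal{G}_{t-1}]$, iterated over $t$, which gives $\EE_B[\Lambda_t X]=\EE_A[X]$ for bounded $\mathcal{G}_t$-measurable $X$. Your version is slightly more explicit, since you use Assumption~\ref{assum:indep}(iii) to see that the waiting-time factor does not depend on $\theta$. The difference is in how you get from a deterministic $t$ to $\tau$. The paper opens by asserting $\EE_B[\Lambda_\tau]=1$ via optional stopping, citing bounded increments of $\Lambda$. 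After the fixed-$t$ induction it simply declares the result, so the replacement of $\Lambda_t$ by $\Lambda_\tau$ is never carried out; its $X$ is called $\mathcal{G}_\tau$-measurable but is used as $\mathcal{G}_t$-measurable. Your localization supplies exactly this missing link: $X\mathbf{1}_{\{\tau=t\}}$ is $\mathcal{G}_t$-measurable, $\Lambda_\tau=\Lambda_t$ on $\{\tau=t\}$, and you then sum nonnegative pieces over disjoint events.

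What your route buys is that it needs no integrability of the stopped process $(\Lambda_{t\wedge\tau})$. This matters for the lemma as stated, which allows an arbitrary a.s.\ finite $\tau$. The increments $\Lambda_{t-1}(e^{\ell_{j_t}(Y_t)}-1)$ are not bounded in general, and even bounded increments plus a.s.\ finiteness would not justify optional stopping. The paper's claim is safe only for its threshold rule, where $\Lambda_{t\wedge\tau}\le e^{A_\alpha+C_\ell}$. Your argument also recovers $\EE_B[\Lambda_\tau]=1$ as the case $X\equiv 1$. Your caveat about finiteness is accurate, since the localized identity reads $\EE_B[\Lambda_\tau\mathbf{1}_{\{\tau<\infty\}}]=\PP_A(\tau<\infty)$. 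In the paper's Bayesian setting this is harmless: $\PP$-a.s.\ finiteness with $\xi_A,\xi_B>0$ gives finiteness under both $\PP_A$ and $\PP_B$.
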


\begin{proof}{Proof.}
We prove the first identity; the second is analogous. Since $\tau$ is almost surely finite and $(\Lambda_t)$ has bounded increments, the optional stopping theorem gives
\[
\EE_B[\Lambda_\tau] = \EE_B[\Lambda_0] = 1.
\]
Given any bounded $\mathcal{G}_{\tau}$-measurable random variable $X$, we can derive the following equality: 
\begin{align*}
    \EE_B[\Lambda_t X \mid \mathcal{G}_{t-1}] &= \Lambda_{t-1} \sum_{y\in\{A, B\}} P_B(Y_t=y\mid \mathcal{G}_{t-1}) e^{\ell_{j_t}(y)}\EE[X\mid \mathcal{G}_{t-1}, Y_t=y]\\
    &= \Lambda_{t-1}  \sum_{y\in\{A, B\}} P_A(Y_t=y\mid \mathcal{G}_{t-1}) \EE[X\mid \mathcal{G}_{t-1}, Y_t=y]\\
    &= \Lambda_{t-1} \EE_A[X\mid \mathcal{G}_{t-1}],
\end{align*}
where the first equality is because $X$ is $\mathcal{G}_t$-measurable and $j_t$ is determined by $\mathcal{G}_{t-1}$. 

Then, we prove the statement $\EE_B[\Lambda_t X]=\EE_A[X]$ by induction. First, when $t=0$, we have $\EE_B[\Lambda_0 X]=\EE_A[X]$ because $\Lambda_0=1$ and $\mathcal{G}_0$ is empty. Then, given the statement holds for $t=k-1$, we prove the statement for $t=k$:
\begin{align*}
    \EE_B[\Lambda_k X] = \EE_B[\EE_B[\Lambda_k X\mid \mathcal{G}_{k-1}]] = \EE_B[\Lambda_{k-1} \EE_A[X\mid \mathcal{G}_{k-1}]].
\end{align*}
Since the statement holds for $t=k-1$ and $\EE_A[X\mid \mathcal{G}_{k-1}]$ is $\mathcal{G}_{t-1}$-measurable, we have 
\begin{align*}
    \EE_B[\Lambda_{k-1} \EE_A[X\mid \mathcal{G}_{k-1}]] = \EE_A[\EE_A[X\mid \mathcal{G}_{k-1}]] = \EE_A[X],
\end{align*}
which implies that $\EE_B[\Lambda_k X]=\EE_A[X]$ and the statement holds for $t=k$. Thus, we have $\EE_B[\Lambda_t X]=\EE_A[X]$ for any $t$. Similarly, the remaining results can also be proved. \hfill \qed
\end{proof}

\vspace{0.1in}

\subsection{Error Probability Bounds}
In this subsection, we show that the probability of concluding a wrong answer is upper bounded. 
\begin{lemma}[Exponential error bounds]
\label{lem:error-bounds}
For the posterior-threshold stopping rule with threshold $\alpha\in(0,1/2)$, there exists a constant satisfying 
\[
\PP_A(D=B) \le e^{-B_\alpha} \le c_{\mathrm{err}} \cdot \alpha,\qquad
\PP_B(D=A) \le e^{-A_\alpha} \le c_{\mathrm{err}}\cdot \alpha.
\]
\end{lemma}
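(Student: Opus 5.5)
We apply the change-of-measure identity of Lemma~\ref{lem:change-of-measure} to the event $F=\{D=B\}$. This event is $\mathcal{G}_\tau$-measurable because $D$ is a function of $L_\tau$. By Remark~\ref{rmk:inf_tau}, we may restrict attention to policies with $\tau<\infty$ almost surely under both $\PP_A$ and $\PP_B$. Lemma~\ref{lem:change-of-measure} then gives
\[
\PP_A(D=B)=\EE_B\big[\Lambda_\tau \mathbf{1}_{\{D=B\}}\big].
\]
By the threshold rule in Lemma~\ref{lemma1}, on $\{D=B\}$ we have $L_\tau\le -B_\alpha$, so $\Lambda_\tau=e^{L_\tau}\le e^{-B_\alpha}$. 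Therefore
\[
\PP_A(D=B)\le e^{-B_\alpha}\,\PP_B(D=B)\le e^{-B_\alpha}.
\]

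For the symmetric bound we use $\PP_B(D=A)=\EE_A[\Lambda_\tau^{-1}\mathbf{1}_{\{D=A\}}]$. On $\{D=A\}$ we have $L_\tau\ge A_\alpha$, hence $\Lambda_\tau^{-1}\le e^{-A_\alpha}$, and the same argument gives $\PP_B(D=A)\le e^{-A_\alpha}$.

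It remains to express these bounds as multiples of $\alpha$. From the definitions of the thresholds,
\[
e^{-B_\alpha}=\frac{\alpha}{1-\alpha}\,e^{-\delta}, \qquad e^{-A_\alpha}=\frac{\alpha}{1-\alpha}\,e^{\delta}.
\]
Since $\alpha<1/2$, we have $1/(1-\alpha)<2$. Both inequalities therefore hold with
\[
c_{\mathrm{err}}:=2e^{|\delta|}=2\max\{\xi_A/\xi_B,\ \xi_B/\xi_A\},
\]
which depends only on the prior and not on $\alpha$ or the policy.

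The only delicate step is justifying the change of measure at the random time $\tau$. Lemma~\ref{lem:change-of-measure} requires $\tau$ to be almost surely finite. A nontrivial point is that finiteness under $\PP_B$ alone does not control the event $\{\tau=\infty\}$ under $\PP_A$. We avoid this issue by appealing to Remark~\ref{rmk:inf_tau}, which gives finiteness under both measures.

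A cleaner alternative bypasses this concern altogether. Apply the fixed-horizon identity $\EE_B[\Lambda_t X]=\EE_A[X]$, proved by induction inside Lemma~\ref{lem:change-of-measure}, to $X=\mathbf{1}_{\{D=B,\ \tau\le t\}}$, which is $\mathcal{G}_t$-measurable. This gives $\PP_A(D=B,\tau\le t)=\EE_B[\Lambda_\tau\mathbf{1}_{\{D=B,\tau\le t\}}]\le e^{-B_\alpha}$, using that $\Lambda_t$ is a $\PP_B$-martingale so $\Lambda_t$ may be replaced by $\Lambda_{\tau\wedge t}$ on this $\mathcal{G}_{\tau\wedge t}$-measurable event. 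Letting $t\to\infty$ by monotone convergence yields the claim, with no integrability issues since the integrand is bounded by $e^{-B_\alpha}$.
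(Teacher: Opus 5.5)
Your proof is correct and is essentially the paper's argument: change of measure on $\{D=B\}$, the bound $\Lambda_\tau=e^{L_\tau}\le e^{-B_\alpha}$ on that event, and the same constant $c_{\mathrm{err}}=2\max\{\xi_A/\xi_B,\xi_B/\xi_A\}$ obtained from $e^{-B_\alpha}=\frac{\alpha}{1-\alpha}\frac{\xi_B}{\xi_A}$. Your truncation at $\tau\wedge t$ is a sound addition that removes any reliance on $\tau<\infty$ almost surely, a point the paper does not address. One small correction to your side remark: here the stopped process $\Lambda_{\tau\wedge t}$ is bounded by $e^{A_\alpha+C_\ell}$, so finiteness of $\tau$ under $\PP_B$ actually does force finiteness under $\PP_A$.
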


\begin{proof}{Proof.}
We prove the first inequality; the second is analogous.
Under $\theta=A$, the error event $\{D=B\}$ occurs if and only if $L_\tau\le-B_\alpha$. By the change-of-measure identity (Lemma~\ref{lem:change-of-measure}),
\[
\PP_A(D=B)
= \PP_A(L_\tau\le-B_\alpha)
= \EE_B[\Lambda_\tau \mathbf{1}_{\{L_\tau\le-B_\alpha\}}].
\]
On the event $\{L_\tau\le-B_\alpha\}$, we have $\Lambda_\tau = e^{L_\tau} \le e^{-B_\alpha}$, so
\[
\PP_A(D=B)
\le e^{-B_\alpha} \EE_B[\mathbf{1}_{\{L_\tau\le-B_\alpha\}}]
\le e^{-B_\alpha} \EE_B[1]
= e^{-B_\alpha}= \frac{\alpha}{1-\alpha}\,\frac{\xi_B}{\xi_A}
\le c_{\mathrm{err}}\cdot \alpha,
\]
where $c_{\mathrm{err}}=2\cdot \max\{\xi_A/\xi_B, \xi_B/\xi_A\}$ is a constant independent of $\alpha$. Similarly, we can deduce that $\PP_B(D=A)\le e^{-A_\alpha}\le c_{\mathrm{err}} \cdot \alpha$.\hfill \qed
\end{proof}

\subsection{Martingale Decomposition of Log-Likelihood Ratio}
In this subsection, we show that $M_t^{(\theta)}$ in \eqref{decomp} is a martingale, which will be used to establish concentration results. We let $N_j(t)$ denote the total number of queries sent to LLM $j$ until period $t$:
\[
    N_j(t) := \sum_{\ell=1}^t \mathbf{1}_{\{j_t=j\}}.
\]

\begin{lemma}[Drift-martingale decomposition]
\label{lem:drift-decomp}
For each $t\ge0$,
\[
L_t = \sum_{s=1}^t I_{j_s,A} + M_t^{(A)}
= \sum_{j=1}^M I_{j,A} N_j(t) + M_t^{(A)},
\]
where $(M_t^{(A)})_{t\ge0}$ is a $(\mathcal{G}_t)$-martingale under $\PP_A$ with $\EE_A[M_t^{(A)}]=0$ and bounded increments $|M_t^{(A)} - M_{t-1}^{(A)}|\le 2C_\ell$. Similarly,
\[
-L_t = \sum_{s=1}^t I_{j_s,B} + M_t^{(B)}
= \sum_{j=1}^M I_{j,B} N_j(t) + M_t^{(B)},
\]
where $(M_t^{(B)})_{t\ge0}$ is a $(\mathcal{G}_t)$-martingale under $\PP_B$ with $\EE_B[M_t^{(B)}]=0$ and bounded increments $|M_t^{(B)} - M_{t-1}^{(B)}|\le 2C_\ell$.
\end{lemma}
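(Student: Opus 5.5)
We define the martingale increments directly and check the three claimed properties one at a time. Under $\PP_A$, set $M_0^{(A)}:=0$ and
\[
M_t^{(A)} := \sum_{s=1}^t \big(\ell_{j_s}(Y_s) - I_{j_s,A}\big).
\]
The identity $L_t=\sum_{s=1}^t I_{j_s,A}+M_t^{(A)}$ then holds by construction from \eqref{def:Lt}. Regrouping the drift term by LLM index gives $\sum_{s=1}^t I_{j_s,A}=\sum_{j=1}^M I_{j,A}\sum_{s\le t}\mathbf{1}_{\{j_s=j\}}=\sum_{j=1}^M I_{j,A}N_j(t)$. This yields the second form of the decomposition.

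\textbf{Martingale property.} Fix $t\ge1$. Since $j_t=\phi_t(\mathcal{G}_{t-1})$ is $\mathcal{G}_{t-1}$-measurable, Assumption~\ref{assum:indep}(i) says that, conditional on $\theta=A$ and $\mathcal{G}_{t-1}$, $Y_t$ has law $P_{j_t,A}$. The waiting times $T_{j_s,s}$ enter $\mathcal{G}_{t-1}$, but by Assumption~\ref{assum:indep}(iii)--(iv) they are independent of $Y_t$ given $j_t$. We therefore have
\[
\EE_A\big[\ell_{j_t}(Y_t)\mid\mathcal{G}_{t-1}\big]=\sum_{y\in\{A,B\}}\PP(Y=y\mid\theta=A,j_t)\,\ell_{j_t}(y)=I_{j_t,A}.
\]
Hence $\EE_A[M_t^{(A)}-M_{t-1}^{(A)}\mid\mathcal{G}_{t-1}]=0$. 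Each $M_t^{(A)}$ is $\mathcal{G}_t$-measurable and bounded (by $2C_\ell t$), so it is integrable. Thus $(M_t^{(A)})$ is a $(\mathcal{G}_t)$-martingale under $\PP_A$, and iterating expectations gives $\EE_A[M_t^{(A)}]=M_0^{(A)}=0$.

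\textbf{Bounded increments.} By \eqref{C_l}, $|\ell_{j_t}(Y_t)|\le C_\ell$. Since $I_{j_t,A}$ is an average of values of $\ell_{j_t}$, also $|I_{j_t,A}|\le C_\ell$. The triangle inequality then gives $|M_t^{(A)}-M_{t-1}^{(A)}|\le 2C_\ell$.

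\textbf{The $B$ case.} The argument is symmetric, applied to $-L_t$. Set $M_t^{(B)}:=\sum_{s\le t}\big(-\ell_{j_s}(Y_s)-I_{j_s,B}\big)$. By the definition of $I_{j,B}$,
\[
\EE_B[-\ell_{j_t}(Y_t)\mid\mathcal{G}_{t-1}]=-\EE_B[\ell_{j_t}(Y)]\big|_{j=j_t}=I_{j_t,B},
\]
and the same steps go through.

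The only delicate point is justifying the conditional law of $Y_t$ given $\mathcal{G}_{t-1}$. The filtration includes latency observations, so we need Assumption~\ref{assum:indep}(iii) to ensure that conditioning on past waiting times does not alter the output distribution. Everything else is routine.
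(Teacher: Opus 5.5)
Your proposal is correct and follows the paper's proof essentially step for step. It uses the same increments $\ell_{j_s}(Y_s)-I_{j_s,A}$, the same conditional-mean computation relying on $j_t$ being $\mathcal{G}_{t-1}$-measurable, the same triangle-inequality bound $|\ell_{j_t}(Y_t)|+|I_{j_t,A}|\le 2C_\ell$, and the same regrouping into $\sum_{j} I_{j,A}N_j(t)$. Your explicit remark that $\mathcal{G}_{t-1}$ also contains past waiting times, so each output must be a fresh draw from $P_{j_t,\theta}$ regardless of observed latencies, is a point the paper leaves implicit; it simply asserts $\EE_A[\ell_{j_s}(Y_s)\mid\mathcal{G}_{s-1}]=I_{j_s,A}$.
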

\begin{proof}{Proof.}
    We prove the claim under $\PP_A$; the argument under $\PP_B$ is identical with signs adjusted.
Define
\[
D_s^{(A)} := \ell_{j_s}(Y_s) - I_{j_s,A},\qquad
M_t^{(A)} := \sum_{s=1}^t D_s^{(A)},\qquad t\ge0,
\]
with $M_0^{(A)}:=0$. Then
\[
L_t = \sum_{s=1}^t \ell_{j_s}(Y_s)
= \sum_{s=1}^t I_{j_s,A} + M_t^{(A)}.
\]

As noted in the remark above, under $\PP_A$ we have
\[
\EE_A[\ell_{j_s}(Y_s)\mid \mathcal{G}_{s-1}] = I_{j_s,A},
\]
so
\[
\EE_A\big[D_s^{(A)} \mid \mathcal{G}_{s-1}\big]
= \EE_A\big[\ell_{j_s}(Y_s) - I_{j_s,A} \mid \mathcal{G}_{s-1}\big] = 0.
\]

Hence $(M_t^{(A)})_{t\ge0}$ is a $(\mathcal{G}_t)$-martingale under $\PP_A$. The martingale $(M_t^{(A)})$ has bounded increments $|M_t^{(A)} - M_{t-1}^{(A)}|$ because
\begin{align*}
|M_t^{(A)} - M_{t-1}^{(A)}|\le |L_t-L_{t-1}|+|I_{j_t, A}| \le 2C_\ell \quad \mbox{almost surely for all $t\ge 1$.}
\end{align*}
The identities for expectations follow immediately by taking $\EE_A[\cdot]$ on both sides and using $\EE_A[M_t^{(A)}]=0$.

As for the equality, it suffices to note the following equation:
\[
\sum_{s=1}^t I_{j_s,A}
= \sum_{j=1}^M I_{j,A} \sum_{s=1}^t \mathbf{1}_{\{j_s=j\}}
= \sum_{j=1}^M I_{j,A} N_j(t).
\]
 \hfill \qed
\end{proof}

\subsection{Concentration of Log-Likelihood Ratio}
In this subsection, we present concentration results for the cumulative log-likelihood ratio $L_t$. We first state the version of Freedman's inequality that we will use.
\begin{lemma}[Freedman's inequality; Theorem~5.2.9 in \citealt{durrett2019probability}]
\label{lem:Freedman}
Let $(M_t)_{t\ge0}$ be a real-valued martingale with respect to a filtration $(\mathcal{F}_t)_{t\ge0}$, with $M_0=0$, and define the martingale differences $X_t := M_t - M_{t-1}$ for all $t\ge1$.
Assume that there exists a constant $b>0$ such that $|X_t| \le b$ almost surely for all $t\ge 1$. 
Define the predictable quadratic variation
$V_t := \sum_{s=1}^t \EE[X_s^2 \mid \mathcal{F}_{s-1}]$ with the convention $V_0:=0$.
Then for every $t\ge1$, every $x\ge0$, and every deterministic constant $v\ge0$ such that $V_t \le v$ almost surely, we have
\[
\PP(|M_t| \ge x) \;\le\; 2\,\exp\!\Big(-\frac{x^2}{2(v + b x)}\Big),
\qquad x\ge0.
\]
\end{lemma}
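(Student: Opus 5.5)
We prove the bound by the standard exponential-supermartingale (Chernoff) argument. We actually establish the sharper bound with $bx/3$ in place of $bx$, which implies the stated one. By symmetry (apply the argument to $-M_t$, which satisfies the same hypotheses) and a union bound, it suffices to show
\[
\PP(M_t\ge x)\le \exp\!\Big(-\frac{x^2}{2(v+bx/3)}\Big)\le \exp\!\Big(-\frac{x^2}{2(v+bx)}\Big).
\]
The case $x=0$ is trivial. If $v=0$, then $V_t=0$ a.s., so each $X_s=0$ a.s. and $M_t=0$; the probability vanishes for $x>0$. Hence assume $x,v>0$.

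\emph{Step 1 (one-step exponential bound).} Let $h(c):=(e^c-1-c)/c^2$, which is increasing in $c$. For $|u|\le c$ we have $e^u\le 1+u+u^2h(c)$. Fix $\lambda>0$ and take $u=\lambda X_s$, $c=\lambda b$. Since $\EE[X_s\mid\mathcal F_{s-1}]=0$ and $1+y\le e^y$,
\[
\EE[e^{\lambda X_s}\mid\mathcal F_{s-1}]\le 1+\lambda^2 h(\lambda b)\,\EE[X_s^2\mid\mathcal F_{s-1}]\le \exp\!\big(\lambda^2 h(\lambda b)\,\EE[X_s^2\mid\mathcal F_{s-1}]\big).
\]

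\emph{Step 2 (supermartingale and Markov).} Define $Z_t:=\exp\big(\lambda M_t-\lambda^2h(\lambda b)V_t\big)$. Because $V_t$ is predictable, Step 1 gives $\EE[Z_t\mid\mathcal F_{t-1}]\le Z_{t-1}$, so $\EE[Z_t]\le Z_0=1$. On the event $\{M_t\ge x\}$, the assumption $V_t\le v$ a.s. yields $Z_t\ge \exp(\lambda x-\lambda^2h(\lambda b)v)$. Markov's inequality then gives
\[
\PP(M_t\ge x)\le \exp\!\big(-\lambda x+\lambda^2 v\,h(\lambda b)\big).
\]

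\emph{Step 3 (choice of $\lambda$).} This is the only delicate step. Use the elementary bound $h(c)\le \frac{1}{2(1-c/3)}$ for $0\le c<3$, which follows by comparing the series $\sum_{k\ge2}c^{k-2}/k!$ with $\frac12\sum_{k\ge0}(c/3)^k$, using $k!\ge 2\cdot 3^{k-2}$. Choose $\lambda=x/(v+bx/3)$. Then $c=\lambda b<3$ and $1-c/3=v/(v+bx/3)$. Therefore
\[
\lambda^2 v\,h(\lambda b)\le \frac{\lambda^2(v+bx/3)}{2}=\frac{x^2}{2(v+bx/3)}.
\]
Substituting into Step 2, the exponent becomes
\[
-\frac{x^2}{v+bx/3}+\frac{x^2}{2(v+bx/3)}=-\frac{x^2}{2(v+bx/3)}.
\]
This proves the one-sided bound. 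The factor $2$ in the statement comes from adding the identical bound for $-M_t$. Finally, $v+bx/3\le v+bx$ gives exactly the displayed form of the lemma.
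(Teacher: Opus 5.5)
Your proof is correct and complete. The paper gives no argument for this lemma; it only cites a textbook. So your proposal is less a different route than a self-contained substitute for the citation: the standard exponential-supermartingale (Chernoff) proof of Freedman's inequality.

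The steps all check out:
\begin{itemize}
\item $Z_t=\exp\bigl(\lambda M_t-\lambda^2 h(\lambda b)V_t\bigr)$ is bounded by $e^{\lambda tb}$, hence integrable.
\item Predictability of $V_t$ is exactly what gives the supermartingale step, and $Z_0=1$ because $M_0=V_0=0$.
\item The series comparison $k!\ge 2\cdot 3^{k-2}$ correctly yields $h(c)\le 1/(2(1-c/3))$ on $[0,3)$.
\item With $\lambda=x/(v+bx/3)$ you have $\lambda b<3$ precisely because $v>0$, and $1-\lambda b/3=v/(v+bx/3)$.
\item The degenerate case $v=0$ is handled correctly. For $x=v=0$ the right-hand side of the lemma is itself undefined, so nothing is lost there.
\end{itemize}

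One step deserves a line of justification. You use that $h$ is increasing on all of $\RR$, including negative $u=\lambda X_s$. This follows at once from $h(c)=\int_0^1(1-s)e^{sc}\,ds$.

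Compared with the bare citation, your argument has two advantages. First, it verifies the exact form the paper uses later: two-sided, fixed $t$, and a deterministic $v$ with $V_t\le v$ almost surely. Second, it shows the stated bound is loose by a factor of $3$ in the $bx$ term, since you obtain $v+bx/3$.

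The same supermartingale $Z$ also gives Freedman's original time-uniform (maximal) form if one wants it. Apply optional stopping at the first time $s$ with $M_s\ge x$ and $V_s\le v$, then use Fatou's lemma.
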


\vspace{0.1in}

Then, we prove the concentration inequalities.
\begin{lemma}[Freedman-type concentration for the LLR martingales]
\label{lem:Freedman-LLR}
Under Assumptions~1-3, for each $\theta\in\{A,B\}$, each integer $t\ge1$, and each $x>0$,
\[
\PP_\theta\big(|M_t^{(\theta)}|\ge x\big)
\;\le\;
2 \exp\!\Big(-\frac{x^2}{2(v_\ell^2\,t + 2C_\ell x)}\Big).
\]
Consequently, under $\PP_A$,
\[
\PP_A\big(\big|L_t - \textstyle\sum_{s=1}^t I_{j_s,A}\big|\ge x\big)
\;\le\;
2 \exp\!\Big(-\frac{x^2}{2(v_\ell^2\,t + 2C_\ell x)}\Big),
\]
and under $\PP_B$,
\[
\PP_B\big(\big|-L_t - \textstyle\sum_{s=1}^t I_{j_s,B}\big|\ge x\big)
\;\le\;
2 \exp\!\Big(-\frac{x^2}{2(v_\ell^2\,t + 2C_\ell x)}\Big).
\]
\end{lemma}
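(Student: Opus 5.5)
The plan is to apply Freedman's inequality (Lemma~\ref{lem:Freedman}) directly to the martingales $M_t^{(\theta)}$ constructed in Lemma~\ref{lem:drift-decomp}, with filtration $(\mathcal{G}_t)$ under the conditional measure $\PP_\theta$. That lemma already gives $M_0^{(\theta)}=0$ and the martingale property under $\PP_\theta$. It also gives the almost-sure increment bound $|M_t^{(\theta)}-M_{t-1}^{(\theta)}|\le 2C_\ell$, so we may take $b=2C_\ell$.

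The only remaining ingredient is a deterministic bound on the predictable quadratic variation. For $\theta=A$ the increment is $X_s=\ell_{j_s}(Y_s)-I_{j_s,A}$. The selection $j_s$ is $\mathcal{G}_{s-1}$-measurable. By Assumption~\ref{assum:indep}(i), conditional on $\theta=A$ and $\mathcal{G}_{s-1}$, the output $Y_s$ has law $P_{j_s,A}$; the waiting times stored in $\mathcal{G}_{s-1}$ carry no extra information about $Y_s$ by parts (iii)--(iv). Hence
\[
\EE_A[X_s^2\mid\mathcal{G}_{s-1}]=\Var_A(\ell_{j_s}(Y))\le v_\ell^2,
\]
using the definition \eqref{v_l}. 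The same argument under $\PP_B$ applies to $X_s=-\ell_{j_s}(Y_s)-I_{j_s,B}$, whose conditional variance is $\Var_B(\ell_{j_s}(Y))\le v_\ell^2$. Summing over $s\le t$ gives $V_t\le v_\ell^2 t$ almost surely, and this bound is deterministic.

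Plugging $v=v_\ell^2 t$ and $b=2C_\ell$ into Lemma~\ref{lem:Freedman} yields the first displayed inequality. The two consequences then follow from the identities in Lemma~\ref{lem:drift-decomp}:
\[
M_t^{(A)}=L_t-\textstyle\sum_{s\le t}I_{j_s,A},\qquad M_t^{(B)}=-L_t-\textstyle\sum_{s\le t}I_{j_s,B}.
\]

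The only point needing care is the conditional-law step, i.e.\ checking that adaptively chosen $j_s$ and the latency information in $\mathcal{G}_{s-1}$ do not change the conditional distribution of $Y_s$. Assumption~\ref{assum:indep} settles this. Freedman's inequality applies only at fixed $t$, which is exactly what is stated, so no optional stopping is needed here.
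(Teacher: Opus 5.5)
Your proposal is correct and matches the paper's proof: you apply Freedman's inequality (Lemma~\ref{lem:Freedman}) to the martingale $M_t^{(\theta)}$ from Lemma~\ref{lem:drift-decomp} with $b=2C_\ell$ and $v=v_\ell^2 t$, then read off the two consequences from the drift--martingale identities. Your explicit check that $\EE_\theta[X_s^2\mid\mathcal{G}_{s-1}]=\Var_\theta(\ell_{j_s}(Y))\le v_\ell^2$ under adaptive selection, via Assumption~\ref{assum:indep}, spells out a step the paper leaves implicit in its appeal to \eqref{v_l}.
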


\begin{proof}{Proof.}
Fix $\theta\in\{A,B\}$ and $t\ge1$. From Lemma~\ref{lem:drift-decomp}, we know that $(M_s^{(\theta)})_{s\ge0}$ is a $(\mathcal{G}_s)$-martingale under $\PP_\theta$ with $M_0^{(\theta)}=0$.
Then, we apply Lemma~\ref{lem:Freedman} to the martingale $(M_s^{(\theta)})_{s=0}^t$ with respect to the filtration $(\mathcal{G}_s)_{s=0}^t$ under the probability measure $\PP_\theta$.
The martingale difference $D_s^{(\theta)}=M_s^{(\theta)} - M_{s-1}^{(\theta)}$ is defined in Lemma~\ref{lem:drift-decomp}. 
From Lemma~\ref{lem:drift-decomp}, we have $|D_s^{(\theta)}| \le 2C_\ell$ almost surely for all $s\ge1.$
Thus the increment bound in Lemma~\ref{lem:Freedman} holds with
\[
b := 2C_\ell.
\]

The predictable quadratic variation is, by \eqref{v_l},
\[
V_t^{(\theta)} 
= \sum_{s=1}^t \EE_\theta[(D_s^{(\theta)})^2 \mid \mathcal{G}_{s-1}] \le v_\ell^2 t \quad\text{almost surely}.
\]

By Lemma~\ref{lem:Freedman} with the choices $b = 2C_\ell$ and $v = v_\ell^2 t$, for each $x\ge0$,
\[
\PP_\theta(M_t^{(\theta)} \ge x)
\le \exp\!\Big(-\frac{x^2}{2(v_\ell^2 t + 2C_\ell x)}\Big) \qquad \PP_\theta(M_t^{(\theta)} \le -x)
\le \exp\!\Big(-\frac{x^2}{2(v_\ell^2 t + 2C_\ell x)}\Big).
\]
Thus, we have 
\[
\PP_\theta(|M_t^{(\theta)}| \ge x)
\le 2\exp\!\Big(-\frac{x^2}{2(v_\ell^2 t + 2C_\ell x)}\Big).
\]

Under $\PP_A$, the decomposition~\eqref{decomp} gives
\[
L_t = \sum_{s=1}^t I_{j_s,A} + M_t^{(A)}.
\]
Therefore,
\begin{align*}
\PP_A\big(\big|L_t - \textstyle\sum_{s=1}^t I_{j_s,A}\big|\ge x\big)
&= \PP_A\big(|M_t^{(A)}|\ge x\big) \le 2 \exp\!\Big(-\frac{x^2}{2(v_\ell^2 t + 2C_\ell x)}\Big). \tag*{\qed}
\end{align*}

\end{proof}

\subsection{Overshoot and Expected Log-Likelihood Ratio at Stopping}
In this subsection, we investigate the relationship between $L_\tau$ and $(A_\alpha, B_\alpha)$. When the stopping time $\tau$ is reached, the log-likelihood ratio $L_\tau$ typically overshoots the boundary by a small amount. The following lemma provides a uniform bound on this overshoot.

\begin{lemma}[Uniform overshoot bound]
\label{lem:overshoot}
For the posterior-threshold stopping rule, the overshoot is uniformly bounded:
\[
|L_\tau - A_\alpha| \le C_\ell \quad \text{on } \{L_\tau\ge A_\alpha\},
\]
and
\[
|L_\tau + B_\alpha| \le C_\ell \quad \text{on } \{L_\tau\le-B_\alpha\},
\]
where $C_\ell$ is the bound from \eqref{C_l}.
\end{lemma}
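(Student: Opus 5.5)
The plan is to work directly from the definition of $\tau$ in \eqref{stopping time} together with the fact that every increment of the log-likelihood ratio is bounded. The key observation is that $\tau$ is the \emph{first} time the process $L_t$ leaves the open interval $(-B_\alpha, A_\alpha)$. Consequently, at time $\tau-1$ the process is still strictly inside this interval, and the last step is bounded by $|\ell_{j_\tau}(Y_\tau)|\le C_\ell$ from \eqref{C_l}.

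First we handle the event $\{L_\tau\ge A_\alpha\}$. If $\tau\ge1$, then $\tau-1$ is not a stopping time of the rule, so $L_{\tau-1}<A_\alpha$. Writing $L_\tau = L_{\tau-1}+\ell_{j_\tau}(Y_\tau)$, we obtain
\[
0\le L_\tau - A_\alpha < \ell_{j_\tau}(Y_\tau)\le C_\ell .
\]
There is a degenerate possibility $\tau=0$, which occurs when $L_0=0\ge A_\alpha$, i.e.\ the prior alone already meets the posterior requirement. This cannot happen once $\alpha$ is small, because $A_\alpha\to\infty$. To keep the statement uniform, we will either note that this case is excluded for $\alpha$ small, or adopt the convention of Definition~\ref{def:2llm}, in which $\tau\ge1$ and $L_0=0$. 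Alternatively, if $A_\alpha\le 0$ the overshoot at time $0$ is $-A_\alpha$, and this quantity is bounded for $\alpha$ below $\xi$-dependent levels, so we would restrict to $\alpha$ with $A_\alpha,B_\alpha>0$.

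The event $\{L_\tau\le -B_\alpha\}$ is symmetric. At time $\tau-1$ we have $L_{\tau-1}>-B_\alpha$, so
\[
0\le -B_\alpha - L_\tau < -\ell_{j_\tau}(Y_\tau)\le C_\ell .
\]
Since $\ell_j(B)<0$, the final step is a downward step of size at most $C_\ell$, which gives the bound.

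All of this holds pathwise and for any admissible policy; the only requirement is that $\tau<\infty$ on the event considered, which is automatic on $\{L_\tau\ge A_\alpha\}$ or $\{L_\tau\le -B_\alpha\}$. There is no real obstacle here. The only point needing care is the boundary case $\tau=0$ or a nonpositive threshold, which we dispose of by restricting to $\alpha$ small enough that $A_\alpha,B_\alpha>0$.
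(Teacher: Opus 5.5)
Your proof is correct and is essentially the paper's argument: at time $\tau-1$ the walk is still strictly inside $(-B_\alpha,A_\alpha)$, so the single final increment, bounded by $C_\ell$ via \eqref{C_l}, controls the overshoot on each side. Your extra care with the degenerate start improves on the paper, which ignores it; note, though, that the restriction to $\alpha$ with $A_\alpha,B_\alpha>0$ is what settles it, not the $\tau\ge 1$ convention of Definition~\ref{def:2llm}, because with $A_\alpha\le 0$ and $L_0=0$ the step $L_{\tau-1}<A_\alpha$ would still fail at $\tau=1$.
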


\begin{proof}{Proof.}
By definition of $\tau$, we have $|L_{t}| < \max\{A_\alpha,B_\alpha\}$ for all $t<\tau$, and at time $\tau$ the process $L_\tau$ first crosses one of the boundaries. On the event $\{L_\tau\ge A_\alpha\}$, we must have $L_{\tau-1} < A_\alpha$ (otherwise $\tau$ would have occurred earlier), and
\[
L_\tau = L_{\tau-1} + \ell_{j_\tau}(Y_\tau) < A_\alpha + C_\ell,
\]
by \eqref{C_l}. Since $L_\tau\ge A_\alpha$ by assumption, we have
\[
A_\alpha \le L_\tau < A_\alpha + C_\ell,
\]
which gives $|L_\tau - A_\alpha| < C_\ell$. The bound on the other event is proved identically.\hfill \qed
\end{proof}

\vspace{0.1in}

The next lemma gives upper and lower bounds on the expected cumulative log-likelihood ratio at stopping under each hypothesis.

\begin{lemma}[Expected log-likelihood ratio at stopping]
\label{lem:ELtau-vs-boundaries}
Suppose the policy $\pi$ satisfies $\EE_A[\tau]<\infty$ and $\EE_B[\tau]<\infty$. Then for all sufficiently small $\alpha>0$,
\begin{align}
\label{eq:EA-Ltau-bounds}
A_\alpha - c_{\mathrm{err}}\alpha\cdot (A_\alpha+B_\alpha+C_\ell) \le \EE_A[L_\tau] \le A_\alpha + C_\ell,\\[1mm]
\label{eq:EB-Ltau-bounds}
B_\alpha - c_{\mathrm{err}}\alpha\cdot (A_\alpha+B_\alpha+C_\ell) \le \EE_B[-L_\tau] \le B_\alpha + C_\ell.
\end{align}
In particular, $\EE_A[L_\tau] = A_\alpha + O(1)$ and $\EE_B[-L_\tau] = B_\alpha + O(1)$ as $\alpha\downarrow0$.
\end{lemma}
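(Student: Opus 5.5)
The plan is to decompose $\EE_A[L_\tau]$ according to the two stopping events $\{D=A\}=\{L_\tau\ge A_\alpha\}$ and $\{D=B\}=\{L_\tau\le -B_\alpha\}$. Both events have $\PP_A$-probability summing to one, since $\tau<\infty$ almost surely under $\PP_A$ by Remark~\ref{rmk:inf_tau} and $\EE_A[\tau]<\infty$. Lemma~\ref{lem:overshoot} gives deterministic control of $L_\tau$ on each event. On $\{D=A\}$ we have $A_\alpha\le L_\tau< A_\alpha+C_\ell$. On $\{D=B\}$ we have $-B_\alpha-C_\ell< L_\tau\le -B_\alpha$. The expectation $\EE_A[L_\tau]$ is well defined and finite because $|L_\tau|\le C_\ell\tau$ and $\EE_A[\tau]<\infty$. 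Lemma~\ref{lem:drift-decomp} with Wald/optional stopping would also give this, but it is not needed.

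For the upper bound in \eqref{eq:EA-Ltau-bounds}, write
\[
\EE_A[L_\tau]=\EE_A[L_\tau\mathbf 1_{\{D=A\}}]+\EE_A[L_\tau\mathbf 1_{\{D=B\}}]\le (A_\alpha+C_\ell)\PP_A(D=A)+(-B_\alpha)\PP_A(D=B).
\]
For small $\alpha$ we have $A_\alpha,B_\alpha>0$, because $\log\frac{1-\alpha}{\alpha}>|\delta|$. The second term is then nonpositive, and the first is at most $A_\alpha+C_\ell$, which gives the bound.

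For the lower bound, I will use
\[
\EE_A[L_\tau]\ge A_\alpha\PP_A(D=A)-(B_\alpha+C_\ell)\PP_A(D=B)=A_\alpha-(A_\alpha+B_\alpha+C_\ell)\PP_A(D=B).
\]
Lemma~\ref{lem:error-bounds} gives $\PP_A(D=B)\le c_{\mathrm{err}}\alpha$. Since $A_\alpha+B_\alpha+C_\ell>0$, this yields $\EE_A[L_\tau]\ge A_\alpha-c_{\mathrm{err}}\alpha(A_\alpha+B_\alpha+C_\ell)$. The statement under $\PP_B$ for $-L_\tau$ follows by the symmetric argument, swapping the roles of the two events and using $\PP_B(D=A)\le c_{\mathrm{err}}\alpha$.

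Finally, $A_\alpha+B_\alpha=2\log\frac{1-\alpha}{\alpha}=O(\log(1/\alpha))$, so $\alpha(A_\alpha+B_\alpha+C_\ell)\to0$. This gives $\EE_A[L_\tau]=A_\alpha+O(1)$, and likewise $\EE_B[-L_\tau]=B_\alpha+O(1)$. There is no real obstacle here. The only points needing care are that $\tau$ is finite almost surely, so the two events partition the space, and that the expectation is finite. The step I would double check is the sign argument in the upper bound, which is exactly where "sufficiently small $\alpha$" (so that $B_\alpha\ge0$) is used.
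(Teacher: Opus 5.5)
Your proposal is correct and follows the paper's proof essentially step for step. You decompose $\EE_A[L_\tau]$ over $\{D=A\}$ and $\{D=B\}$, apply the overshoot bounds of Lemma~\ref{lem:overshoot} on each event, obtain the lower bound from $\PP_A(D=A)=1-\PP_A(D=B)$ and Lemma~\ref{lem:error-bounds}, and obtain the upper bound by discarding the nonpositive $\{D=B\}$ contribution. Your remarks on integrability ($|L_\tau|\le C_\ell\tau$) and on needing $A_\alpha,B_\alpha>0$ for small $\alpha$ simply make explicit what the paper uses implicitly.
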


\begin{proof}{Proof.}
We prove~\eqref{eq:EA-Ltau-bounds}; the proof of~\eqref{eq:EB-Ltau-bounds} is analogous.
When $\theta=A$, we can decompose the expectation by the decision outcome:
\begin{align*}
\EE_A[L_\tau]
&= \EE_A[L_\tau \ind_{\{D=A\}}] + \EE_A[L_\tau \ind_{\{D=B\}}] \\
&= \EE_A[L_\tau \ind_{\{L_\tau\ge A_\alpha\}}] + \EE_A[L_\tau \ind_{\{L_\tau\le-B_\alpha\}}].
\end{align*}

On the event $\{L_\tau\ge A_\alpha\}$, Lemma~\ref{lem:overshoot} gives $A_\alpha \le L_\tau \le A_\alpha + C_\ell$. On the event $\{L_\tau\le-B_\alpha\}$, we have $-B_\alpha-C_\ell\le L_\tau\le-B_\alpha<0$. Therefore,
\begin{align*}
\EE_A[L_\tau]
&\ge A_\alpha \,\PP_A(L_\tau\ge A_\alpha) + (-B_\alpha-C_\ell)\,\PP_A(L_\tau\le-B_\alpha) \\
&\ge A_\alpha \,\PP_A(D=A) - (B_\alpha+C_\ell)\,\PP_A(D=B)\\
&= A_\alpha - (A_\alpha+B_\alpha+C_\ell)\,\PP_A(D=B)
\end{align*}

By Lemma~\ref{lem:error-bounds}, we have $\PP_A(D=B)\le c_{\mathrm{err}}\alpha$ and hence 
\[
\EE_A[L_\tau] \ge A_\alpha - c_{\mathrm{err}}\alpha\cdot (A_\alpha+B_\alpha+C_\ell),
\]
which establishes the lower bound in~\eqref{eq:EA-Ltau-bounds}.
For the upper bound, note that on $\{L_\tau\ge A_\alpha\}$ we have $L_\tau\le A_\alpha + C_\ell$ and on $\{L_\tau\le-B_\alpha\}$ we have $L_\tau\le0$. Therefore,
\begin{align*}
\EE_A[L_\tau]
&\le (A_\alpha + C_\ell)\,\PP_A(L_\tau\ge A_\alpha) + 0\cdot\PP_A(L_\tau\le-B_\alpha) \\
&\le A_\alpha + C_\ell.
\end{align*}

Combining the bounds gives~\eqref{eq:EA-Ltau-bounds}. Moreover, since $x\log(1/x)\le 1/e$, we can deduce that $c_{\mathrm{err}}\alpha(A_\alpha+B_\alpha+C_\ell)$ is upper bounded by a constant. Thus, we have $\EE_A[L_\tau] = A_\alpha + O(1)$. 

The proof of~\eqref{eq:EB-Ltau-bounds} is identical with $-L_\tau$ and $B_\alpha$ in place of $L_\tau$ and $A_\alpha$. 

\hfill \qed
\end{proof}

\vspace{0.1in}

\subsection{Concentration of Total Waiting Time}
In this subsection, we present concentration results for the total waiting time $W_\pi$. 
Recall that given the stopping time under a policy $\pi$, the total waiting time is
\[
W_\pi := \sum_{t=1}^\tau T_{j_t,t},
\]
where $j_t\in\{1,\dots,M\}$ is the LLM selected at time $t$, and $T_{j_t,t}$ is the corresponding random waiting time. Let
\[
S_\pi := \sum_{t=1}^\tau \mu_{j_t}
= \sum_{j=1}^M \mu_j N_j(\tau)
\]
denote the random total mean waiting time under policy $\pi$. We also define a coarser filtration
\[
\mathcal{H}_\tau := \sigma\big(\{j_s, Y_s\}_{s=1}^\tau\big).
\]

In the following lemma, we prove that $W_\pi-S_\pi$ is a sub-Gaussian variable. 

\begin{lemma}[Conditional sub-Gaussianity of $W_\pi$]
\label{lem:W-mgf}
Under Assumption~\ref{assum:wait}, for any $\theta\in\{A,B\}$ and any $\lambda\in\mathbb{R}$,
\[
\EE_\theta\big[ e^{\lambda (W_\pi - S_\pi)} \,\big|\, \mathcal{H}_\tau, \theta\big]
\;\le\;
\exp\!\Big(\frac{\psi^2 \lambda^2}{2}\,\tau\Big)
\quad \text{almost surely}.
\]
In particular, conditional on $(\theta,\mathcal{H}_\tau)$, the centered variable $W_\pi-S_\pi$ is sub-Gaussian with variance proxy at most $\psi^2 \tau$. Furthermore, for any $\theta\in\{A,B\}$ and any $x>0$,
\[
\PP_\theta\big(|W_\pi - S_\pi|\ge x \,\big|\, \mathcal{H}_\tau,\theta\big)
\;\le\; 2 \exp\!\Big(-\frac{x^2}{2\psi^2 \tau}\Big)
\quad \text{almost surely},
\]
and hence
\[
\PP_\theta\big(|W_\pi - S_\pi|\ge x\big)
\;\le\; 2\,\EE_\theta\Big[\exp\!\Big(-\frac{x^2}{2\psi^2 \tau}\Big)\Big].
\]
\end{lemma}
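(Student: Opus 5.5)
The plan is to condition on the coarse $\sigma$-field $\mathcal{H}_\tau$ together with $\theta$. Given these, the selection path $j_{1:\tau}$ and $\tau$ itself are fixed. The key structural fact is that the waiting times actually consumed, $T_{j_1,1},\dots,T_{j_\tau,\tau}$, are conditionally independent, each with mean $\mu_{j_t}$ and sub-Gaussian with proxy $\psi^2$. Once this is in hand, $W_\pi-S_\pi=\sum_{t=1}^\tau (T_{j_t,t}-\mu_{j_t})$. The conditional moment generating function then factorizes into $\prod_{t=1}^\tau \EE[e^{\lambda(T_{j_t,t}-\mu_{j_t})}]\le \exp(\psi^2\lambda^2\tau/2)$ by Assumption~\ref{assum:wait}. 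This gives the first claim.

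To justify the factorization, I will use Assumption~\ref{assum:indep}(ii)–(iv). Waiting times are i.i.d.\ within each LLM, independent across LLMs, and independent of $(\theta,Y_{1:t})$ given $j_t$. The only subtlety is that $j_t$ is $\mathcal{G}_{t-1}$-measurable, and $\mathcal{G}_{t-1}$ contains past waiting times, so the policy could in principle depend on earlier latencies. The cleanest route is to avoid conditioning on the full path and instead run a supermartingale argument. Define
\[
Z_t:=\exp\Big(\lambda\sum_{s=1}^t (T_{j_s,s}-\mu_{j_s})-\tfrac{\psi^2\lambda^2}{2}t\Big).
\]
Because $j_t$ is $\mathcal{G}_{t-1}$-measurable and $T_{j_t,t}$ is a fresh draw independent of $\mathcal{G}_{t-1}$, $\theta$, and $Y_t$, we get $\EE_\theta[Z_t\mid\mathcal{G}_{t-1},Y_t]\le Z_{t-1}$. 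For the statement conditional on $\mathcal{H}_\tau$, I will argue that the latency of query $t$ affects neither $Y_t$ nor, under the policies considered (which select LLMs based on $L_t$, i.e.\ on $\mathcal{H}$-information), future selections. Hence, conditionally on $\mathcal{H}_\tau$, the summands are independent with the stated laws. I expect this measurability and independence bookkeeping to be the main obstacle. If a fully general policy is allowed, I would fall back on the unconditional supermartingale bound combined with optional stopping, using $\EE_\theta[\tau]<\infty$ and Fatou's lemma. That yields $\EE_\theta[Z_\tau]\le1$, which is enough for the tail bounds used later.

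For the tail inequalities, I will apply a Chernoff bound conditionally. For $x>0$,
\[
\PP_\theta(W_\pi-S_\pi\ge x\mid\mathcal{H}_\tau,\theta)\le \exp\Big(-\lambda x+\tfrac{\psi^2\lambda^2\tau}{2}\Big),
\]
and optimizing at $\lambda=x/(\psi^2\tau)$ gives $\exp(-x^2/(2\psi^2\tau))$. The lower tail is handled the same way with $-\lambda$, and a union bound gives the factor $2$. Finally, taking $\EE_\theta$ of the conditional bound via the tower property yields the unconditional inequality $\PP_\theta(|W_\pi-S_\pi|\ge x)\le 2\EE_\theta[\exp(-x^2/(2\psi^2\tau))]$.
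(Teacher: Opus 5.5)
Your main argument is the paper's own: condition on $(\theta,\mathcal{H}_\tau)$, factor the conditional MGF of $\sum_{t\le\tau}(T_{j_t,t}-\mu_{j_t})$ into per-query sub-Gaussian factors, then apply Chernoff at $\lambda=x/(\psi^2\tau)$, a union bound over the two tails, and the tower property.

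The measurability issue you flag is real. The paper's proof skips over it by asserting, from Assumption~\ref{assum:indep}, that given $j_{1:\tau}$ the latencies are independent of $(\theta,\mathcal{H}_\tau)$. But $\phi_t$ may use past latencies through $\mathcal{G}_{t-1}$. Take $\alpha$ small enough that $\tau\ge 2$, nondegenerate latencies, and a policy that picks $j_2$ according to whether $T_{j_1,1}>\mu_{j_1}$ and otherwise ignores latencies. Then $W_\pi-S_\pi$ has a strictly positive conditional mean on an $\mathcal{H}_\tau$-event, which no centered sub-Gaussian bound permits. So your restriction to selection rules driven by $\mathcal{H}$-information is necessary, not just convenient. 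It also covers the sign-based policy $\pi^{(2)}_{j_A^\star,j_B^\star}$, which is where the lemma is actually used.

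For the same reason, drop the fallback. The bound $\EE_\theta[Z_\tau]\le 1$ holds only for a fixed $\lambda$, which cannot be tuned to the random $\tau$. It therefore does not directly yield either displayed tail inequality, and it proves a different statement.
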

\begin{proof}{Proof.}
    Fix $\theta\in\{A,B\}$. We will establish the moment generating function (MGF) bound conditional on $(\theta,\mathcal{H}_\tau)$.
By definition, we have $W_\pi - S_\pi = \sum_{t=1}^\tau (T_{j_t,t} - \mu_{j_t})$. By Assumption~\ref{assum:indep}, the array $(T_{j,t})_{j,t}$ is independent of $(\theta,(Y_t)_{t\ge1})$ and of the policy's internal randomization. Consequently, conditional on $(\theta,\mathcal{H}_\tau)$:
\begin{itemize}
    \item The LLM indices $(j_t)_{t\le\tau}$ and the stopping time $\tau$ are $\mathcal{H}_\tau$-measurable, hence they are deterministic given $(\theta,\mathcal{H}_\tau)$.
    \item Given $j_{1:\tau}$, the waiting times $(T_{j_t,t})_{t\le\tau}$ are independent of $(\theta,\mathcal{H}_\tau)$ and mutually independent across $t$ (since $(T_{j,t})_{t\ge1}$ is i.i.d.\ in $t$ for each fixed $j$ by Assumption~\ref{assum:indep}).
\end{itemize}

Then, we can compute the conditional MGF as follows:
\begin{align*}
\EE_\theta\big[e^{\lambda (W_\pi - S_\pi)} \,\big|\, \mathcal{H}_\tau, \theta\big]
& = \EE_\theta\Big[\prod_{t=1}^\tau e^{\lambda (T_{j_t,t}-\mu_{j_t})} \,\Big|\, \mathcal{H}_\tau,\theta\Big] = \prod_{t=1}^\tau \EE_\theta\big[e^{\lambda (T_{j_t,t}-\mu_{j_t})} \,\big|\, \mathcal{H}_\tau,\theta\big]\\
&= \prod_{t=1}^\tau \EE\big[e^{\lambda (T_{j_t,t}-\mu_{j_t})}\mid j_t\big] \le \exp\left( \frac{\psi^2\lambda^2}{2} \tau\right).
\end{align*}

By Chernoff bound, we have
\begin{align*}
\PP_\theta\big(W_\pi - S_\pi \ge x \,\big|\, \mathcal{H}_\tau,\theta\big)\le e^{-\lambda x} \EE_\theta\big[e^{\lambda(W_\pi - S_\pi)} \,\big|\, \mathcal{H}_\tau,\theta\big]\le \exp\!\Big(-\lambda x + \frac{\psi^2 \lambda^2 \tau}{2}\Big).
\end{align*}
Then, we set $\lambda=\frac{x}{\psi^2 \tau}$, and obtain that $\PP_\theta\big(W_\pi - S_\pi \ge x \,\big|\, \mathcal{H}_\tau,\theta\big)
\;\le\;
\exp\!\Big(-\frac{x^2}{2\psi^2 \tau}\Big)$.
Similarly, we have $\PP_\theta\big(W_\pi - S_\pi \le -x \,\big|\, \mathcal{H}_\tau,\theta\big)
\;\le\;
\exp\!\Big(-\frac{x^2}{2\psi^2 \tau}\Big)$.
Thus, we have 
\begin{align*}
\PP_\theta\big(|W_\pi - S_\pi|\ge x \,\big|\, \mathcal{H}_\tau,\theta\big)\le 2 \exp\!\Big(-\frac{x^2}{2\psi^2 \tau}\Big)
\quad \text{almost surely},
\end{align*}
and 
\begin{align*}
\PP_\theta\big(|W_\pi - S_\pi|\ge x\big)
&\le \EE_\theta\Big[2 \exp\!\Big(-\frac{x^2}{2\psi^2 \tau}\Big)\Big] = 2\,\EE_\theta\Big[\exp\!\Big(-\frac{x^2}{2\psi^2 \tau}\Big)\Big].\tag*{\qed}
\end{align*}
\end{proof}

\vspace{0.15in}

Then, in the following lemma, we show that the second moment of $W_\pi-S_\pi$ is bounded. 
\begin{lemma}[Second-moment bound for the waiting-time fluctuation]
\label{lem:W-second-moment}
Under Assumption~\ref{assum:wait}, for each $\theta\in\{A,B\}$,
\[
\EE_\theta\big[(W_\pi - S_\pi)^2\big] \;\le\; \psi^2\,\EE_\theta[\tau].
\]
Consequently, under the Bayes measure,
\[
\EE\big[(W_\pi - S_\pi)^2\big] \;\le\; \psi^2\,\EE[\tau].
\]
\end{lemma}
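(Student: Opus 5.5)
Conditioning on $(\theta,\mathcal{H}_\tau)$ reduces everything to a sum of a deterministic number of independent, centered, sub-Gaussian terms; we then remove the conditioning with the tower property. Write $W_\pi - S_\pi = \sum_{t=1}^\tau (T_{j_t,t}-\mu_{j_t})$.

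\textbf{Step 1 (conditional second moment).} As in the proof of Lemma~\ref{lem:W-mgf}, Assumption~\ref{assum:indep} makes $\tau$ and $j_{1:\tau}$ measurable with respect to $(\theta,\mathcal{H}_\tau)$. Given these, the summands $T_{j_t,t}-\mu_{j_t}$ are mutually independent and have mean zero. Expanding the square, the cross terms vanish conditionally, so
\[
\EE_\theta\big[(W_\pi-S_\pi)^2 \,\big|\, \mathcal{H}_\tau,\theta\big] = \sum_{t=1}^\tau \Var(T_{j_t,t}\mid j_t).
\]

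\textbf{Step 2 (per-query variance).} We bound each variance using Assumption~\ref{assum:wait}. Expand $\EE[e^{\lambda X}]\le e^{\psi^2\lambda^2/2}$ for $X=T_{j,t}-\mu_j$ around $\lambda=0$. Since $\EE X=0$, comparing the $\lambda^2$ coefficients gives $\EE[X^2]/2 \le \psi^2/2$. Equivalently, $\lim_{\lambda\to0} 2\big(\EE e^{\lambda X}-1\big)/\lambda^2 = \EE X^2 \le \psi^2$. This limit is justified by dominated convergence, since sub-Gaussian tails give integrable exponential moments. Hence the conditional second moment is at most $\psi^2\tau$.

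\textbf{Step 3 (uncondition).} Taking $\EE_\theta$ of Steps 1–2 yields $\EE_\theta[(W_\pi-S_\pi)^2]\le\psi^2\EE_\theta[\tau]$. Averaging over $\theta$ with weights $\xi_A,\xi_B$ gives the Bayes-measure statement.

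\textbf{Main obstacle.} The only delicate point is Step 1 when $\tau$ is unbounded. If $\EE_\theta[\tau]=\infty$ the bound is trivial, so assume $\EE_\theta[\tau]<\infty$. In that case I would either argue directly through the conditional expansion, which is legitimate because the number of terms is fixed under the conditioning, or apply Wald's second identity to the martingale $\sum_{t\le n}(T_{j_t,t}-\mu_{j_t})$ stopped at $\tau\wedge n$ and let $n\to\infty$ using Fatou's lemma.
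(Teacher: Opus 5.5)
Your proposal is correct and follows essentially the paper's route. You condition on $(\theta,\mathcal{H}_\tau)$, use conditional independence and zero mean of $T_{j_t,t}-\mu_{j_t}$ to reduce the conditional second moment to $\sum_{t\le\tau}\Var(T_{j_t,t})$, and bound each variance by $\psi^2$; the paper cites Lemma~5.4 of \cite{lattimore2020bandit} for this bound, whereas you derive it from the small-$\lambda$ expansion of the MGF. You then uncondition and average over $\theta$, as the paper does.

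Your Wald/Fatou fallback, treating $T_{j_t,t}-\mu_{j_t}$ as martingale differences with respect to $\mathcal{G}_t$, is in fact the more robust version. The conditioning-on-$\mathcal{H}_\tau$ step, in both your proof and the paper's, tacitly assumes the LLM choices carry no information about past waiting times. Definition~\ref{def:admissible} does not rule that out, although it does hold for the sign-based policy where the lemma is applied.
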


\begin{proof}{Proof.}
We prove the bound under $\PP_\theta$ for a fixed $\theta\in\{A,B\}$; the Bayes bound then follows by averaging over $\theta$.
For each $t\ge1$, define
\[
D_t := T_{j_t,t} - \mu_{j_t},
\]
where $j_t$ is the LLM selected at time $t$ (which is $\mathcal{H}_t$-measurable and hence $\mathcal{H}_\tau$-measurable for $t\le\tau$). We have $\EE_\theta[D_t \mid \mathcal{H}_\tau, \theta]=0$ and $D_t$'s are conditionally independent given $(\theta, \mathcal{H}_\tau)$. Since $W_\pi - S_\pi=\sum_{t=1}^\tau D_t$, for each fixed $t\le\tau$, we have
\begin{align*}
\Var_\theta(D_t \mid \mathcal{H}_\tau,\theta)
&= \EE\big[(T_{j_t,t} - \mu_{j_t})^2 \mid \mathcal{H}_\tau,\theta\big] = \EE\big[(T_{j_t,t} - \mu_{j_t})^2 \mid j_t\big]
= \Var(T_{j_t,t}) \le \psi^2,
\end{align*}
where the inequality is due to Lemma~5.4 in \cite{lattimore2020bandit}. 
Since $(D_t)_{t\le\tau}$ are conditionally independent given $(\theta,\mathcal{H}_\tau)$ and each has conditional mean zero, we have
\begin{align*}
\EE_\theta\big[(W_\pi - S_\pi)^2 \mid \mathcal{H}_\tau,\theta\big]
&= \Var_\theta\Big(\sum_{t=1}^\tau D_t \,\Big|\, \mathcal{H}_\tau,\theta\Big) = \sum_{t=1}^\tau \Var_\theta(D_t \mid \mathcal{H}_\tau,\theta)\le \psi^2\,\tau.
\end{align*}

Then, we have 
\begin{align*}
    \EE_\theta\big[(W_\pi - S_\pi)^2\big] \le \psi^2\,\EE_\theta[\tau],
\end{align*}
and 
\begin{align*}
    \EE\big[(W_\pi - S_\pi)^2\big] &\le \psi^2\,\EE[\tau].\tag*{\qed}
\end{align*}

\end{proof}

\subsection{Expected Number of Wong Queries}
In this subsection, we provide upper bounds for the expected number of ``wrong-side'' queries under the policy $\pi^{(2)}_{j_A^\star, j_B^\star}$. 
\begin{lemma}[Finite expected number of ``wrong-side'' queries]
\label{lem:wrong-side-finite}
Consider the sign-based policy $\pi^{(2)}_{j_A^\star, j_B^\star}$ for any fixed $\alpha$.  Then there exists a constant $C_{\mathrm{neg}}^A>0$,
independent of $\alpha$, such that
\begin{equation}
\label{eq:nB-A-bounded}
  n_{j_B^\star,A} = \EE_A[N_{j_B^\star}] \le C_{\mathrm{neg}}^A
  \qquad\text{for all sufficiently small }\alpha.
\end{equation}
Similarly, there exists a constant $C_{\mathrm{pos}}^B>0$, independent of
$\alpha$, such that
\begin{equation}
\label{eq:nA-B-bounded}
  n_{j_A^\star,B} = \EE_B[N_{j_A^\star}] \le C_{\mathrm{pos}}^B
  \qquad\text{for all sufficiently small }\alpha.
\end{equation}
\end{lemma}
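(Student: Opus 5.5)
Under $\PP_A$, the number of queries to $j_B^\star$ equals the number of rounds $t\le\tau$ with $L_{t-1}<0$. So we need to bound the expected time spent below zero by a process with positive drift. The plan is to bound this time by the occupation time of the negative half-line for a random walk, uniformly in $\alpha$. The bound on $\EE_B[N_{j_A^\star}]$ follows by the symmetric argument with $-L_t$.

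\textbf{Step 1 (drift is positive everywhere).} Under $\PP_A$ each increment $\ell_{j_t}(Y_t)$, whether $j_t=j_A^\star$ or $j_B^\star$, has conditional mean $I_{j_t,A}\ge \underline I:=\min\{I_{j_A^\star,A},I_{j_B^\star,A}\}>0$. The increments are bounded by $C_\ell$. I would therefore extend the process beyond $\tau$: after $\tau$, keep applying the sign rule with no stopping. Call the extended process $\tilde L_t$. Then
\[
N_{j_B^\star}(\tau)\le \sum_{t\ge 1}\mathbf 1\{\tilde L_{t-1}<0\}.
\]
This upper bound does not depend on $\alpha$, since the thresholds are no longer involved. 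Note that $\tilde L$ is not an i.i.d.\ random walk, because the step law depends on its sign.

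\textbf{Step 2 (tail bound via Freedman/Hoeffding).} Apply Lemma~\ref{lem:drift-decomp} to the extended process: $\tilde L_t=\sum_{s\le t}I_{j_s,A}+M_t^{(A)}\ge \underline I\,t+M_t^{(A)}$. Hence
\[
\PP_A(\tilde L_t<0)\le \PP_A(M_t^{(A)}\le -\underline I t).
\]
By Lemma~\ref{lem:Freedman-LLR} (or Azuma–Hoeffding with increments bounded by $2C_\ell$), this is at most $\exp(-\underline I^2 t/(8C_\ell^2))$. Summing over $t$ gives
\[
\EE_A[N_{j_B^\star}]\le 1+\sum_{t\ge1}e^{-ct}=:C^A_{\rm neg},
\]
a constant independent of $\alpha$. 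Here $L_0=0$, and because the rule uses $\ge0$ the first query goes to $j_A^\star$; the constant $1$ is simply harmless slack.

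\textbf{Main obstacle.} The delicate point is making the extension past $\tau$ rigorous. The martingale decomposition must still hold with respect to $\mathcal G_t$ for the extended, adaptively chosen sequence. This works because the selection rule is $\mathcal G_{t-1}$-measurable and outputs are conditionally independent given $\theta$ (Assumption~\ref{assum:indep}). The indicator comparison must also hold pathwise, since $\tilde L_t=L_t$ for $t\le\tau$. If coupling past $\tau$ feels awkward, an alternative is to bound $\PP_A(L_{t-1}<0,\,t\le\tau)\le\PP_A(\tilde L_{t-1}<0)$ directly, which is the same estimate.
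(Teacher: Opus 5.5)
Your argument is correct, but it takes a different route from the paper's.

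\textbf{The paper's route.} The paper compares $L_t$ with an auxiliary i.i.d.\ walk whose steps have mean $m_A=\min\{I_{j_A^\star,A},I_{j_B^\star,A}\}$ and lie in $[-C_\ell,C_\ell]$. It asserts a coupling with $L_s\ge L_t+(\tilde L_s-\tilde L_t)$, on the grounds that the conditional drift of $L$ is at least $m_A$. It then applies Hoeffding to the i.i.d.\ walk and sums the tail.

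\textbf{Your route.} You never leave the actual adaptive process. The decomposition of Lemma~\ref{lem:drift-decomp} gives drift at least $\underline I\,t$. Azuma--Hoeffding (or Lemma~\ref{lem:Freedman-LLR}) on the bounded-increment martingale then gives $\PP_A(\tilde L_t<0)\le e^{-\underline I^2 t/(8C_\ell^2)}$ directly, so the sign-dependent step law causes no difficulty.

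\textbf{What each buys.} Your route needs no comparison step, and the comparison step is the weak point of the paper's proof. A lower bound on conditional means does not by itself give a pathwise coupling. Worse, a single step law that is stochastically dominated by both increment laws need not have positive mean. For instance, take $(\gamma_{1,A},\gamma_{1,B})=(0.99,0.51)$ and $(\gamma_{2,A},\gamma_{2,B})=(0.51,0.99)$. The pointwise maximum of the two step distribution functions under $\PP_A$ then has mean about $-0.018$. So the reduction to an i.i.d.\ walk is not automatic.

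The paper's route, once such a comparison were secured, would let one quote classical random-walk facts. Yours gives the geometric tail $\PP_A(j_t=j_B^\star,\,t\le\tau)\le e^{-c(t-1)}$ that is reused in Lemma~\ref{convergence} and Claim~1. It also gives the second-moment bound needed there, since $\EE_A[N_{j_B^\star}^2]\le\sum_t(2t-1)\PP_A(j_t=j_B^\star,\,t\le\tau)<\infty$.

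\textbf{Two small remarks.} First, you can avoid extending the process past $\tau$ altogether. Apply Azuma to the stopped martingale $M^{(A)}_{(t-1)\wedge\tau}$: on $\{\tau>t-1\}$ the drift sum has exactly $t-1$ terms, each at least $\underline I$. Second, you correctly use the switching rule $L_{t-1}<0$ of Definition~\ref{def:2llm}, whereas the paper's proof writes $L_{t-1}<-\delta$. This discrepancy only affects constants.
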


\begin{proof}{Proof.}
We prove \eqref{eq:nB-A-bounded}; the argument for
\eqref{eq:nA-B-bounded} is symmetric.

Under $\PP_A$, the conditional expectation of the LLR increment at time $t$
given $j_t$ is $I_{j_t,A}>0$.  In particular, whether we query
$j_A^\star$ or $j_B^\star$, the LLR has strictly positive drift under $A$.
Therefore, $(L_t)_{t\ge 0}$ is a random walk with uniformly bounded
increments and strictly positive
drift under $\PP_A$.

Each time $j_B^\star$ is queried under $\theta=A$, the LLR at the beginning
of that round satisfies $L_{t-1}+\delta<0$ by the definition of the policy 
$\pi^{(2)}_{j_A^\star, j_B^\star}$.  Thus $N_{j_B^\star}$ is exactly the 
number of time indices $t\le\tau$ for which the process is below $-\delta$ just before querying:
\begin{equation}
\label{eq:nB-negative-visits}
  N_{j_B^\star}
  = \sum_{t=1}^\tau \ind_{\{j_t = j_B^\star\}}
  = \sum_{t=1}^\tau \ind_{\{L_{t-1}<-\delta\}}.
\end{equation}
Therefore, $N_{j_B^\star}$ is bounded above by the total number of visits of the
process $(L_t)$ to the value $-\delta$ before it crosses the upper
boundary $A_\alpha$.

In our setting, the increments of $L_t$ are not i.i.d.\ because the policy
may switch between $j_A^\star$ and $j_B^\star$.  However, each increment
belongs to the finite set
\[
  \{\ell_{j_A^\star}(A),\ell_{j_A^\star}(B),
    \ell_{j_B^\star}(A),\ell_{j_B^\star}(B)\},
\]
and under $\PP_A$ all these increments have strictly positive expectation (since
$I_{j,A} = \EE_A[\ell_j(Y)] > 0$ for both $j \in \{j_A^\star, j_B^\star\}$).  Therefore, we can bound the behavior of $(L_t)$ 
using a stochastic dominance argument.

\vspace{0.1in}
In the following, we construct a auxiliary random walk to prove the result. 
Let
\[
  m_A := \min\{I_{j_A^\star,A}, I_{j_B^\star,A}\} > 0
\]
be the minimal drift under $\theta=A$ across the two LLMs. 
Define an auxiliary random walk $(\tilde{L}_t)_{t\ge 0}$ with $\tilde{L}_0 = 0$ and 
i.i.d.\ increments $\tilde{X}_t$ such that:
\begin{itemize}
\item $\EE[\tilde{X}_t] = m_A > 0$,
\item $|\tilde{X}_t|\le C_\ell$ almost surely.
\end{itemize}

To verify that such a random walk can be constructed: we can take $\tilde{X}_t$ to be 
a random variable on $[-C_\ell, C_\ell]$ with mean $m_A$. For example, 
$\tilde{X}_t$ can be a shifted and scaled Rademacher random variable. The key property 
we need is  $\tilde{X}_t$ has the smallest drift among all possible LLR increments 
under $\PP_A$, while maintaining bounded support.

We claim that whenever $L_t < -\delta$, the process $(L_s-L_t)_{s \ge t}$ stochastically 
dominates $(\tilde{L}_s - \tilde{L}_t)_{s \ge t}$ in the sense that starting from the 
same initial position, $L_t$ drifts upward at least as fast as $\tilde{L}_t$.

To see this formally, note that for each $s \ge t$:
\[
  \EE_A[L_s - L_t \mid \mathcal{G}_t]
  = \sum_{u=t+1}^s I_{j_u,A}
  \ge \sum_{u=t+1}^s m_A
  = (s-t) m_A
  = \EE[\tilde{L}_s - \tilde{L}_t].
\]
This shows that conditional on any realization up to time $t$, the expected increment 
of $(L_s)$ is at least as large as that of $(\tilde{L}_s)$. By a standard coupling 
argument, we can construct both processes on the same probability space such that
\[
  L_s \ge L_t + (\tilde{L}_s - \tilde{L}_t)
  \quad\text{almost surely for all } s \ge t,
\]
whenever $L_t = \tilde{L}_t$. Therefore, if $L_t < -\delta$, then starting from this point, 
$L_s$ reaches $-\delta$ before $\tilde{L}_s$ does.

Then, we are ready to bound the total number of wrong-side queries.
Let $N_{\mathrm{neg}}$ denote the total number of visits to the value $-\delta$ for 
$\tilde{L}_t$:
\[
  N_{\mathrm{neg}}
  := \sum_{t=1}^\infty \ind_{\{\tilde{L}_t<-\delta\}}.
\]

By the strong law of large numbers, since $\tilde{X}_t$ are i.i.d.\ with 
$\EE[\tilde{X}_t] = m_A > 0$, we have
\[
  \frac{\tilde{L}_t}{t} = \frac{1}{t}\sum_{s=1}^t \tilde{X}_s 
  \;\xrightarrow[t \to \infty]{a.s.}\; m_A > 0.
\]
Therefore, $\tilde{L}_t \to \infty$ almost surely as $t\to\infty$. This implies that 
$\tilde{L}_t$ can visit the the finite value $-\delta$ only finitely many times almost surely. 
In particular, we have 
\[
  \EE[N_{\mathrm{neg}}]
  = \EE\!\left[\sum_{t=1}^\infty \ind_{\{\tilde{L}_t<-\delta\}}\right]
  = \sum_{t=1}^\infty \EE[\ind_{\{\tilde{L}_t<-\delta\}}]
  = \sum_{t=1}^\infty \PP(\tilde{L}_t < -\delta)
  < \infty.
\]

The finiteness follows from standard random walk theory: for a random walk with positive 
drift and bounded increments, the probability $\PP(\tilde{L}_t < 0)$ decays exponentially 
in $t$. Specifically, by Hoeffding's inequality, there exist constants $c, \gamma > 0$ such that
\[
  \PP(\tilde{L}_t < -\delta) \le c \exp(-\gamma t)
  \quad\text{for all } t \ge 1.
\]
Therefore,
\[
  \sum_{t=1}^\infty \PP(\tilde{L}_t < -\delta)
  \le \sum_{t=1}^\infty c \exp(-\gamma t)
  = \frac{c \exp(-\gamma)}{1 - \exp(-\gamma)}
  < \infty.
\]

By the stochastic dominance established in Step 4, whenever $L_t < -\delta$, the process 
$(L_s)$ reaches $-\delta$ before $(\tilde{L}_s)$ does (when both start from the same negative 
value). Therefore, $L_t$ spends less time in the negative half-line than $\tilde{L}_t$:
\[
  N_B
  = \sum_{t=1}^\tau \ind_{\{L_{t-1}<-\delta\}}
  \le \sum_{t=1}^\infty \ind_{\{L_{t-1}<-\delta\}}
  \le \sum_{t=1}^\infty \ind_{\{\tilde{L}_{t-1}<-\delta\}}
  \le N_{\mathrm{neg}} + 1.
\]

The last inequality accounts for the shift in indices. Taking expectations under $\PP_A$:
\[
  n_{B,A} = \EE_A[N_B] \le \EE_A[N_{\mathrm{neg}}] + 1 =: C_{\mathrm{neg}}^A < \infty,
\]
where $C_{\mathrm{neg}}^A$ does not depend on the thresholds $(A_\alpha,B_\alpha)$ 
and hence is independent of~$\alpha$.

The proof of \eqref{eq:nA-B-bounded} under $\theta=B$ is identical, swapping the roles 
of $(A,B)$ and $(j_A^\star,j_B^\star)$. Under $\PP_B$, the LLR has negative drift 
(i.e., $-L_t$ has positive drift), so $L_t \to -\infty$ almost surely. Therefore, $L_t$ 
can visit the positive half-line only finitely many times, and the expected number of 
queries to $j_A^\star$ under $\PP_B$ is bounded by a constant $C_{\mathrm{pos}}^B$ 
independent of $\alpha$. \hfill \qed
\end{proof}

\vspace{0.1in}

\setcounter{equation}{0}
\section{Proofs of Lemmas and Propositions}\label{sec:other_res}
In this section, we provide proofs for theoretical results, excluding Theorems~\ref{thm:two-llm-upper} and \ref{thm:lower-mixture}. 

\subsection{Proof of Lemma~\ref{lemma1}} 
Given the historical LLM choices $j_{1:t}$ and outputs $Y_{1:t}$, the posterior belief of $\theta=A$ can be computed as follows:
\begin{align*}
\PP(\theta=A\mid Y_{1:t}, j_{1:t}) &= \frac{\xi_A \cdot \PP(Y_{1:t}\mid \theta=A, j_{1:t})}{\xi_A \cdot \PP(Y_{1:t}\mid \theta=A, j_{1:t}) + \xi_B \cdot \PP(Y_{1:t}\mid \theta=B, j_{1:t})} \\
&= \frac{\xi_A \cdot \Pi_{\ell=1}^t \PP(Y_\ell\mid \theta=A, j_\ell)}{\xi_A \cdot \Pi_{\ell=1}^t \PP(Y_\ell\mid \theta=A, j_\ell) + \xi_B \cdot \Pi_{\ell=1}^t \PP(Y_\ell\mid \theta=B, j_\ell)}\\
&= \frac{e^{\delta+L_t}}{1+e^{\delta+L_t}},
\end{align*}
where the first equation is due to Bayes' theorem, the second equation is due to Assumption~\ref{assum:indep}, and the last equation is due to \eqref{eq:llr-asym}. Then the posterior belief of $\theta=B$ is 
\[
\PP(\theta=B\mid Y_{1:t}, j_{1:t})=1 - \PP(\theta=A\mid Y_{1:t}, j_{1:t}) = \frac{1}{1+e^{\delta+L_t}}. 
\]

The stopping criterion in \eqref{decision} can rewritten as follow:
\begin{align*}
    \PP(\theta=A\mid Y_{1:t}, j_{1:t}) \ge 1-\alpha \quad \Leftrightarrow \quad \frac{e^{\delta+L_t}}{1+e^{\delta+L_t}} \ge 1-\alpha \quad \Leftrightarrow \quad L_t \ge \log \frac{1-\alpha}{\alpha} - \delta = A_\alpha,\\
    \PP(\theta=B\mid Y_{1:t}, j_{1:t}) \ge 1-\alpha \quad \Leftrightarrow \quad \frac{1}{1+e^{\delta+L_t}} \ge 1-\alpha \quad \Leftrightarrow \quad -L_t \ge \log \frac{1-\alpha}{\alpha} + \delta = B_\alpha. \tag*{\qed}
\end{align*}

\vspace{0.1in}

\subsection{Proof of Lemma~\ref{lem:info-budget-asym}}

We proceed in three steps.

\noindent
\textbf{Step 1: Drift identities under $\theta=A$ and $\theta=B$.}

By Lemma~\ref{lem:drift-decomp}, under $\PP_A$ we can write
\[
  L_t = \sum_{s=1}^t I_{j_s,A} + M_t^{(A)},
\]
where $(M_t^{(A)})_{t\ge 0}$ is a martingale under $\PP_A$ with respect to
the natural filtration $(\mathcal{G}_t)_{t\ge 0}$, and $I_{j_s,A}=\EE_A[\ell_{j_s}(Y)]$ is the conditional expectation of the LLR increment given $j_s$.  

First, we have the following conditions:
\begin{enumerate}[label=(\roman*),leftmargin=8mm]
\item $\tau<\infty$ almost surely under $\PP_A$,
\item $\EE_A[\tau]<\infty$,
\item The increments of $M_t^{(A)}$ are bounded: $|M_t^{(A)} - M_{t-1}^{(A)}|\le |L_t-L_{t-1}|+|I_{j_t, A}| \le 2C_\ell$ almost surely for all $t\ge 1$.
\end{enumerate}

Given the above conditions, by the Optional 
Stopping Theorem (see Theorem~5.7.3 in \citealt{durrett2019probability}),
\[
  \EE_A[M_\tau^{(A)}] = \EE_A[M_0^{(A)}] = 0.
\]

Taking expectations at the stopping time $\tau$,
\begin{align*}
  \EE_A[L_\tau]
  &= \EE_A\!\left[\sum_{s=1}^\tau I_{j_s,A} + M_\tau^{(A)}\right]= \EE_A\!\left[\sum_{s=1}^\tau I_{j_s,A}\right]
     + \EE_A[M_\tau^{(A)}]= \EE_A\!\left[\sum_{s=1}^\tau I_{j_s,A}\right].
\end{align*}
Rearranging the sum by grouping terms according to which LLM is queried,
\[
  \sum_{s=1}^\tau I_{j_s,A}
  = \sum_{j=1}^M I_{j,A}
      \sum_{s=1}^\tau \mathbf{1}_{\{j_s=j\}}
  = \sum_{j=1}^M I_{j,A} N_j(\tau),
\]
where $N_j(\tau)$ is defined in \eqref{eq:def-nj}.  

To interchange expectation and summation, we verify the conditions for 
Fubini's theorem. Since $M$ is finite, $I_{j,A} \in (0,C_\ell]$ for all $j$, and $\EE_A[\tau] < \infty$, we have
\[
  \EE_A\!\left[\sum_{j=1}^M I_{j,A} N_j(\tau)\right]
  \le \EE_A\!\left[C_\ell \cdot \tau\right]
  = C_\ell \cdot \EE_A[\tau]
  < \infty.
\]
Therefore, by Fubini's theorem, we can interchange expectation and the finite sum:
\[
  \EE_A[L_\tau]
  = \sum_{j=1}^M I_{j,A} \EE_A[N_j(\tau)]
  = \sum_{j=1}^M I_{j,A} n_{j,A}.
\]

An analogous argument under $\PP_B$ uses the fact that $-L_t$ has drift 
$I_{j_s,B}$ under $\theta=B$. Specifically, by Lemma~\ref{lem:drift-decomp}, 
we can write
\[
  -L_t = \sum_{s=1}^t I_{j_s,B} + M_t^{(B)},
\]
where $(M_t^{(B)})$ is a martingale under $\PP_B$ with $M_0^{(B)}=0$. 
The same verification as above shows that $\EE_B[-L_\tau]= \sum_{j=1}^M I_{j,B} n_{j,B}$.

\medskip\noindent
\textbf{Step 2: Lower bounds.}

According to Lemma~\ref{lem:error-bounds}, there exists a constant $c_{\mathrm{err}} > 0$ (independent of 
$\alpha$ and of the policy) such that for all sufficiently small $\alpha$,
\begin{equation}
\label{eq:error-bds-step2}
  \PP_A(D=B) \le c_{\mathrm{err}}\alpha,
  \qquad
  \PP_B(D=A) \le c_{\mathrm{err}}\alpha.
\end{equation}

According to Lemma~\ref{lem:ELtau-vs-boundaries}, we obtain
\[
  \EE_A[L_\tau] \ge A_\alpha - c_{\mathrm{err}}\alpha (A_\alpha+B_\alpha+C_\ell) = A_\alpha - K_\alpha,
\]
where $K_\alpha:=c_{\mathrm{err}}\alpha (A_\alpha+B_\alpha+C_\ell)$.
By Step~1, $\EE_A[L_\tau] = \sum_j I_{j,A} n_{j,A}$, so we have established
\eqref{eq:EA-info-lb}. A similar argument can induce \eqref{eq:EB-info-lb}. 

\medskip\noindent
\textbf{Step 3: Upper bound of $K_\alpha/S_\alpha$.}
Since $x\log \frac{1}{x}\le 1/e$, we can also deduce that 
\[
K_\alpha=c_{\mathrm{err}} \alpha (A_\alpha+B_\alpha+C_\ell) \le 2c_{\mathrm{err}} \alpha \log\frac{1}{\alpha} + c_{\mathrm{err}} \alpha C_\ell\le \frac{2c_{\mathrm{err}}}{e} + c_{\mathrm{err}}C_\ell.
\] 
Thus, we can set $C_K=\frac{2c_{\mathrm{err}}}{e} + c_{\mathrm{err}}C_\ell$. \hfill \qed

\vspace{0.1in}

\subsection{Proof of Lemma~\ref{lem:cost-wait-decomp}}
By definition, the total cost is
\[
C_\pi = \sum_{t=1}^\tau c_{j_t}
= \sum_{j=1}^M c_j \sum_{t=1}^\tau \mathbf{1}_{\{j_t=j\}}
= \sum_{j=1}^M c_j N_j(\tau).
\]
Taking expectations under $\PP_\theta$ and using $\EE_\theta[\tau]<\infty$ gives~\eqref{eq:cost-decomp-theta} by dominated convergence (since costs are uniformly bounded). The proof of $W_\pi$ is identical with $\mu_j$ in place of $c_j$.
The Bayes measure identities~\eqref{eq:cost-decomp-bayes} follow immediately from
\[
\EE[C_\pi] = \xi_A \EE_A[C_\pi] + \xi_B \EE_B[C_\pi]
= \sum_{j=1}^M c_j \big(\xi_A \EE_A[N_j(\tau)] + \xi_B \EE_B[N_j(\tau)]\big),
\]
and similarly for $\EE[W_\pi]$. \hfill \qed

\vspace{0.1in}

\subsection{Proof of Proposition~\ref{prop:g-Jensen-lower}}

We prove the inequality in three steps: First we apply Jensen's inequality, then we verify the necessary integrability condition, and finally we substitute the expression for $\EE[W_\pi]$ from Lemma~\ref{lem:cost-wait-decomp}.

First, we need to verify that $\EE[W_\pi]<\infty$ so that Jensen's inequality can be applied. By definition,
\[
W_\pi = \sum_{t=1}^\tau T_{j_t,t},
\]
where $\tau$ is the stopping time and $j_t\in\{1,\ldots,M\}$ is the LLM selected at time $t$.

Using the law of iterated expectations and Assumption~\ref{assum:indep}:
\begin{align*}
\EE[W_\pi] = \EE\Big[\sum_{t=1}^\tau T_{j_t,t}\Big] = \EE\Big[\EE\Big[\sum_{t=1}^\tau T_{j_t,t} \,\Big|\, \mathcal{H}_\tau\Big]\Big]= \EE[S_\pi] \le \mu_{\max} \EE[\tau] < \infty,
\end{align*}
where $S_\pi := \sum_{t=1}^\tau \mu_{j_t}$ as defined in Lemma~\ref{lem:W-mgf}. 

Since $g:[0,\infty)\to[0,\infty)$ is convex by Assumption~\ref{assum:g}, and $W_\pi\ge0$ almost surely (since waiting times are nonnegative), Jensen's inequality for convex functions states that
\[
\EE[g(W_\pi)] = \xi_A\EE_A[g(W_\pi)] + \xi_B\EE_B[g(W_\pi)]\;\ge\; \xi_Ag(\EE[W_\pi^{(A)}]) + \xi_Bg(\EE[W_\pi^{(B)}]).
\]

Substituting the expression of $\EE[W_\pi]$ in \eqref{eq:cost-decomp-bayes} into the above inequality, we have:
\[
\EE[g(W_\pi)]
\;\ge\;
g(\EE[W_\pi])
\;=\;
\xi_Ag\!\Big(\sum_{j=1}^M \mu_j n_{j, A}\Big) + \xi_Bg\!\Big(\sum_{j=1}^M \mu_j n_{j, B}\Big). \tag*{\qed}
\]

\vspace{0.1in}

\subsection{Proof of Lemma~\ref{lem:tightness}}
We prove this by showing that any minimizer of (LB) must satisfy both constraints with equality.
Suppose $(n^{A,*}, n^{B,*})$ is a minimizer of $F_\alpha$ subject to 
\eqref{eq:two-info-constraints}, but
\[
  \sum_{j=1}^M I_{j,A} n_{j,A}^\star > S_A(\alpha).
\]
We will construct a feasible point with strictly smaller objective value, contradicting 
optimality.

Define the scaling factor
\[
  \lambda := \frac{S_A(\alpha)}{\sum_{\ell=1}^M I_{\ell,A} n_{\ell,A}^\star}.
\]
Since $\sum_{\ell=1}^M I_{\ell,A} n_{\ell,A}^\star > S_A(\alpha)$ by assumption, we have 
$\lambda < 1$. 
Consider the scaled vector
\[
  \tilde{n}_{j,A}
  := \lambda \, n_{j,A}^\star,
  \qquad j=1,\ldots,M,
\]
and keep $\tilde{n}^B := n^{B,*}$ unchanged. Since the objective is non-decreasing in $n_{j, A}$'s, it can be verified that $(\tilde{n}^A, n^{B,*})$ is feasible and $F_\alpha(\tilde{n}^A, n^{B,*}) < F_\alpha(n^{A,*}, n^{B,*})$. 
Therefore, any minimizer must satisfy the first constraint with equality. The same argument applies to the second constraint.\hfill \qed
\vspace{0.1in}

\subsection{Proof of Lemma~\ref{lem:query-count-asymptotics}}
We prove the statements under hypothesis $A$; the case $B$ is analogous.

First, according to \eqref{eq:L_tau}, under the policy $\pi_{j_A^\star, j_B^\star}^{(2)}$, we obtain $\EE_A[L_\tau] = I_{j_A^\star,A}\,n_{j_A^\star,A} + I_{j_B^\star,A}\,n_{j_B^\star,A}.$
Second, according to Lemma~\ref{lem:ELtau-vs-boundaries}, we have $\EE_A[L_\tau] \le A_\alpha + C_\ell.$
Combining the above two results, we have:
\begin{equation}
\label{eq:drift-eq-with-wrong-side}
  I_{j_A^\star,A}\,n_{j_A^\star,A} + I_{j_B^\star,A}\,n_{j_B^\star,A}
  \le A_\alpha + C_\ell .
\end{equation}

First, since $I_{j_B^\star,A}>0$, we have
\[
  I_{j_A^\star,A}\,n_{j_A^\star,A} = A_\alpha + C_\ell - I_{j_B^\star,A}\,n_{j_B^\star,A}
  \le A_\alpha + C_\ell = S_A(\alpha)+K_\alpha + C_\ell.
\]

Second, since $I_{j_B^\star,A}>0$ and $n_{j_B^\star, A}\le C_{\mathrm{neg}}^A$, we have 
\[
I_{j_A^\star,A}\,n_{j_A^\star,A} = A_\alpha + C_\ell - I_{j_B^\star,A}\,n_{j_B^\star,A} \ge A_\alpha +C_\ell- I_{j_B^\star,A}C_{\mathrm{neg}}^A
\]

Moreover, according to Lemma~\ref{lem:wrong-side-finite}, we have $n_{j_B^\star, A}\le C_{\mathrm{neg}}^A$ and $n_{j_A^\star, B}\le C_{\mathrm{pos}}^B$. \hfill \qed

\subsection{Proof of Proposition~\ref{lem:EC-EW-asym}}
We prove the result for $\EE[C_\pi]$; the proof for $\EE[W_\pi]$ is analogous. We can write the total query cost as
\[
  C_\pi = \sum_{t=1}^\tau c_{j_t}
    = c_{j_A^\star} N_{j^\star_A}(\tau) + c_{j_B^\star} N_{j^\star_B}(\tau)\\
\]

Taking expectations under the prior $(\xi_A,\xi_B)$ using the law of total expectation, we have:
\begin{align*}
  \EE[C_\pi]
  &= \xi_A \EE_A[C_\pi] + \xi_B \EE_B[C_\pi]\\
  &= \xi_A (c_{j_A^\star} n_{j_A^\star,A} + c_{j_B^\star} n_{j_A^\star,A})
     + \xi_B (c_{j_A^\star} n_{j_A^\star,B} + c_{j_B^\star} n_{j_B^\star,B})\\
    &= c_{j_A^\star} \cdot  \xi_A \cdot n_{j_A^\star, A}  + c_{j_B^\star} \cdot \xi_B\cdot n_{j_B^\star, B} + \left(  c_{j_A^\star} \cdot \xi_B \cdot n_{j_A^\star, B} + c_{j_B^\star} \cdot \xi_A\cdot n_{j_B^\star, A} \right) \notag\\
&= \kappa_{j_A^\star, A} \cdot  \xi_A \cdot S_A(\alpha) \cdot \frac{n_{j_A^\star, A} \cdot I_{j_A^\star, A}}{S_A(\alpha)} + \kappa_{j_B^\star, B} \cdot \xi_B \cdot S_B(\alpha) \cdot \frac{n_{j_B^\star, B}\cdot I_{j_B^\star, B} }{S_B(\alpha)} \notag \\
&\hspace{4mm}+ \left(  c_{j_A^\star} \cdot \xi_B \cdot n_{j_A^\star, B} + c_{j_B^\star} \cdot \xi_A\cdot n_{j_B^\star, A} \right).
\end{align*}

According to Lemma~\ref{lem:query-count-asymptotics}, we can deduce that $\EE[C_\pi] = \xi_AS_A(\alpha)\kappa_{j_A^\star, A} + \xi_BS_B(\alpha)\kappa_{j_B^\star, B} + O(1)$.\hfill \qed 

\vspace{0.1in}

\subsection{Proof of Lemma~\ref{convergence}}
The first statement is proved in Lemma~\ref{lem:W-second-moment}. In the following, we prove the second statement when $\theta=A$. The proof for the case where $\theta=B$ is similar.  

Define $\Delta:= \sqrt{\EE_A[\tau]\cdot \log \EE_\theta[\tau]}$. We let $N^B= \sum_{s=1}\mathbf{1}\{j_s=j_B^*\}$ and $N^A= \sum_{s=1}\mathbf{1}\{j_s=j_A^*\}$. We first consider $\PP_A(\tau-E_A[\tau] >\Delta)$. We let $\bar n = c_1\cdot \log (\EE_A[\tau])$, then we have 
\begin{align}
\PP_A(\tau>\EE_A[\tau]+\Delta) \le \PP_A(N^B\ge \bar n)  + \PP_A\left(\tau>\EE_A[\tau]+\Delta, N^B\le \bar n\right). \label{combine}
\end{align} 
Recall that for any $t$, we have $\gamma_1, \gamma_2>0$ such that 
\[ 
\PP_A(j_t=j_B^*)=\PP(L_{t-1}<\delta)\le \PP(\tilde L_{t-1}<\delta)\le  \gamma_1 \exp(-\gamma_2 t)
  \quad\text{for all } t \ge 1.
\]
where $\tilde L_{t-1}$ is the auxiliary random defined in the proof of Lemma~\ref{lem:wrong-side-finite} and the inequality has been proved in Lemma~\ref{lem:wrong-side-finite}. This inequality further implies that under proper choice of $c_1$ in $\bar n = c_1\cdot \log (\EE_A[\tau])$, we have 
\begin{align}
\PP_A(N^B\ge \bar n)\le  \gamma_1 \exp(-\gamma_2\cdot  \bar n)= O\left( \frac{1}{(\EE_A[\tau])^2}\right) = O\left(\frac{1}{(m_\alpha^A)^2}\right),\label{bar_n}
\end{align}
where the inequality holds since if $N^B\ge \bar n$, then the last time when $j_B^*$ is queried before $\tau$ must be greater than $\bar n$. \\

We now bound $\PP_A\left(\tau>\EE_A[\tau]+\Delta, N^B\le \bar n\right)$, according to Lemma \ref{lem:overshoot}, we have
\[
A_\alpha-C_\ell \le L_\tau \le A_\alpha+C_\ell
\] 
On the event $\{\tau>t, \theta=A\}$ we must have 
\[
L_t=\sum_{s=1}^t I_{j_s}^A+M_t^{(A)}=I_{j_A^*}^A \cdot N^A+ I_{j_B^*}^A\cdot N^B +M_t^{(A)}   <A_\alpha
\] 
So if we let $t:=\lfloor \EE_A[\tau]+\Delta\rfloor-\bar n$ for some constant $\bar n\in \mathbf{Z}_+$, 
\[
\{\tau>\EE_A[\tau]+\Delta, N^B\le \bar n\}
 \subseteq
\left\{
M_{t}^{(A)}\le A_\alpha- t\cdot I^A_{j_A^*}
\right\}.
\]
Since $\tau=N^A+N^B$ and $\EE_A[N^B]=O(1)$, when $\Delta= k\cdot \sqrt{\EE_A[\tau]\cdot \log \EE_A[\tau] }$ for some $k>0$, we have
\[
tI_{j_A^*}^A \ge  (\EE_A[\tau]+\Delta-\bar n-1)\cdot I_{j_A^*}^A  \ge \EE_A\left[  \sum_{s=1}^\tau I_{j_s} \right]+(\Delta-\bar n-1)\cdot I_{j_A^*}^A  +\EE_A[N^B]\cdot (I_{j_A^*}^A-I_{j_B^*}^A)=A_\alpha +c_2\cdot \Delta +K_2
\]
for some constant $c_2, K_2>0$ that are independent of $\alpha$. Therefore, under such choice of $\Delta$, we have
\[
tI_{j_A^*}^A  - A_\alpha\ge   c_3 \cdot \Delta \mbox{ for some } c_3>0.
\]
Applying Lemma~\ref{lem:Freedman-LLR}, we have
\[
\PP_A\left(\tau>\EE_A[\tau]+\Delta, N^B\le \bar n\right)  
\le \PP_A\left(M_{t}^{(A)}\le A_\alpha- t\cdot I^A_{j_A^*} \right)\le 
2\exp\!\left(
-\frac{c_3^2 \Delta^2 }
{2\bigl(v_\ell^2 t+2C_\ell \cdot  c_3  \Delta\bigr)}
\right)
\]
Since $\EE_A[\tau]+\Delta -\bar n-1\le t\le   \EE_A[\tau]+\Delta -\bar n$ and $\EE_A[\tau]=O(m_\alpha^A)=O(\log \frac{1}{\alpha})$, there must exist some $k>0$ such that when $\Delta= k\cdot \sqrt{\EE_A[\tau]\cdot \log \EE_A[\tau] } $, we have 
\[
\frac{c_3^2 \Delta^2 }
{2\bigl(v_\ell^2 t+2C_\ell \cdot  c_3  \Delta\bigr)}\ge 2\log \EE_A[\tau]
\]
Hence 
\begin{align}
 \PP_A\left(\tau>\EE_A[\tau]+\Delta, N^B\le \bar n\right) \le \frac{1}{(\EE_A[\tau])^2}=O\left(\frac{1}{(m_\alpha^A)^2}
\right)   \label{bar_n_2}
\end{align}
Finally, by \eqref{combine}, \eqref{bar_n} and \eqref{bar_n_2}, we have 
\[  \PP_A(\tau-E_A[\tau] >\Delta) = O\left(\frac{1}{(m_\alpha^A)^2} \right).\]

\vspace{0.05in}
 We now bound the other side $ \PP_A\left(\tau<\EE_A[\tau]-\Delta\right)$. On the event $\{\tau\le t, \theta=A\}$ for some $t>0$, we have
\[
 I_{j_A^*}^A \cdot t +M_{\tau \wedge t}^{(A)}  >L_\tau=\sum_{s=1}^t I_{j_s}^A+M_\tau^{(A)} >A_\alpha
\] 
So if we let $t:=\lceil \EE_A[\tau]-\Delta\rceil$, this implies.
\[
M_{\tau \wedge t}^{(A)}> A_\alpha- t\cdot I^A_{j_A^*}
\]
Since $\tau=N^A+N^B$ and $\EE_A[N^B]=O(1)$, when $\Delta= k\cdot \sqrt{\EE_A[\tau]\cdot \log \EE_A[\tau] }$ for some $k>0$, we have
\[
tI_{j_A^*}^A \le   (\EE_A[\tau]-\Delta+1)\cdot I_{j_A^*}^A  \le \EE_A\left[  \sum_{s=1}^\tau I_{j_s} \right]-\Delta\cdot I_{j_A^*}^A  +\EE_A[N^B]\cdot |I_{j_A^*}^A-I_{j_B^*}^A|=A_\alpha -c_3\cdot \Delta +K_3
\]
for some constant $c_3, K_3>0$ that are independent of $\alpha$. Therefore, under such choice of $\Delta$, we have
\[
  A_\alpha-tI_{j_A^*}^A \ge   c_4 \cdot \Delta \mbox{ for some } c_4>0.
\]
Applying Lemma~\ref{lem:Freedman-LLR}, we have
\[
\PP_A\left(\tau<\EE_A[\tau]-\Delta\right)  
\le \PP_A\left(M_{\tau \wedge t}^{(A)}> c_4 \cdot \Delta \right)\le 
2\exp\!\left(
-\frac{c_4^2 \Delta^2 }
{2\bigl(v_\ell^2 t+2C_\ell \cdot  c_4  \Delta\bigr)}
\right)
\]
Since $\EE_A[\tau]-\Delta-1 \le t\le   \EE_A[\tau]-\Delta $ and $\EE_A[\tau]=O(m_\alpha^A)=O(\frac{1}{\alpha})$, there must exist some $k>0$ such that when $\Delta= k\cdot \sqrt{\EE_A[\tau]\cdot \log \EE_A[\tau] } $, we have 
\[
\frac{c_4^2 \Delta^2 }
{2\bigl(v_\ell^2 t+2C_\ell \cdot  c_4  \Delta\bigr)}\ge 2\log \EE_A[\tau]
\]
Therefore, 
\begin{align}
  \PP_A\left(\tau<\EE_A[\tau]-\Delta\right)   = O\left(\frac{1}{(m_\alpha^A)^2} \right) \label{side2}  
\end{align}
Combining with \eqref{bar_n_2}, we have 
\begin{align}
      \PP_A\left(|\tau-\EE_A[\tau]|\le\Delta\right) = O\left(\frac{1}{(m_\alpha^A)^2} \right).
\end{align}
\hfill \qed

\vspace{0.1in}

\subsection{Proof of Claim 1 in Appendix~\ref{sec:ub_proof}}\label{proof:statement}
Write
\[
\mu_A:=\mu_{j_A^\star},
\qquad
\mu_B:=\mu_{j_B^\star},
\qquad
N_A:=\sum_{t=1}^{\tau}\mathbf 1\{j_t=j_A^\star\},
\qquad
N_B:=\sum_{t=1}^{\tau}\mathbf 1\{j_t=j_B^\star\}.
\]
Since $N_A+N_B=\tau$, we have
\[
S_\pi=\mu_A N_A+\mu_B N_B
=\mu_A\tau+(\mu_B-\mu_A)N_B.
\]
Hence
\[
m_\alpha^A
=\mu_A\EE_A[\tau]+(\mu_B-\mu_A)\EE_A[N_B],
\]
and therefore
\[
S_\pi-m_\alpha^A
=
\mu_A(\tau-\EE_A[\tau])
+
(\mu_B-\mu_A)(N_B-\EE_A[N_B]).
\]
Using $(x+y)^2\le 2x^2+2y^2$, we obtain
\begin{equation}
\label{eq:SA-short-1}
\EE_A[(S_\pi-m_\alpha^A)^2]
\le
2\mu_A^2\Var_A(\tau)
+
2(\mu_B-\mu_A)^2\EE_A[(N_B-\EE_A[N_B])^2].
\end{equation}
By Lemma~\ref{lem:wrong-side-finite}, we have $\EE_A[N_B^2]\le C_B$, implying that the second term is $O(1)$. It remains to bound $\Var_A(\tau)$. Let
\[
I_A:=I_{j_A^\star,A},
\qquad
I_B:=I_{j_B^\star,A}.
\]
Under $\theta=A$, the drift-martingale decomposition gives
\[
L_\tau = I_A N_A + I_B N_B + M_\tau^{(A)}
      = I_A\tau + (I_B-I_A)N_B + M_\tau^{(A)},
\]
so
\[
\tau-\EE_A[\tau]
=
\frac{L_\tau-\EE_A[L_\tau]}{I_A}
-
\frac{I_B-I_A}{I_A}(N_B-\EE_A[N_B])
-
\frac{M_\tau^{(A)}}{I_A}.
\]
Hence, by $(x+y+z)^2\le 3x^2+3y^2+3z^2$,
\[
\Var_A(\tau)
\le
\frac{3}{I_A^2}\EE_A[(L_\tau-\EE_A[L_\tau])^2]
+
\frac{3(I_B-I_A)^2}{I_A^2}\EE_A[(N_B-\EE_A[N_B])^2]
+
\frac{3}{I_A^2}\EE_A[(M_\tau^{(A)})^2].
\]
Now:
\begin{itemize}
 \item We first show that
\[
\EE_A[(L_\tau-\EE_A[L_\tau])^2]=O(1).
\]

By Lemma~\ref{lem:ELtau-vs-boundaries}, we have
\[
\EE_A[L_\tau]=A_\alpha+O(1).
\]
Therefore,
\[
\EE_A[(L_\tau-\EE_A[L_\tau])^2]
\le
2\EE_A[(L_\tau-A_\alpha)^2]
+
2(\EE_A[L_\tau]-A_\alpha)^2.
\]
Since $\EE_A[L_\tau]-A_\alpha=O(1)$ by Lemma~\ref{lem:ELtau-vs-boundaries}, it remains to show that
\[
\EE_A[(L_\tau-A_\alpha)^2]=O(1).
\]

We decompose according to the terminal decision:
\[
\EE_A[(L_\tau-A_\alpha)^2]
=
\EE_A[(L_\tau-A_\alpha)^2\mathbf 1_{\{D=A\}}]
+
\EE_A[(L_\tau-A_\alpha)^2\mathbf 1_{\{D=B\}}].
\]

On the event $\{D=A\}$, Lemma~\ref{lem:overshoot} implies
\[
A_\alpha \le L_\tau \le A_\alpha + C_\ell,
\]
hence
\[
(L_\tau-A_\alpha)^2\mathbf 1_{\{D=A\}}
\le
C_\ell^2.
\]
Therefore,
\[
\EE_A[(L_\tau-A_\alpha)^2\mathbf 1_{\{D=A\}}]
\le
C_\ell^2.
\]

On the event $\{D=B\}$, Lemma~\ref{lem:overshoot} gives
\[
-B_\alpha-C_\ell \le L_\tau \le -B_\alpha,
\]
so
\[
|L_\tau-A_\alpha|
\le
A_\alpha+B_\alpha+C_\ell.
\]
Thus
\[
\EE_A[(L_\tau-A_\alpha)^2\mathbf 1_{\{D=B\}}]
\le
(A_\alpha+B_\alpha+C_\ell)^2\,\PP_A(D=B).
\]
By Lemma~\ref{lem:error-bounds},
\[
\PP_A(D=B)\le e^{-B_\alpha}=O(\alpha).
\]
Moreover, since $A_\alpha=\log(1/\alpha)-\delta+O(\alpha)$ and
$B_\alpha=\log(1/\alpha)+\delta+O(\alpha)$, we have
\[
A_\alpha+B_\alpha=2\log(1/\alpha)+O(1)=O(\log(1/\alpha)).
\]
Hence
\[
(A_\alpha+B_\alpha+C_\ell)^2\,\PP_A(D=B)
=
O\!\bigl(\log^2(1/\alpha)\bigr)\cdot O(\alpha)
=
O\!\bigl(\alpha\log^2(1/\alpha)\bigr)
=
o(1).
\]

Combining the two parts, we obtain
\[
\EE_A[(L_\tau-A_\alpha)^2]\le C_\ell^2 + o(1)=O(1).
\]
Therefore,
\[
\EE_A[(L_\tau-\EE_A[L_\tau])^2]
\le
2\,O(1)+2\,O(1)=O(1).
\]
\item We bound the second moment of $N_B$ using the comparison-walk quantity
\[
N_{\mathrm{neg}}:=\sum_{t\ge1}\mathbf 1\{\widetilde L_t<-\delta\}.
\]
By the proof of Lemma~\ref{lem:wrong-side-finite}, under $\theta=A$ we have
\[
N_B\le N_{\mathrm{neg}}+1
\qquad\text{a.s.}
\]
and there exist constants $c,\gamma>0$ such that
\[
\PP_A(\widetilde L_t<-\delta)\le c e^{-\gamma t},\qquad t\ge1.
\]
Therefore,
\[
\EE_A[N_{\mathrm{neg}}^2]
=
\sum_{t\ge1}\PP_A(\widetilde L_t<-\delta)
+
2\sum_{1\le s<t}\PP_A(\widetilde L_s<-\delta,\widetilde L_t<-\delta)
\]
\[
\le
\sum_{t\ge1}\PP_A(\widetilde L_t<-\delta)
+
2\sum_{t\ge2}(t-1)\PP_A(\widetilde L_t<-\delta)
\le
\sum_{t\ge1}c e^{-\gamma t}
+
2\sum_{t\ge2}(t-1)c e^{-\gamma t}
<\infty.
\]
Hence there exists a constant $C_B<\infty$, independent of $\alpha$, such that
\[
\EE_A[N_B^2]\le \EE_A[(N_{\mathrm{neg}}+1)^2]\le C_B.
\]
It follows that
\[
\EE_A[(N_B-\EE_A[N_B])^2]=\Var_A(N_B)\le \EE_A[N_B^2]\le C_B=O(1).
\]
\item Since $(M_t^{(A)})_{t\ge 0}$ is a square-integrable martingale, the process
\[
(M_t^{(A)})^2 - V_t^{(A)}
\]
is also a martingale. Hence for each $n\ge 1$, optional stopping at $\tau\wedge n$ gives
\[
\EE_A\!\left[(M_{\tau\wedge n}^{(A)})^2 - V_{\tau\wedge n}^{(A)}\right]=0,
\]
so
\[
\EE_A[(M_{\tau\wedge n}^{(A)})^2]
=
\EE_A[V_{\tau\wedge n}^{(A)}].
\]
By Lemma~\ref{lem:W-second-moment},
\[
V_{\tau\wedge n}^{(A)}\le v_\ell^2(\tau\wedge n)
\quad\text{a.s.},
\]
therefore
\[
\EE_A[(M_{\tau\wedge n}^{(A)})^2]
\le
v_\ell^2\,\EE_A[\tau\wedge n].
\]
Letting $n\to\infty$, Fatou's lemma on the left-hand side and monotone convergence on the right-hand side yield
\[
\EE_A[(M_\tau^{(A)})^2]
\le
v_\ell^2\,\EE_A[\tau].
\]
 
\end{itemize}

Therefore
\[
\Var_A(\tau)\le C_1 + C_2 \EE_A[\tau]\le C_3 \EE_A[\tau],
\]
since $\EE_A[\tau]\ge 1$. Returning to \eqref{eq:SA-short-1}, we conclude that
\[
\EE_A[(S_\pi-m_\alpha^A)^2]\le C_A \EE_A[\tau]
\]
for some constant $C_A<\infty$ independent of $\alpha$. Finally, write
\[
G_\alpha^A=\{\theta=A\}\cap \widehat G_\alpha^A.
\]
Because conditioning on $G_\alpha^A$ forces $\theta=A$,
\[
\EE[(S_\pi-m_\alpha^A)^2\mid G_\alpha^A]
=
\EE_A[(S_\pi-m_\alpha^A)^2\mid \widehat G_\alpha^A]
\le
\frac{\EE_A[(S_\pi-m_\alpha^A)^2]}{\PP_A(\widehat G_\alpha^A)}.
\]
Using the unconditional bound above,
\[
\EE[(S_\pi-m_\alpha^A)^2\mid G_\alpha^A]
\le
\frac{C_A\,\EE_A[\tau]}{\PP_A(\widehat G_\alpha^A)}.
\]
Since
\[
\PP(G_\alpha^A)=\xi_A\PP_A(\widehat G_\alpha^A)
\qquad\text{and}\qquad
\EE[\tau]\ge \xi_A\EE_A[\tau],
\]
we obtain
\[
\EE[(S_\pi-m_\alpha^A)^2\mid G_\alpha^A]
\le
\frac{C\,\EE[\tau]}{\PP(G_\alpha^A)}.
\]
This proves the claim.\hfill \qed
\vspace{0.1in}

\setcounter{equation}{0}
\section{Proof of Theorem~\ref{thm:two-llm-upper}}\label{sec:ub_proof}
In this section, we first establish the upper bound in \eqref{eq:two-llm-additive-rate}, and then prove the lower bound in \eqref{eq:rate-tightness}.

First, by Proposition~\ref{lem:EC-EW-asym}, we have 
\begin{align*}
  \EE[C_\pi]
  = \xi_A S_A(\alpha)\,\kappa_{j_A^\star,A}
     + \xi_B S_B(\alpha)\,\kappa_{j_B^\star,B}
     + O(1), \quad 
  \EE[W_\pi] 
  = \xi_A S_A(\alpha)\,\eta_{j_A^\star,A}
     + \xi_B S_B(\alpha)\,\eta_{j_B^\star,B}
     + O(1). 
\end{align*}
Define $\Lambda_\alpha:=\xi_A S_A(\alpha)\,\eta_{j_A^\star,A}
     + \xi_B S_B(\alpha)\,\eta_{j_B^\star,B}$. We have $\EE[W_\pi] = \Lambda_\alpha+O(1) = \Theta\left(\log(1/\alpha)\right)$.
In the following, it suffices to prove that $\EE[g(W_\pi)]-g(\EE[W_\pi])=O\!\big((\log(1/\alpha))^{\rho-1}\big)$. 

 Note that
\begin{align}
    \EE[g(W_\pi)] =   \xi_A\cdot \EE_A[g(W_\pi)] +  \xi_B\cdot \EE_B[g(W_\pi)] 
\end{align}
We let $m_\alpha^A=\EE_A[W_\pi],m_\alpha^B=\EE_B[W_\pi]$. Then
\begin{align}
   m_\alpha=   \xi_A\cdot m_\alpha^A +  \xi_B\cdot m_\alpha^B\notag
\end{align}
Since $g$ function is polynomial, we have 
\begin{align}\label{eq:target}
Z_{\alpha}^\rho=\frac{g(W_\pi)}{g(m^A_\alpha)}, \qquad  \hat Z_{\alpha}^\rho=\frac{g(W_\pi)}{g(m^B_\alpha)}
\end{align}
where $Z_{\alpha}=W_\pi/m_\alpha^A, \hat Z_{\alpha}=W_\pi/m_\alpha^B$. 
Observe that for any $q>0$, the map $z\mapsto z^q$ is $C^2$, by Taylor's theorem with Lagrange remainder around $z=1$, there exists $\xi$ between $1$ and $z$ such that
\begin{align}
z^q
=
f(1)+f'(1)(z-1)
+
\frac12 f''(\xi)(z-1)^2=1+q(z-1)
+
\frac12 f''(\xi)(z-1)^2  \label{Q_taylor}
\end{align}
where $\xi\le \max\{z, 1\}$. Therefore, we have 
\begin{align}\label{sqr}
\EE_A[Z_{\alpha}^q-1]
=   \EE_A\left[q(Z_{\alpha}-1)+
 \frac{1}{2}\cdot  f''(\xi_{Z_{\alpha}})\cdot (Z_{\alpha}-1)^2\right]= \EE_A\left[\frac{1}{2}\cdot  f''(\xi_{Z_{\alpha}})\cdot (Z_{\alpha}-1)^2\right]\\
 \EE_B[\hat Z_{\alpha}^q-1]
=   \EE_B\left[q(\hat Z_{\alpha}-1)+
 \frac{1}{2}\cdot  f''(\xi_{\hat Z_{\alpha}})\cdot (\hat Z_{\alpha}-1)^2\right]= \EE_B\left[\frac{1}{2}\cdot  f''(\xi_{\hat Z_{\alpha}})\cdot (\hat Z_{\alpha}-1)^2\right]
\end{align}
Here we use $\xi_{Z_{\alpha}}$ ($\xi_{\hat Z_{\alpha}}$) since its value depends on the realized $Z_{\alpha}(\hat Z_{\alpha})$. 

\textbf{Define Events}
 To facilitate our analysis, we define  ``typical'' event and  ``tail'' events. 
 \[
G_\alpha^A:=\{ |Z_\alpha -1|\le \varepsilon, \theta=A\}, \ \ G_\alpha^B:= \{|\hat Z_\alpha^A -1|\le \hat\varepsilon,  \theta=B\}
\]
and 
 \[
\bar G_\alpha^A:=\{ |Z_\alpha -1|> \varepsilon, \theta=A\}, \ \ \bar G_\alpha^B:= \{|\hat Z_\alpha -1|> \hat \varepsilon,  \theta=B\}
\]
We let $\varepsilon := \sqrt{1/m_\alpha^A}\cdot \log m_\alpha^A$ and $\hat\varepsilon := \sqrt{1/m_\alpha^B}\cdot \log m_\alpha^B$. Then $G_\alpha^A, G_\alpha^B$ can be alternatively written as
 \[
G_\alpha^A=\{ |W_\pi -m_\alpha^A|\le \sqrt{m_\alpha^A}\log m_\alpha^A, \theta=A\}, \ \ G_\alpha^B= \{ |W_\pi -m_\alpha^B| \le \sqrt{m_\alpha^B}\log m_\alpha^B, \theta=B\}
\]
Thus, we can write 
\begin{align}
\EE_A[Z_\alpha^q -1]= \EE_A[(Z_\alpha^q -1)\cdot \mathbf{1}_{\{ G_\alpha^A\}}]+\EE_A[(Z_\alpha^q -1)\cdot \mathbf{1}_{\{ \bar G_\alpha^A\}}]\\
\EE_B[\hat Z_\alpha^q-1]= \EE_B[(\hat Z_\alpha^q -1)\cdot \mathbf{1}_{\{ G_\alpha^B\}}]+\EE_B[(\hat Z_\alpha^q -1)\cdot \mathbf{1}_{\{ \bar G_\alpha^B\}}]
\end{align}
And our goal is to show 
\begin{align*}
\EE_A[Z_\alpha^q -1] = O\left(\frac{1}{m_\alpha}\right), \qquad  \EE_B[Z_\alpha^q -1] = O\left(\frac{1}{m_\alpha}\right).
\end{align*}
Once they hold, by \eqref{eq:target}, we have 
\begin{align}
    \frac{  \EE[g(W_\pi)]}{\xi_A\cdot \EE_A[g(m_\alpha^A)] +  \xi_B\cdot \EE_B[g(m_\alpha^B)]} = \frac{\xi_A\cdot \EE_A[g(W_\pi)] +  \xi_B\cdot \EE_B[g(W_\pi)]}{\xi_A\cdot \EE_A[g(m_\alpha^A)] +  \xi_B\cdot \EE_B[g(m_\alpha^B)]} = O\left(\frac{1}{m_\alpha}\right).
\end{align}

\medskip
\noindent
\textbf{Typical Event $G_\alpha^A$ or $G_\alpha^B$.}
Note that  
\[
W_\pi=\sum_{t=1}^{\tau}Y_t, \quad S_\pi=\sum_{t=1}^{\tau}\mu_{J_t}\quad \mbox{with}\quad \mu_j=\EE[Y_t\mid J_t=j],
\]
\noindent 
and 
\[
(W_\pi-m_\alpha^A)^2
\le 
2\cdot (W_\pi-S_\pi)^2+2\cdot (S_\pi-m_\alpha^A)^2.
\] 
We now bound $\EE[(W_\pi-S_\pi)^2 \ | \ {G^A_\alpha} ]$ and $\EE[(S_\pi-m_\alpha)^2\ | \ {G^B_\alpha} ]$ follow the same logics. 
\begin{itemize}
    \item For the first term, we have 
    \[
\EE[(W_\pi-S_\pi)^2 \mid G_\alpha^A]
=
\frac{\EE[(W_\pi-S_\pi)^2 \mathbf 1_{G_\alpha^A}]}{\PP(G_\alpha^A)}
\le
\frac{\EE[(W_\pi-S_\pi)^2]}{\PP(G_\alpha^A)}
\le
\frac{\psi^2 \EE[\tau]}{\PP(G_\alpha^A)}.
\]
    
    where the last inequality holds by Lemma~\ref{lem:W-second-moment}.
    \item For the second term, note that on ${G^A _\alpha}$,
    We state a claim (whose proof is provided in Appendix~\ref{proof:statement}).

    \underline{Claim 1}
    There exists a constant $C<\infty$, independent of $\alpha$, such that for all sufficiently small $\alpha$,
\[
\EE\!\left[(S_\pi-m_\alpha^A)^2 \mid G_\alpha^A\right]
\le
\frac{C\,\EE[\tau]}{\PP(G_\alpha^A)}.
\]
    
\end{itemize}
 
\noindent
Therefore, when $\varepsilon \le \frac{1}{2}$, which holds when $\alpha$ is sufficiently small since $\varepsilon =\frac{ \log m_\alpha^A}{\sqrt{m_\alpha^A}}$ and $m_\alpha=O(\log 1/\alpha)$,  we have 
\begin{align}
 \EE[(Z_\alpha^q -1)\cdot \mathbf{1}_{\{ G_\alpha^A\}}] \le     \EE \left[ \frac{ f''(\xi_{Z_\alpha})}{2}\cdot (Z_\alpha-1)^2 \cdot \mathbf{1}_{\{G^A_\alpha\}}  \right]  \le  \frac{  \bar C\cdot \EE[\tau ] }{(m_\alpha^A)^2}.\label{typicalA}\tag{TE-A}
\end{align}
where the last inequality holds since $f''(\xi_{Z_\alpha})$ is bounded by a constant on $G^A_\alpha$ for any $q>0$.  
Similarly, we have 
\begin{align*}
 & \EE[(\hat Z_\alpha^q -1)\cdot \mathbf{1}_{\{ G_\alpha^B\}}] \le     \EE \left[ \frac{ f''(\xi_{Z_\alpha})}{2}\cdot (\hat Z_\alpha-1)^2 \cdot \mathbf{1}_{\{G^B_\alpha\}}  \right]  \le  \frac{  \bar C\cdot \EE[\tau ] }{(m_\alpha^B)^2}.\label{typicalB}\tag{TE-B}
\end{align*}

\medskip
\noindent 
\textbf{Tail Event $\bar G^A_\alpha=\{|Z_\alpha-1|> \varepsilon, \theta=A \}$ or  $\bar G^B_\alpha=\{ |\hat Z_\alpha-1|> \hat\varepsilon, \theta=B\}.$ } 
Note that on event $\bar G^A_\alpha$, we have 
\begin{align*}
 f''(\xi_{Z_\alpha})\cdot (Z_\alpha-1)^2 = \frac{q(q-1)  \xi_{Z_\alpha}^{\rho-2}}{2}\cdot (Z_\alpha-1)^2\le \frac{q(q-1)   }{2}\cdot\max\{ Z_\alpha^\rho,  Z_\alpha^{2}\}
 \le q(q-1)\cdot (Z_\alpha^\rho + Z_\alpha^{2} )
\end{align*}
Note that, for any $q$,  we have 
\begin{align}
\EE[Z_\alpha^q \mathbf 1_{\{\bar G_\alpha^A\}}] \le \EE[Z_\alpha^q]^{1/2} \cdot [\PP(\bar G_\alpha^A)]^{1/2}\label{81}
\end{align}
We now show 
\begin{align}
\label{prob}
   \EE[Z_\alpha^q]=O(1), \ \  \PP(\bar G_\alpha^A)=O\left( \frac{1}{(m^A_\alpha)^2}\right)   
\end{align}

\noindent 
\underline{\bf Bound $\EE[Z_\alpha^q]$}.
Note that 
\[
Z_\alpha=\frac{W_\pi}{m_\alpha^A}\le \frac{S_\pi}{m_\alpha^A}+\frac{|W_\pi-S_\pi|}{m_\alpha^A}.
\]

And for $q\ge 1$, since $\left(\frac{u+v}{2}\right)^q\le \frac{u^q+v^q}{2}$ always holds, we have 
\begin{equation}
\label{eq:Z-split}
\EE[Z_\alpha^q ]
\le 2^{q-1}\left(
\EE\big[\Big(\frac{S_\pi}{m_\alpha}\Big)^q \Big]
+\EE\big[\Big(\frac{|W_\pi-S_\pi|}{m_\alpha}\Big)^q \right).
\end{equation}

\noindent
\emph{(i) Bounding $S_\pi/m_\alpha$.}
Since the set of LLMs is finite and $\mu_j<\infty$, let $\mu_{\max}:=\max_j\mu_j$.
Then $S_\pi\le \mu_{\max}\tau$, hence
\[
\EE\big[\Big(\frac{S_\pi}{m_\alpha^A}\Big)^q \Big]
\le  \mu_{max}^q\cdot  \EE\big[\Big(\frac{\tau}{m_\alpha^A}\Big)^q  \Big].
\]
According to Freeman Inequality, there exists some constants $K_0,c>0$, such that for all sufficiently small $\alpha$,
\begin{equation}
\label{eq:tau-exp-tail-K}
\PP(\tau > K m_\alpha)\le e^{-c K m_\alpha}
\qquad\text{for all }K\ge K_0.
\end{equation}
Thus, 
\begin{align}
\EE\!\left[\left(\frac{\tau}{m_\alpha^A}\right)^q\right]
&= q\int_{0}^{\infty} u^{q-1}\,\PP\!\left(\frac{\tau}{m_\alpha}>u\right)\,du \nonumber\\
&\le q\int_{0}^{K_0} u^{q-1}\,du
\;+\; q\int_{K_0}^{\infty} u^{q-1}\,\PP(\tau>u m_\alpha)\,du \nonumber\\
&\le K_0^{q}
\;+\; q\int_{K_0}^{\infty} u^{q-1}\,e^{-c u m_\alpha}\,du.
\label{eq:tau-moment-split}
\end{align}
For the second term, we let $v=c u m_\alpha$, then 
$u=v/(c m_\alpha)$ and $du=dv/(c m_\alpha)$, which implies
\begin{align}
\int_{K_0}^{\infty} u^{q-1}e^{-c u m_\alpha}\,du
&=\frac{1}{(c m_\alpha)^q}\int_{cK_0 m_\alpha}^{\infty} v^{q-1}e^{-v}\,dv \nonumber\\
&\le \frac{1}{(c m_\alpha)^q}\int_{0}^{\infty} v^{q-1}e^{-v}\,dv
= \frac{\Gamma(q)}{(c m_\alpha)^q}.
\label{eq:tau-moment-gamma}
\end{align}
Combining \eqref{eq:tau-moment-split}--\eqref{eq:tau-moment-gamma} yields
\[
\EE\!\left[\left(\frac{\tau}{m_\alpha}\right)^q\right]
\le K_0^q + \frac{q\,\Gamma(q)}{(c m_\alpha)^q}
\le K_0^q + \frac{q\,\Gamma(q)}{c^q}
=:C_q,
\]
where $C_q$ is a constant independent of $\alpha$. 
Consequently, since $S_\pi\le \mu_{\max}\tau$, we have for the same $q$,
\begin{align}
\EE\!\left[\left(\frac{S_\pi}{m_\alpha}\right)^q\right]
\le \mu_{\max}^q\,\EE\!\left[\left(\frac{\tau}{m_\alpha}\right)^q\right]
\le \mu_{\max}^q C_q \label{eq:S-moment}
\end{align}

\medskip
\noindent\emph{(ii) Bounding $(W_\pi-S_\pi)/m_\alpha$.}
By the conditional sub-Gaussian bound for the waiting-time noise,
for each fixed $q$ there exists $M_q<\infty$ such that
\[
\EE\big[|W_\pi-S_\pi|^q\mid \mathcal H_\tau\big]\le M_q\cdot (\psi^2\tau)^{q/2}.
\]
Taking expectations and dividing by $m_\alpha^q$ yields
\[
\EE\big[\Big(\frac{|W_\pi-S_\pi|}{m_\alpha}\Big)^q\Big]
\le \frac{M_q\psi^q}{m_\alpha^q}\,\EE[\tau^{q/2}]
= M_q\cdot \psi^q\,\EE\big[\Big(\frac{\tau}{m_\alpha}\Big)^{q/2}\Big]\cdot m_\alpha^{-q/2}.
\]
By similar analysis as in (i), we can show that for some $\tilde C_q$ that is independent of $\alpha$, we have 
\begin{equation}
\label{eq:WS-moment}
\EE\big[\Big(\frac{|W_\pi-S_\pi|}{m_\alpha}\Big)^q\Big]<\tilde C_q\cdot m_\alpha^{-q/2}.
\end{equation}
\noindent
Combining \eqref{eq:Z-split}, \eqref{eq:S-moment}, and \eqref{eq:WS-moment}, for some constant $\hat C_q$, we have 
\[
\EE[Z_\alpha^q]< \hat C_q
\]

\noindent 
\underline{\bf Bound $\PP(\bar G_\alpha^A)$}.
It remains to show that
\begin{align}
   \PP(\bar G_\alpha^A) = O\left( \frac{1}{(m_\alpha^A)^{2}}\right)  
\end{align}
\medskip 
Recall that 
$
x_\alpha^A:=\varepsilon m_\alpha^A$. By definition,
\begin{align}
\PP(\bar G_\alpha^A)=\xi_A \PP_A\!\left( |W_\pi-m_\alpha^A|>x_\alpha^A\right) \label{bar_G_A}
\end{align}
Recall that
\[
S_\pi:=\sum_{t=1}^\tau \mu_{j_t}.
\]
Then
\[
W_\pi-m_\alpha^A=(W_\pi-S_\pi)+(S_\pi-m_\alpha^A),
\]
so by the union bound,
\begin{equation}
\label{eq:GA-union}
\PP_A\!\left(|W_\pi-m_\alpha^A|>x_\alpha^A\right)
\le
\PP_A\!\left(|W_\pi-S_\pi|>\frac{x_\alpha^A}{2}\right)
+
\PP_A\!\left(|S_\pi-m_\alpha^A|>\frac{x_\alpha^A}{2}\right).
\end{equation}
We bound the two terms on the right-hand side separately.

\medskip

\noindent
\underline{Part 1.} We first consider \(\PP_A(|W_\pi-S_\pi|>x_\alpha^A/2)\). We let $E:=\{ |\tau -\EE_A[\tau]|\le \Delta\}$ where 

\[\Delta= k\cdot \sqrt{\EE_A[\tau]\cdot \log \EE_A[\tau] } \mbox{ for some } k>0\]
Note that
\begin{align}
   \PP_A\!\left(|W_\pi-S_\pi|>\frac{x_\alpha^A}{2}, E \right) \le   \PP_A\!\left(|W_\pi-S_\pi|>\frac{x_\alpha^A}{2}, E \right) +P_A(\bar E)\label{90}
\end{align}
According to Lemma~\ref{lem:W-mgf}, 
\[
\PP_\theta\big(|W_\pi - S_\pi|\ge \frac{x_\alpha^A}{2} \,\big|\, \mathcal{H}_\tau,\theta, E\big)
\;\le\; 2 \exp\!\Big(-\frac{(x_\alpha^A)^2}{4\psi^2 \cdot  \tau}\Big)
\]
Thus
\[
\PP_A\!\left(|W_\pi-S_\pi|>\frac{x_\alpha^A}{2} \Big |  {E} \right)
\le
2\EE \left[ \exp\!\Big(-\frac{(x_\alpha^A)^2}{4\psi^2 \cdot  \tau}\Big) \Big | E \right] \le 2\EE \left[ \exp\!\Big(-\frac{(x_\alpha^A)^2}{4\psi^2 \cdot (\EE_A[\tau] +\Delta])}\Big)  \right],
\]
which further implies that 
\begin{align}
\PP_A\!\left(|W_\pi-S_\pi|>\frac{x_\alpha^A}{2}, E \right) \le 2\EE \left[ \exp\!\Big(-\frac{(x_\alpha^A)^2}{4\psi^2 \cdot (\EE_A[\tau] +\Delta])}\Big)  \right]\cdot \PP_A(E). \label{91}
\end{align}

Then, by Lemma~\ref{convergence}, we have $ \PP_A(\bar E)=\PP_A(|\tau-\EE_A[\tau]| >\Delta) = O\left(\frac{1}{(m_\alpha^A)^2} \label{bar_E}
\right)$ when $\Delta= k\cdot \sqrt{\EE_A[\tau]\cdot \log \EE_A[\tau] }$ for some $k>0$.

Since 
\[
(x_\alpha^A)^2=m_\alpha^A\cdot \log^2 m_\alpha^A, \ \ \EE_A[\tau]=O(m_\alpha^A),\ \ m_\alpha^A= O\left(\log (1/\alpha)\right),
\]
when $\alpha\rightarrow 0$, the RHS of \eqref{91} can be of the order $O\left(\frac{1}{(m_\alpha^A)^2}\right)$. Finally, by \eqref{90}, we conclude that 
\begin{align}
  \PP_A\!\left(|W_\pi-S_\pi|>\frac{x_\alpha^A}{2}\right)= O\left(\frac{1}{(m_\alpha^A)^2}\right).  \label{term1}  
\end{align}

\medskip
\noindent
\underline{Part 2.} Bound for \(\PP_A(S_\pi-m_\alpha^A>x_\alpha^A/2)\). We write
\[
\mu_A:=\mu_{j_A^\star},\qquad \mu_B:=\mu_{j_B^\star},
\qquad
N_A:=\sum_{t=1}^\tau \mathbf 1\{j_t=j_A^\star\},
\qquad
N_B:=\sum_{t=1}^\tau \mathbf 1\{j_t=j_B^\star\}.
\]
Since $N_A+N_B=\tau$,
\[
S_\pi=\mu_A N_A+\mu_B N_B
=\mu_A\tau+(\mu_B-\mu_A)N_B.
\]
Taking $\EE_A$ gives
\[
m_\alpha^A=\mu_A\EE_A[\tau]+(\mu_B-\mu_A)\EE_A[N_B],
\]
and therefore
\[
S_\pi-m_\alpha^A
=
\mu_A(\tau-\EE_A[\tau])
+
(\mu_B-\mu_A)(N_B-\EE_A[N_B]).
\]
Thus,
\begin{align}
\PP_A\!\left(S_\pi-m_\alpha^A>\frac{x_\alpha^A}{2}\right)
&\le
\PP_A\!\left(\mu_A(\tau-\EE_A[\tau])>\frac{x_\alpha^A}{4}\right)
+
\PP_A\!\left(|\mu_B-\mu_A|\,|N_B-\EE_A[N_B]|>\frac{x_\alpha^A}{4}\right).
\label{eq:Spi-split}
\end{align}
\noindent
For the first term, since $\EE_A[N_B]=O(1)$ by Lemma~\ref{lem:wrong-side-finite}, we have
\[
m_\alpha^A=\mu_A\EE_A[\tau]+O(1),
\]
hence $\EE_A[\tau]\le C_\tau m_\alpha^A$ for some constant $C_\tau>0$ and all sufficiently small
$\alpha$. Therefore,
\[
\left\{\mu_A(\tau-\EE_A[\tau])>\frac{x_\alpha^A}{4}\right\}
\subseteq
\left\{\tau>\EE_A[\tau]+\frac{x_\alpha^A}{4\mu_A}\right\}
\]
Since $x_\alpha^A>\Delta$ when $\alpha$ is sufficiently small, we have 
\begin{equation}
\label{eq:tau-dev}
\PP_A\!\left(\mu_A(\tau-\EE_A[\tau])>\frac{x_\alpha^A}{4}\right)
\le  O\left( \frac{1}{(m_\alpha^A)^2} \right)
\end{equation}
For the second term, we use the second-moment bound for the wrong-side count:
there exists a constant $C_B<\infty$, independent of $\alpha$, such that
\[
\EE_A[N_B^2]\le C_B.
\]
Hence $\Var_A(N_B)\le C_B$, and Chebyshev's inequality yields
\begin{equation}
\label{eq:NB-cheb}
\PP_A\!\left(|\mu_B-\mu_A|\,|N_B-\EE_A[N_B]|>\frac{x_\alpha^A}{4}\right)
\le
\frac{16(\mu_B-\mu_A)^2\Var_A(N_B)}{(x_\alpha^A)^2}
\le
\frac{\hat C}{(m_\alpha^A)^2}.
\end{equation}

Combining \eqref{eq:tau-dev}, and \eqref{eq:NB-cheb}, we can bound the RHS of \eqref{eq:Spi-split}:
\begin{equation}
\label{eq:Spi-tail}
\PP_A\!\left(S_\pi-m_\alpha^A>\frac{x_\alpha^A}{2}\right)
\le
O\left(\frac{1}{(m_\alpha^A)^2}\right).
\end{equation}

\medskip

\noindent
\underline{Combination}. Substituting \eqref{term1} and \eqref{eq:Spi-tail} into \eqref{eq:GA-union}, we have 
\[
\PP_A\!\left(|W_\pi-m_\alpha^A|>x_\alpha^A\right)
\le
O\left(\frac{1}{(m_\alpha^A)^2}\right).
\]
Finally, by \eqref{bar_G_A}, we have 
\[
\PP(\bar G_\alpha^A)
=
O\left(\frac{1}{(m_\alpha^A)^2}\right).
\]

According to \eqref{81}, \eqref{prob}, we have 
\begin{align}
\EE[Z_\alpha^q \mathbf 1_{\{\bar G_\alpha^A\}}] =O\left(\frac{1}{m_\alpha^A}\right)
\end{align} 
And by \eqref{typicalA}, we have
\begin{align}
\EE[(Z_\alpha^q -1)\mathbf 1_{\{\bar G_\alpha^A\}}] +\EE[(Z_\alpha^q -1) \mathbf 1_{\{ G_\alpha^A\}}]  = O \left( \frac{1}{m_\alpha^A}\right). 
\end{align} 
Similarly, we can show that 
\begin{align}
\EE[(Z_\alpha^q -1) \mathbf 1_{\{\bar G_\alpha^B\}}] +\EE[(Z_\alpha^q -1) \mathbf 1_{\{ G_\alpha^B\}}]  =O  \left( \frac{1}{m_\alpha^B}\right). 
\end{align}

\medskip

Since $m^A_\alpha, m^B_\alpha$ are both of the order $\log\frac{1}{\alpha}$, we obtain 
\[
\boxed{
\EE[C_\pi]+\EE[g(W_\pi)]
=
\Phi_\alpha 
+
O\!\big((\log(1/\alpha))^{\rho-1}\big).
}
\]

\vspace{0.2in}
Lastly, we prove the lower bound in \eqref{eq:rate-tightness}. The proof proceeds in four steps.

\noindent\textbf{Step 1 (Hindsight relaxation).}
Fix $\alpha\in(0,1/2)$ and consider the posterior-threshold stopping rule $\Pi(\alpha)$ in Definition~\ref{def:admissible}. According to Remark~\ref{rmk:inf_tau}, we only need to consider policies with $\EE_A[\tau]<\infty$ and $\EE_B[\tau]<\infty$. 
Define the \emph{hindsight} (oracle) policy class $\Pi^\circ(\alpha)$ as the set of policies that are allowed to depend on the ground truth $\theta\in\{A,B\}$ in addition to the observable history, i.e.,
\[
j_t=\phi_t^\circ(\theta,\mathcal G_{t-1}),
\]
while using the same stopping time $\tau$.

Intuitively, hindsight policies are strictly more powerful: they know in advance which hypothesis is true and can tailor their LLM choices accordingly (i.e., the posterior needs to satisfy the same stopping criterion), whereas admissible policies must learn this information sequentially.

Formally, every admissible policy is also a hindsight policy that simply ignores $\theta$. Hence
\[
\Pi(\alpha)\subseteq \Pi^\circ(\alpha),
\]
and therefore
\[
\inf_{\pi\in\Pi(\alpha)} \big(\EE[C_\pi]+\EE[g(W_\pi)]\big)
\;\ge\;
\inf_{\pi^\circ\in\Pi^\circ(\alpha)} \big(\EE[C_{\pi^\circ}]+\EE[g(W_{\pi^\circ})]\big).
\]

\medskip
\noindent\textbf{Step 2 (Optimal hindsight policy.}
Let $(j_A^\star, j_B^\star)$ be the optimal two-LLM design indices in Theorem~\ref{thm:two-llm-upper}. 
Under hindsight, the ground truth $\theta$ is known in advance. 
Therefore, when $\theta=A$, it is optimal to always query the LLM with the best efficiency under $A$, namely $j_A^\star$; 
when $\theta=B$, it is optimal to always query $j_B^\star$.  Hence, the optimal hindsight policy reduces to a single-LLM policy $\pi^\dagger$ under each hypothesis:
\[
j_t \equiv 
\begin{cases}
j_A^\star, & \theta = A,\\
j_B^\star, & \theta = B,
\end{cases}
\]
with the same stopping rule. To prove the lemma, it suffices to lower bound the performance of this benchmark policy and show
\begin{equation}
\label{eq:hindsight-gap-target}
\EE[C_{\pi^\dagger}] + \EE[g(W_{\pi^\dagger})]
\;\ge\;
\Phi_\alpha \;+\; \Omega\!\left(\frac{g(\Lambda_\alpha)}{\Lambda_\alpha}\right).
\end{equation}

\medskip
\noindent\textbf{Step 3 (Mean waiting time under $\pi^\dagger$ matches $\Lambda_\alpha$ up to $O(1)$).}
Write $\tau^\dagger$ and $W^\dagger$ for the stopping time and waiting time under $\pi^\dagger$.
Under $\theta=A$, $\pi^\dagger$ always queries $j_A^\star$, hence by the drift identity at stopping
(Proposition~\ref{lem:EC-EW-asym})
and the expected boundary crossing bound (Lemma~\ref{lem:ELtau-vs-boundaries} and Lemma~\ref{lem:overshoot}),
\[
I_{j_A^\star,A}\,\EE_A[\tau^\dagger] \;=\; \EE_A[L_{\tau^\dagger}] \;=\; A_\alpha + O(1),
\qquad\text{so}\qquad
\EE_A[\tau^\dagger] \;=\; \frac{A_\alpha}{I_{j_A^\star,A}} + O(1).
\]
Similarly, under $\theta=B$,
\[
I_{j_B^\star,B}\,\EE_B[\tau^\dagger] \;=\; \EE_B[-L_{\tau^\dagger}] \;=\; B_\alpha + O(1),
\qquad\text{so}\qquad
\EE_B[\tau^\dagger] \;=\; \frac{B_\alpha}{I_{j_B^\star,B}} + O(1).
\]
By the waiting-time decomposition, we have
\[
\EE_A[W^\dagger] = \mu_{j_A^\star}\EE_A[\tau^\dagger],\qquad
\EE_B[W^\dagger] = \mu_{j_B^\star}\EE_B[\tau^\dagger].
\]
Bayes-averaging gives
\[
m_\alpha^\dagger := \EE[W^\dagger]
= \xi_A \mu_{j_A^\star}\EE_A[\tau^\dagger] + \xi_B \mu_{j_B^\star}\EE_B[\tau^\dagger]
= \xi_A A_\alpha \eta_{j_A^\star,A} + \xi_B B_\alpha \eta_{j_B^\star,B} + O(1),
\]
where $\eta_{j,\theta}=\mu_j/I_{j,\theta}$.
Recalling $S_A(\alpha)=A_\alpha-K_\alpha$ and $S_B(\alpha)=B_\alpha-K_\alpha$,
we have
\[
\Lambda_\alpha
= \xi_A S_A(\alpha)\eta_{j_A^\star,A} + \xi_B S_B(\alpha)\eta_{j_B^\star,B}
= \xi_A A_\alpha \eta_{j_A^\star,A} + \xi_B B_\alpha \eta_{j_B^\star,B} - K_\alpha(\xi_A\eta_{j_A^\star,A}+\xi_B\eta_{j_B^\star,B}).
\]
Since $K_\alpha=O(\alpha\log(1/\alpha))$ is uniformly bounded for small $\alpha$, it follows that
\begin{equation}
\label{eq:meanW-Lambda-O1}
m_\alpha^\dagger = \Lambda_\alpha + O(1),
\qquad\text{and}\qquad
\Lambda_\alpha=\Theta(\log(1/\alpha)).
\end{equation}

\medskip
\noindent\textbf{Step 4 (Quantitative Jensen gap via the same typical/tail decomposition as the upper bound proof).}
Let $m:=\EE[W^\dagger]$ and $Z:=W^\dagger/m$.
Then $\EE[Z]=1$ and $\EE[(Z-1)^2]=\Var(W^\dagger)/m^2$.
Using the same ``typical event / tail event'' structure as the proof of the upper bound (only the inequality direction changes),
one obtains the lower bound
\begin{equation}
\label{eq:Jensen-gap-lower}
\EE[g(W^\dagger)] - g(m)
\;\ge\;
c_2\, g(m)\,\EE[(Z-1)^2]
\;=\;
c_2\, g(m)\,\frac{\Var(W^\dagger)}{m^2},
\end{equation}
for some constant $c_2>0$ and all sufficiently small $\alpha$.
(On the typical event $\{|Z-1|\le\varepsilon\}$, regular variation $g(x)=x^\rho L(x)$ and uniform convergence of $L$
give a lower bound of $g(mZ)/g(m)$ by $(1-\varepsilon)Z^\rho$; then a quadratic lower bound for $Z^\rho$ on $[1-\varepsilon,1+\varepsilon]$
yields a contribution proportional to $\EE[(Z-1)^2]$; the tail event is controlled using the same sub-Gaussian concentration and Cauchy--Schwarz
steps as in Theorem~8.1.). 

Note that Under $\theta=A$, conditional on $\tau^\dagger$ we have
$\Var(W^\dagger\mid \tau^\dagger,\theta=A)=\Var(T_{j_A^\star,1})\,\tau^\dagger$.
Taking expectations yields
$\Var_A(W^\dagger)\ge \Var(T_{j_A^\star,1})\,\EE_A[\tau^\dagger]$.
Similarly,
$\Var_B(W^\dagger)\ge \Var(T_{j_B^\star,1})\,\EE_B[\tau^\dagger]$.
Since $\EE_A[\tau^\dagger]=\Theta(A_\alpha)$ and $\EE_B[\tau^\dagger]=\Theta(B_\alpha)$, both are $\Theta(\log(1/\alpha))$,
we obtain
\begin{equation}
\label{eq:VarW-lower}
\Var(W^\dagger) \;\ge\; c_0 \log(1/\alpha)
\qquad\text{and hence}\qquad
\frac{\Var(W^\dagger)}{(\EE[W^\dagger])^2} \;\ge\; \frac{c_1}{\EE[W^\dagger]},
\end{equation}
for some constants $c_0,c_1>0$ and all sufficiently small $\alpha$, using $\EE[W^\dagger]=\Theta(\log(1/\alpha))$ from~\eqref{eq:meanW-Lambda-O1}.

Combining~\eqref{eq:Jensen-gap-lower} with~\eqref{eq:VarW-lower} gives
\[
\EE[g(W^\dagger)] - g(m)
\;\ge\;
c_3\,\frac{g(m)}{m}.
\]
Finally, since $m=\Lambda_\alpha+O(1)$ by~\eqref{eq:meanW-Lambda-O1} and $g\in RV_\rho$, the fixed-shift regular-variation bound used in Step~4 of
the proof of Theorem~8.1 implies $\frac{g(m)}{m}=\Theta\!\big(\frac{g(\Lambda_\alpha)}{\Lambda_\alpha}\big)$.
Therefore
\begin{equation}
\label{eq:hindsight-excess-lower}
\EE[g(W^\dagger)] - g(\Lambda_\alpha)
\;\ge\;
\EE[g(W^\dagger)] - g(m)
\;\ge\;
c_4\,\frac{g(\Lambda_\alpha)}{\Lambda_\alpha} = \Omega\left(\left(\log(1/\alpha)\right)^{\rho-1}\right),
\end{equation}
for some constant $c_4>0$ and all sufficiently small $\alpha$. Combining with Step~1, this completes the proof.

\vspace{0.1in}

\setcounter{equation}{0}
\section{Proof of Theorem~\ref{thm:lower-mixture}}\label{sec:lb_proof}
In this section, we provide the proof for Theorem~\ref{thm:lower-mixture}. 
Let
\begin{align*}
F(x, z)&=\xi_A \sum_{j=1}^M \kappa_{j, A} \cdot x_j + \xi_B \sum_{j=1}^M \kappa_{j, B} \cdot z_j + \xi_A g\left( \sum_{j=1}^M \eta_{j, A}\cdot  x_j\right) + \xi_B g\left(\sum_{j=1}^M \eta_{j, B} \cdot z_j \right)\\
\mathcal{X}_S&=\left\{(x, z)\in\mathbb{R}_+^M\times \mathbb{R}_+^M:\ \sum_{j=1}^M x_j=S_A(\alpha), \; \sum_{j=1}^M z_j=S_B(\alpha)\right\}.
\end{align*}
Since $\mathcal{X}_S$ is compact and $F$ is continuous and convex, a minimizer $(x^\star, z^\star)$ exists.
For $\theta\in \{A, B\}$, write
\[
y_\theta^\star:=\xi_\theta \sum_{j=1}^M \eta_{j, \theta}\cdot  x_j^\star.
\]

\medskip
\noindent
\textbf{KKT conditions rule out multiple actives generically:} The problem is convex with affine constraints, so the KKT conditions are necessary and sufficient.
There exist $\lambda_x\in\mathbb{R}$ and $\lambda_z\in \mathbb{R}$ such that for all $j$,
\begin{equation}\label{eq:KKT}
\xi_A \kappa_{j, A} + \xi_A \eta_{j, A} g'(y_A^\star)
\begin{cases}
=\lambda_x, & \text{if } x_j^\star>0,\\
\ge \lambda_x, & \text{if } x_j^\star=0,
\end{cases}, \qquad \xi_B \kappa_{j, B} + \xi_B \eta_{j, B} g'(y_B^\star)
\begin{cases}
=\lambda_z, & \text{if } z_j^\star>0,\\
\ge \lambda_z, & \text{if } z_j^\star=0.
\end{cases}
\end{equation}

In the following, we prove that there exists an optimal solution with $x^\star=e_{j^\star}$ and $z^\star=e_{k^\star}$. We present the proof for $x^\star=e_{j^\star}$ in the following and the proof for $z^\star=e_{k^\star}$ is similar.

Suppose two distinct indices $j\ne k$ are active ($x_j^\star,x_k^\star>0$). Subtracting the equalities in \eqref{eq:KKT} gives
\begin{equation}\label{eq:key-balance}
\left(\kappa_{j, A} - \kappa_{k, A}\right) + g'(y_A^\star)\left(\eta_{j, A} - \eta_{k, A}\right) = 0.
\end{equation}

\textbf{Case 1: $\eta_{j, A}\ne\eta_{k, A}$.}

Equation \eqref{eq:key-balance} forces
\begin{equation}\label{eq:g-prime-K}
g'(y_A^\star)=K_{jk}:=\frac{\kappa_{k, A}-\kappa_{j, A}}{\eta_{j, A}-\eta_{k, A}}.
\end{equation}

\textbf{Case 2: $\eta_{j, A}=\eta_{k, A}$.}

Then \eqref{eq:key-balance} becomes:
\[
\kappa_{j, A}- \kappa_{k, A} = 0,
\]
which implies $\kappa_{j, A} = \kappa_{k, A}$. In this case, LLMs $j$ and $k$ are equivalent for the optimization: they have the same cost-efficiency $\kappa_{j, A} = \kappa_{k, A}$ and the same time-efficiency $\eta_{j, A} = \eta_{k, A}$. Therefore, for any allocation $(x_j, x_k)$ with $x_j, x_k > 0$, we can transfer all mass to a single LLM (say, set $\tilde{x}_j = x_j + x_k$ and $\tilde{x}_k = 0$) without changing the objective value:
\[
F(\tilde{x}, z) = F(x, z).
\]

Thus, even when two such equivalent LLMs are active, there exists an equally optimal solution that is concentrated.

\medskip
\noindent\emph{Conclusion from KKT:} Unless $g'(y_A^\star)$ equals one of the finitely many values $\{K_{jk}\}_{\eta_{j, A}\ne\eta_{k, A}}$ (in which case two distinct LLMs can be active), at most one coordinate can be active. In the degenerate case where $\kappa_{j, A}=\kappa_{k, A}$ and $\eta_{j, A}=\eta_{k, A}$, multiple LLMs may be active, but we can always find an equally optimal concentrated solution by moving all mass to one LLM.

\medskip
We now show that for large $S_A(\alpha)$, $g'(y_A^\star)$ avoids all $K_{jk}$ (except possibly in knife-edge cases), so generically there is at most one active coordinate. Note that 
\[
y_A^\star = \xi_A \sum_{j=1}^M \eta_{j, A}\cdot  x_j^\star \cdot z_j^\star \geq \eta_{\text{min}} \cdot S_A(\alpha), 
\]
where $\eta_{\min} := \max_{j, y} \eta_{j, y} < \infty$. Therefore, as $\alpha \to \infty$, we have $y_A^\star \to \infty$. Recall that by Assumption~\ref{assum:g}, we have $g(x)=c\cdot x^\rho$. Then, we analyze two cases:

\medskip
\textbf{Subcase (a): $\rho > 1$.}

In this case, we have $g(x) = c\cdot x^\rho$ and $g'(x)=c\rho\cdot  x^{\rho-1}$. Then, as $x$ tends to infinity, we have $g'(x)$ tends to infinity, and hence avoid all $K_{jk}$'s. Thus, as $\alpha\to 0$, we have $y_A^*\to \infty$ and hence $g'(y_A^*)$ avoids all $K_{jk}$'s.

\medskip
\textbf{Subcase (b): $\rho=1$.} Since $g(x)=c\cdot x$, we have $g'(x)=c$ for any $x$. If $\min_{\eta_j\neq \eta_k}|c-K_{jk}|>0$, then $g'(x)$ avoids all $K_{jk}$.

\vspace{1mm}
Suppose $c=K_{jk}$ for some pair with $\eta_{j, A}\neq\eta_{k, A}$.
In this case, the KKT conditions allow both coordinates $j$ and $k$ to be active,
so we must check whether an interior split can actually be optimal. 

Fix such a pair $(j,k)$ and hold all other coordinates constant.
Let $s:=x_j+x_k>0$ and define the reduced one-dimensional objective
\[
\phi_s(\theta)
:= \xi_A \kappa_{j, A} \theta s+ \xi_A \kappa_{k, A} (1-\theta)s
   + c\cdot \Big(u+\xi_A \eta_{j, A} \theta s+ \xi_A \eta_{k, A}(1-\theta)s\Big),
\qquad \theta\in[0,1],
\]
where $u:=\sum_{\ell\notin\{j,k\}} x_\ell/\eta_\ell$. Since $\phi_s(\theta)$ is linear in $s$, the minimum of $\phi_s$ lies at an endpoint
and the allocation can be concentrated on a single coordinate.

The trivial linear tie $\kappa_{j, A}=\kappa_{k, A}$ and $\eta_{j, A}=\eta_{k, A}$ gives
$\phi_s'(\theta)\equiv0$, so every $\theta$ is optimal;
choosing an endpoint again yields a concentrated minimizer.

\medskip
\noindent\textbf{Conclusion:}
Combining Subcases (a) and (b), for all sufficiently large $S_A(\alpha)$ every minimizer has at most
one positive coordinate; in the tie $\eta_{j, A}=\eta_{k, A},\eta_{j, A}=\eta_{k, A}$ an endpoint solution
is equally optimal. Since $\sum_j x_j=S_A(\alpha)>0$, exactly one coordinate equals $S_A(\alpha)$; i.e.,
every minimizer is concentrated. \hfill \qed

\end{appendices}

		
		


		
	\end{document}